\newcommand\DoToC{%
 \vskip1cm
 \startcontents[sections]
  \printcontents[sections]{l}{1}{\textbf{Contents}\vskip3pt\hrule\vskip5pt\setcounter{tocdepth}{2}}
  \vskip5pt\hrule\vskip5pt
}
\definecolor{CORAL}{HTML}{F26D21}
\definecolor{ALICE}{HTML}{107895}
\definecolor{ACCENT}{HTML}{268BD2}
\definecolor{RUBY}{HTML}{9A2515}
\newcolumntype{L}[1]{>{\raggedright\arraybackslash}p{#1}}
\newcommand{\bw}{{w}}
\newcommand{\bz}{{z}}
\newcommand{\bW}{{W}}
\newcommand{\btau}{\tau}
\newcommand{\brho}{\rho}
\newcommand{\cW}{\mathcal{W}}
\newcommand{\bx}{{x}}
\newcommand{\by}{{y}}
\newcommand{\bomega}{{\omega}}
\newcommand{\cT}{\mathcal{T}}
\newcommand{\cD}{\mathcal{D}}
\setlist{leftmargin=*}
\newcommand{\bpi}{\bm{\pi}}
\newcommand{\bY}{{Y}}
\numberwithin{equation}{section}
\title{Potential weights and implicit causal designs in linear regression}
\author{Jiafeng Chen \\ Department of Economics, Stanford University \\
\href{mailto:jiafeng@stanford.edu}{jiafeng@stanford.edu}}
\begin{document}
\date{\today. I thank Marcella Alsan, Isaiah Andrews, Dmitry Arkhangelsky, Tim Armstrong,
 Kirill Borusyak, Federico Bugni, Ambarish Chattopadhyay, Avi Feller, Matthew Gentzkow,
 Paul Goldsmith-Pinkham, Han Hong, Peter Hull, Guido Imbens, Patrick Kline, Lihua Lei,
 Jimmy Lin, Yiqi Liu, Jos\'e Luis Montiel Olea, Francesca Molinari, Matt Notowidigdo, Chen
 Qiu, Ashesh Rambachan, David Ritzwoller, Brad Ross, Jonathan Roth, Andres Santos, Yucheng
 Shang, Jesse Shapiro, Jann Spiess, Tymon S\l {}ocy\' {n}ski, Elie Tamer, Alex
 Torgovitsky, Winnie van Dijk, Davide Viviano, Anqi Zhao, Jos\'e Zubizarreta, and
 participants in seminars and workshops at Berkeley, the Chinese University of Hong Kong,
 Cornell, Harvard, Stanford, UC Irvine, SEA 2024, and CES 2025 for helpful comments. I
 thank Syed Ashal Ali for research assistance. AI tools (GPT, Claude, and Refine.ink) are
 used in writing, validating results, and implementing empirical analyses; I take
 responsibility for all errors. }

\maketitle

\begin{abstract}

Applied researchers routinely use linear regression to estimate causal effects, justified
by \emph{quasi-experimental treatment variation}, while leaving assumptions on treatment
assignment implicit. We formalize a minimal criterion for quasi-experimental
interpretation---that the regression estimates some contrast of potential outcomes under
the true assignment process, regardless of potential outcomes---and characterize its
implications for arbitrary regressions. This criterion implies linear restrictions on the
true treatment distribution, whose solutions we call \emph{implicit designs}. A regression
is exactly quasi-experimental if and only if the true design is an implicit design, and
approximately so when it is close to one, in a sense we formalize. Our framework unifies
existing results and uncovers new ones across many settings. Qualitatively, an AI-assisted
census of 1,051 recent papers finds quasi-experimental regression pervasive and often
vulnerable to our negative results. Quantitatively, we assess exact and approximate
quasi-experimental interpretation in nine studies by computing their implicit designs and
estimands.

\end{abstract}

\newpage

\section{Introduction}

Linear regression is nearly universal in empirical economics for estimating causal
effects. Its causal interpretation is often justified through \emph{quasi-experimental}
treatment assignment \citep{pgp,angrist2010credibility,currie2020technology}.  
Applied work often proceeds by informally arguing that treatment is ``as good as
randomized,'' choosing a regression specification, and  interpreting some of its
coefficients as causal. Awkwardly, a regression (e.g., $Y_i = W_i\tau + x_i'\gamma +
\epsilon_i$ for treatment $W$ and covariates $x$) looks like a structural model for the
\emph{outcomes}, but its causal
interpretation comes instead from assumptions on the \emph{treatment assignment}
process. These assumptions are often left implicit.

It is not obvious whether such assumptions exist or are plausible. As questioned by \citet
{freedman2008regression}, ``randomization does not justify the assumptions behind the OLS
model.'' Nevertheless, practitioners appear optimistic that such \emph {quasi-experimental
interpretation} is typically available and overwhelmingly prefer regression to modern
causal inference methods \citep{wager2024causal}.\footnote{We find little
takeup of
standard causal inference methods (inverse propensity weighting, matching, double/debiased
machine learning) in an extensive AI-assisted literature review in \cref{sec:empirical}.}
As \citet{angrist2008mostly} put it in \emph{Mostly Harmless Econometrics},
\begin{quote}
    Most econometrics texts appear to take econometric models very
seriously [\ldots Instead,] a principle [here] is that estimators in
common use almost always have a simple interpretation that is not heavily
model-dependent.
\end{quote} From this perspective, a regression---despite looking like a model for
 outcomes---is actually just an estimator for \emph{some} causal parameter under
 treatment effect heterogeneity, usually valid under \emph{some} assumptions on treatment
 assignment. Is this optimism towards regression justified in theory and in practice? 
 
A large applied econometrics literature suggests that the answer is delicate. This
literature\footnote{Among others, \citet
{imbensangrist,angrist1998,lin2013agnostic,sloczynski2022interpreting,sloczynski2020should,blandhol2022tsls,aronow2016does,goldsmith2022contamination,borusyak2024negative,athey2018design,kline2011oaxaca,bugni2023decomposition,mogstad2024instrumental,arkhangelsky2023fixed,arkhangelsky2021double,chetverikov2023logit,kolesar2024dynamic,zhao2025interacted,arganaraz2024randomly,kwon2025estimating}.}
produces a rich ``zoo'' of results. Certain specifications have quasi-experimental
interpretations under some treatment assignments, but others may not (e.g., produce
negatively weighted causal effects). Seemingly small differences can be critical: With
binary $W_i$, $Y_i = W_i\tau + x_i'\gamma + \epsilon_i$ estimates a weighted average
treatment effect when the propensity score is linear in $x_i$ \citep{angrist1998}, but the
analogous specification with multi-valued $W$ produces uninterpretable estimands
\citep{goldsmith2022contamination}.

These results offer a partial evaluation of regression's quasi-experimental
interpretation, one specification at a time, falling short of assessing
\citet{angrist2008mostly}'s general optimism. For instance, what treatment assignment
assumptions justify interpreting the interacted regression $Y_i = W_i\tau_0 + W_i
x_i'\tau_1  + x_i'\gamma
+ \epsilon_i$ quasi-experimentally---or a panel regression like $Y_ {it} =
\alpha_i + \beta_t + W_{it}\tau + x_{it}'\gamma + \epsilon_{it}$? When such a regression can be interpreted quasi-experimentally, what
causal effect does it target under heterogeneity? When the implicit assumptions are misspecified---as they often are---how poor
are the resulting estimates?

This paper provides a general framework that answers these questions for an
\emph{arbitrary} regression with finite-valued treatments. For any such specification, the
framework \emph{computes} a set of candidate treatment-assignment processes
(\emph{designs}) and, for each candidate, the corresponding estimand. We call these
candidates \textbf{implicit designs} and their estimands \textbf{implicit estimands}.
These objects are simple to estimate in practice.

This framework characterizes when a regression admits an \emph{exact} quasi-experimental
interpretation---estimating a causal effect under any configuration of potential outcomes.
A regression admits one if and only if the true assignment process is an implicit design,
in which case it estimates the corresponding implicit estimand. Our framework also
characterizes when a regression admits an \emph{approximate} quasi-experimental
interpretation under a given design: We make precise a sense in which the regression's
distance from quasi-experimental interpretation equals the covariance between its
residuals and the distance of that given design from an implicit design. This covariance
can be large if outcomes were adversarially chosen, but it may be small in practice.

Our framework follows naturally by formalizing quasi-experimental interpretation as
requirements on the regression estimand in the idealized experiment that redraws
treatment according to some true assignment process $\bpi^* = (\pi_1^*, \ldots,
\pi_n^*)$---where $\pi_i^*$ denotes the true assignment probabilities of unit $i$'s
treatment $W_i$. Given any potential outcomes $y_i(\cdot)$, repeatedly drawing $W_i \sim
\pi_i^*$ defines an estimand $\tau$ for regression as a function of $y_i(\cdot)$. We argue
that a minimal requirement for calling $\tau$ ``quasi-experimental'' is that it be a
contrast under this experiment, \emph{regardless of potential outcomes}:
\begin{enumerate} [label=\text{[MQE]}]
    \item \label{item:M} 
Under $\bpi^*$, the estimand
$\tau$ is a \textbf{contrast}\footnote{If there are $J+1$ treatments
$\br{0,\ldots,J}$ and $n$ units $\br{1,\ldots, n}$, a contrast of individual potential
outcomes $y_{i}(j)$ is  a parameter $\frac{1}{n}
\sum_{i=1}^n \sum_{j=0}^{J}
    \omega_ {i}(j)
    y_i(j)$, for weights that sum to zero across $j$, $\sum_{j=0}^{J}
    \omega_ {i} (j) =
    0$.} of
    individual potential outcomes for \textbf{any} potential-outcome distribution---even a
    worst-case one.
\end{enumerate}
If a regression satisfies \cref{item:M}, practitioners can safely interpret its estimates
as some---though not necessarily {useful}---causal effects, identified in the idealized
experiment $\bpi^*$. 

\cref{item:M}---for \emph{minimally quasi-experimental}---codifies two requirements.
First, it enforces the \emph{treatment-based} logic practitioners appeal to by requiring
that causal interpretation be agnostic of potential outcomes. Second, \cref{item:M}
imposes \emph{level independence} \citep{blandhol2022tsls}: For any configuration of
potential outcomes where all individual treatment effects are zero,  the estimand should
always be zero. For binary treatments, this amounts to requiring $\tau$ to be a weighted
average treatment effect.

Because \cref{item:M} depends on the unknown true design $\bpi^*$, we cannot test it
directly. We therefore decompose it into an \emph{existence} and a \emph{correctness} question:
\begin{enumerate}[label=\text{[MQE-\arabic*]}]
    \item \label{item:M1} Does \emph{any} design $(\pi_1,\ldots,\pi_n)$, consistent
    with
    the data, satisfy \cref{item:M}?
    \item \label{item:M2} Is any design in \cref{item:M1} equal to $\pi_1^*,\ldots,
    \pi_n^*$?
\end{enumerate} 
\noindent Our framework systematically evaluates \cref{item:M} through
\cref{item:M1}, which asks whether an idealized experiment even \emph{exists} that
supports quasi-experimental interpretation. We focus on \cref{item:M1} because it can be
objectively settled and is relevant on its own. If regression estimators have
causal interpretations on treatment assignment grounds, as broadly predicted by
\citet{angrist2010credibility}, there should at least be some defensible design that
answers \cref{item:M1}. Tests of \cref {item:M1} also produce a narrow set of candidate
designs that makes important debates about \cref{item:M2}---whether a model of treatment
is correctly specified---explicit.

To characterize \cref{item:M1}, we observe that the regression specification aggregates
outcomes linearly by assigning a weight $\rho_i(w)$ to unit $i$'s \emph{potential outcome}
under treatment $w$, $y_i(w)$. Correspondingly, we call $\rho_i(w)$ \textbf{potential
weights}. These potential weights are known in population and consistently estimable in
sample.  If a design $\pi = (\pi_1,\ldots,\pi_n)$ satisfies \cref{item:M1}, then for every
unit $i$, the weights assigned to unit $i$ must be \emph{ex ante} mean zero under $\pi$:
\[\sum_w \pi_i(w) \rho_i (w) = 0 \quad \sum_w
\pi_i(w) = 1\numberthis
\label{eq:li_intro}.\] Otherwise, shifting every potential outcome of unit $i$ by a
 constant would change the regression estimand under $\bpi$ without changing the
 treatment effects. 

 Since the potential weights are known in population, \eqref {eq:li_intro} defines linear
 equations in $\pi_i$.  In many leading specifications, restrictions
\eqref{eq:li_intro} are sharp---often yielding either no solution or a unique one. 
These solutions to \eqref{eq:li_intro} are the set of \textbf{implicit
designs}---candidates in \cref{item:M1}.  For each implicit design, its corresponding
\textbf{implicit estimand} puts weight $\omega_i(w) \equiv \pi_i(w) \rho_i(w)$ on the
 unit-$i$ potential outcome $y_i (w)$. These weights $\omega_i(w)$ can likewise be
 explicitly computed. 

This paper makes the following contributions. Our main result formalizes how implicit
designs and implicit estimands characterize (i) which idealized experiments are
consistent with quasi-experimental interpretation of a regression coefficient and
(ii) which causal contrast a regression targets. Namely, a regression satisfies \cref
{item:M} if and only if implicit designs exist (\cref{item:M1}) and the true assignment
process is equal to an implicit design (\cref{item:M2}). When it satisfies \cref
{item:M}, the estimand is the implicit estimand under $\bpi^*$.

Second,  when an exact quasi-experimental interpretation is rejected---either
\cref{item:M1} or \cref{item:M2} fails---we formalize a novel sense in which regressions
are close to doubly robust estimators for \emph {some} causal contrast under some
explicitly given design $\tilde \bpi$. Given a target causal contrast and a design $\tilde
\bpi$, an attractive estimator for that contrast is an augmented inverse propensity
weighting (AIPW) estimator that uses $\tilde
\bpi$ for the propensity weighting and the regression specification as the outcome model
\citep{bang2005doubly}. Such an estimator is doubly robust in the sense that it estimates
the target treatment effect if either $\tilde \bpi$ or the regression is correctly
specified.

We show that the difference between the regression estimator and this AIPW estimator, for
\emph{some} estimand, is numerically equal to the empirical covariance between the
regression residuals and the
degree to which $\tilde\bpi$ violates
\eqref{eq:li_intro}. Thus, if $\tilde \bpi$ is itself an implicit design, then the
 regression estimator is numerically equivalent to AIPW (moreover, for the implicit
 estimand). Otherwise, the regression estimator is still approximately doubly robust, to
 the extent that violations of \eqref{eq:li_intro} covary little with misspecification of
 the outcome model. In that case, the optimism of \citet{angrist2008mostly} is  justified
 in an approximate sense, despite adversarial choices of the potential outcomes that make
 the regression ill-behaved.

Third, our results on exact quasi-experimental interpretation unify a wide swath of
results in the applied econometrics literature and produce converses for some existing
results. \Cref {item:M} captures the shared logic across starkly different settings \citep
{angrist1998,blandhol2022tsls,goldsmith2022contamination,kline2011oaxaca,athey2018design}.
Computing implicit designs and implicit estimands for these regression specifications
exactly recovers the designs and estimands posited in these papers. Since implicit designs
are necessary implications of \cref{item:M}, doing so delivers \emph {converses} to these
results---namely, that quasi-experimental interpretation is only possible under exactly
those designs.

Fourth, we uncover new results for specifications that interact treatment with covariates
\citep
 {lin2013agnostic,miratrix2013adjusting,imbens2009recent,kline2011oaxaca,zhao2025interacted}
 and for two-way fixed effect (TWFE) regressions. In both cases, exact quasi-experimental
 interpretation can be brittle, in the sense that implicit designs need not exist outside
 special cases. Taken together, these results suggest that exact quasi-experimental
 interpretation of regression is perhaps less generic than predicted by \citet
 {angrist2008mostly}.

Finally, we extend the framework to two-stage least-squares (TSLS). There, the framework
characterizes requirements on the instrument assignment process for interpreting TSLS
coefficients as \emph {instrument}-on-outcome contrasts.
The implicit estimand here additionally pins down restrictions for treatment compliance
patterns for interpreting TSLS estimands as reasonable \emph
{treatment}-on-outcome effects. Our framework similarly unifies and extends the TSLS
literature\footnote{We share a focus on unified analysis with related papers by
\citet{navjeevan2023identification} and \citet{goff2024does}. Compared to these papers,
 our starting point is the interpretation of a particular TSLS estimator.} by recovering converses
to results in \citet
{blandhol2022tsls,imbensangrist,behaghel2013robustness,sloczynski2020should,bhuller20242sls}---they
even help clarify a small gap in recent work on TSLS with multiple treatments.

We illustrate the practical relevance of this framework by documenting stylized facts
about empirical strategies in 1,051 recent papers and by assessing exact and approximate
quasi-experimental interpretation in nine recent papers. We find that our negative results
on interactions and TWFE potentially reject quasi-experimental interpretation in many
specifications. A detailed re-analysis of nine recently published papers  shows that exact
quasi-experimental interpretation is rejected for all of them. Nevertheless, all but one
appear approximately quasi-experimental under reasonable designs that one may
alternatively consider. In terms of estimands, many of these applications have poor
overlap, and the regression estimand predominantly aggregates over units with good
overlap---sometimes a very small fraction of all units.

Our results thus generically and automatically assess the optimism of
\citet{angrist2010credibility} for any given regression specification. Applying them
widely, both theoretically and in our reanalysis, we find that exact quasi-experimental
interpretation rarely holds. Thus, contrary to a narrow reading of
\citet{angrist2008mostly}, the implicit model for treatment assignment imposed by
regression is often untenable. Nevertheless, in our re-analysis, regression is often
approximately quasi-experimental relative to some reasonable alternative
designs---partially vindicating the optimism for regression as an empirical regularity.

Ultimately, many of the practical implications of this paper are conceptual. Using a
regression for estimating causal effects, from a quasi-experimental perspective, imposes
nontrivial modeling assumptions on the treatment assignment process and automatically
selects particular estimands. Transparent empirical analyses should make these choices
explicit, for instance by estimating and reporting implicit designs and implicit
estimands. Our empirical analyses provide a template for practitioners.

This paper proceeds as follows. \Cref{sec:potential_weights_and_implicit_designs} contains
our main results. \Cref{sec:theoretical_applications} applies our framework to a litany of
regression specifications, yielding new theoretical results.
\Cref{sec:extension_two_stage_least_squares} extends the framework to TSLS. \Cref
{sec:empirical} contains our empirical results. \Cref {sec:conclusion} concludes.

\section{Potential weights and implicit designs}
\label{sec:potential_weights_and_implicit_designs}

Consider a finite population of units $i \in [n] \equiv \br{1,\ldots, n}$. Each unit
receives one treatment $\bw$ from a finite set $\cW$ of size $J+1$. Each unit has
covariates $\bx_i$ and vector-valued potential outcomes of length $T$, $\br{\by_i (\bw)
\in
\R^T : \bw \in \cW}$.\footnote{For expositional clarity, we assume that the dimension of
 the outcome vector is the same across individuals (i.e., balanced panels).  \Cref
 {asub:imbalance} discusses imbalanced panels.}  We denote by $\bW_i$ the realized
 treatment. After assignment, we observe a corresponding realized outcome $\bY_i = \by_i
 (\bW_i)$.

To emphasize that identification comes from variation in treatment assignment, we isolate
this variation by treating $(\bx_i, \by_i(\cdot))$ as fixed numbers and only considering
the randomness in $\bW_i$. This design-based perspective aligns with how
quasi-experimentalists argue identification and with how we compute implicit designs. It
also allows for treatment to be correlated across units. \Cref{rmk:superpop} shows that
our key results continue to apply under i.i.d. sampling.

Let $\bpi^*$ denote the marginal treatment assignment probabilities (i.e., propensity
scores):\[\bpi^* = (\pi^*_1, \ldots,
\pi^*_n) \text{ where } \pi_i^*(\bw) = \P(\bW_i = \bw). \] We call $\bpi =(\pi_1
 (\cdot),\ldots,\pi_n(\cdot))$ a \emph{design} and call the actual probabilities $\bpi^*$
 the true design. Write a regression generically as \[\bY_{it} = z_t(\bx_i, \bW_i)'\beta +
\epsilon_{it}
\numberthis \label{eq:general_regression}
\] with known $z_t(\cdot, \cdot)$. For a known matrix $\Lambda \in \R^{k
\times K}$, we would like to interpret certain coefficient contrasts $\btau \equiv
\Lambda\beta \in \R^k$ as causal effects. 
To emphasize, \eqref{eq:general_regression} does not specify a structural model; it simply
specifies an estimand $\btau$ given $(\bpi^*, (\by_i(\cdot),
\bx_i)_ {i=1}^n)$. Since it is common in practice to specify a regression first
 and interpret its estimated coefficients as causal effects, our analysis starts with a
 regression and investigates which $\bpi^*$ are compatible with
 interpreting the regression under \cref{item:M}.

This setup is general: It encompasses cross-sectional ($T=1$), panel ($T > 1$), $
(J+1)$-valued treatment, scalar contrast ($k = 1$), and
multiple contrasts ($k > 1$) settings.\footnote{Our results do extend to continuous
treatments, but they become much less powerful, essentially because there are only
finitely many restrictions for infinitely many objects. }
\cref{sec:extension_two_stage_least_squares} extends these results to TSLS. To fix ideas,
 we start with the binary-treatment, scalar-outcome, and scalar-contrast case $
 (T=J=k=1)$.  We then push this intuition to the general case.

\subsection{Core intuition} 
\label{sub:leading} To motivate, \citet{blakeslee2020way} study the impact
 of water loss in rural India on economic outcomes.  Water loss is measured by a
 binary $W_i$, indicating whether the first borewell household $i$ drilled has failed.
 The authors motivate identification by emphasizing that well failure ``depends on highly
 irregular, quasi-random subsurface properties'' (p. 206). The true design $\bpi^*$---the
 natural process of borewell failure---is unknown, but the authors argue that failure is
 difficult to predict, making treated and untreated households plausibly comparable, and
 correspondingly marshal detailed hydrogeological evidence.

After informally arguing that treatment is plausibly quasi-experimentally assigned,
\citet{blakeslee2020way} then estimate a simple
regression across multiple outcomes at the household level:
\[Y_i =
\tau W_i
+ x_i'\gamma + \epsilon_i, \text{ for which } \numberthis  z (x_i, W_i) = [W_i,
x_i']', \beta = [\tau, \gamma']',  \Lambda =
[1,0_{\dim(x)}'].
\label{eq:angrist98intro}
\] 
A tension here is that the regression itself does not encode the rich substantive
discussion of $\bpi^*$. Instead,
$\bpi^*$ is left implicit as whatever assignment process that  justifies interpreting
\eqref{eq:angrist98intro} quasi-experimentally. 

This pattern---arguing that a treatment is quasi-experimentally assigned and estimating a
causal effect via regression, without explicitly modeling the design nor explicitly naming
an estimand---is common in empirical work. We document it through an AI-assisted census of
the literature in \cref{sec:empirical}. When estimates are presented this way, what must a
reader believe about $\bpi^*$ for \eqref{eq:angrist98intro} to have a quasi-experimental
interpretation, and which causal contrast does $\tau$ represent when effects are
heterogeneous?

To answer these questions, let us return to a regression of a scalar $Y_i$ on some known
transform $z (x_i, W_i)$. The population regression coefficient is defined as: \[
\beta \equiv \pr{{\frac{1}{n}\sum_{i=1}^n \E_{W_i \sim \pi_i^*} \bk{z
 (x_i, W_i) z(x_i, W_i)'}}}^{-1} \pr{\frac{1}{n}\sum_{i=1}^n \E_{W_i \sim \pi_i^*} [z
 (x_i, W_i) y_i (W_i)]}.
\] 
This definition is simply the design-based analogue of the usual ``$\E[x_i x_i']^{-1}\E
[x_i y_i]$''
formula. 

Let $G_n \equiv G_n(\bpi^*) \equiv {\frac{1}{n}\sum_ {i=1}^n \E_{W_i
\sim \pi_i^*} \bk{z(x_i, W_i) z(x_i, W_i)'}}$ denote the population \emph {Gram matrix} of
this regression. Since $G_n$ is consistently estimable, we treat it as known. Empirically,
we produce feasible estimates by replacing $G_n$ with its estimate.\footnote{In
particular, since the regression \emph{estimator} replaces $G_n$ with $\hat G_n =
\frac{1}{n} \sum_{i=1}^n z(x_i, W_i) z(x_i, W_i)'$, it is implausible that the regression
 estimator is consistent but $\hat G_n$ is far from $G_n$. Appendices D and F.2 of a
 working paper version of this paper provide formal guarantees for $\hat G_n$
 \citep{chen2024potential}.

 Because $G_n$ depends on the unknown $\bpi^*$, treating
 $G_n$ as known implicitly restricts $\bpi^*$ to those designs that are consistent with
 the realized treatment assignment. We discuss its interpretation further in \cref
 {asec:known_gram_matrix}.} 
Under these definitions, $\tau$ is a linear aggregation of potential outcomes:
For $\pi_i^* = \pi_i^*(1)$, \[
\tau = \Lambda\beta = \frac{1}{n}\sum_{i=1}^n \pi_i^* \underbrace{\Lambda G_n^{-1} z(x_i,
1)}_{\rho_i(1)} y_i (1) + (1-\pi_i^*) \underbrace{\Lambda G_n^{-1} z(x_i, 0)}_{\rho_i(0)}
y_i(0). \numberthis \label{eq:simple_aggregation}
\]
Here, the \emph{potential weights} $\rho_i(w) = \Lambda G_n^{-1} z(x_i, w)$ are known up
to $G_n$. 
In the case of 
\eqref{eq:angrist98intro} where $x_i$ includes a constant, we can compute $\rho_i(w)$ in
closed form:
\begin{align*}
\rho_i(w) = \frac{w -  x_i'\delta}{\frac{1}{n}\sum_{i=1}^n  x_i'\delta
(1- x_i'\delta)} \quad  \delta \equiv \pr{\frac{1}
{n}\sum_ {i=1}^n  x_i  x_i'}^{-1}\pr{\frac{1}{n} \sum_{i=1}^n  x_i
\pi_i^*}.
\end{align*}
$\rho_i(w)$ is proportional to $w-x_i'\delta$, where $\delta$
 is the projection coefficient of $\pi^*$ on $x_i$.

If the regression is quasi-experimental in the sense of \cref{item:M}, the true design
$\bpi^*$ is such that the estimand \eqref{eq:simple_aggregation} satisfies \emph{level
independence} \citep{blandhol2022tsls}:

\begin{defn}
\label{defn:level_irrelevance_simple} We say that $\tau$ is \emph{minimally
quasi-experimental} under
$\bpi^*$ if $\tau$ is always unchanged when we replace all potential outcomes $y_i(w)$
 with $y_i (w)  + c_i$ for arbitrary $c_i \in \R$, holding fixed $(\bpi^*, x_1,\ldots, x_n)$. Since
 $\tau$ is a linear aggregation, equivalently, $\tau$ is minimally quasi-experimental if
 there are some
 $\omega_1,\ldots, \omega_n \in \R$, not dependent on $y_i(\cdot)$, such that $   \tau =
 \frac{1}{n} \sum_{i=1}^n \omega_i (y_i(1) - y_i(0))$ for all choices of $y_i(1), y_i (0)
 \in \R$.
\end{defn}

\Cref{defn:level_irrelevance_simple} is a natural minimal requirement for
quasi-experimental estimands. It imposes that a quasi-experimental estimand should be
invariant to any changes to the potential outcomes that do not alter individual treatment
effects---holding fixed the treatment assignment process. For linear estimands, this
condition is equivalent to $\tau$ being a weighted average treatment effect (these weights
$\omega_i$ may be negative).\footnote {Negative weights are intended, for example, when
the estimand is meant as a contrast of subgroup average effects. Thus to preserve
generality, we allow for negative weights. Since we could recover the estimand itself, we could additionally inspect
whether the weighting is convex or whether it satisfies further restrictions. 
\citet{blandhol2022tsls} term an
estimand ``weakly causal'' if it additionally satisfies $\omega_i \ge 0$.}

Importantly, estimands that rely on modeling $y_i(0)$ do not qualify as quasi-experimental
per our definition. This definition thus rules out difference-in-differences under
parallel-trends assumptions, or assumptions like $\E[Y_i(0) \mid W_i=1] = \E [Y_i(0) \mid
W_i = 0]$ that allow the ATT to be identified by the difference-in-means estimand. These
empirical strategies do not mimic a \emph{randomized} experiment (though
quasi-experimental assignment is often invoked to informally justify them). While looser
definitions of quasi-experiments are reasonable \citep[e.g.,][]{card2022design}, we argue
this stricter one is both principled and useful. It is principled by taking very seriously
that quasi-experiments should emulate randomized experiments
\citep{angrist2010credibility,leamer1983let}. It is also not overly stringent---the
theoretical applications in \cref{sec:theoretical_applications} show that much of the
applied econometrics literature is consistent with this definition.

Returning to \eqref{eq:simple_aggregation}, observe that $\tau$ satisfies 
\cref{defn:level_irrelevance_simple}
under
$\bpi^*$ if and only if \[
\pi_i^* \rho_i(1) + (1-\pi_i^*)\rho_i(0) = 0 \text{ for all $i=1,\ldots,n$.} \numberthis
\label{eq:equation_simple}
\]
We can separate \eqref{eq:equation_simple} into two further questions: \cref{item:M1} which
assignment vectors $\bpi = (\pi_1,\ldots, \pi_n) $
solve \eqref{eq:equation_simple}, and \cref{item:M2} whether the true assignment vector $\bpi^*$ is
plausibly among those solutions. For \cref{item:M1}, viewing 
\eqref{eq:equation_simple} as an equation in $\bpi^*$, we can solve for \[
\pi_i = \frac{-\rho_i(0)}{\rho_i(1) - \rho_i(0)}, \numberthis 
\label{eq:implicit_design_formula_binary}
\]
and we call such a $\bpi$ an \emph{implicit design}.

Identifying $\bpi$ immediately pinpoints the estimand. If $\pi_i$ were the true design,
then, for $\omega_i(\bpi, w) \equiv \pi_i(w)\rho_i(w)$ and $\omega_i \equiv \omega_i
(\bpi, 1) = -\omega_i(\bpi, 0)$ by \eqref{eq:equation_simple}, $\tau$ would be a weighted
ATE
\[
    \tau = \frac{1}{n} \sum_{i=1}^n \omega_i(\bpi, 1) y_i(1) + \omega_i(\bpi, 0) y_i(0) = 
    \frac{1}{n} \sum_{i=1}^n \omega_i (y_i(1) - y_i(0)). 
\] Thus, simply solving \eqref{eq:equation_simple} yields both candidate designs $\bpi$
 and their corresponding \emph{implicit} estimands. Crucially, \eqref
 {eq:equation_simple} only involves objects that are known in the population and are
 hence easily estimable in sample. This immediately enables assessing \cref
 {item:M} through \cref{item:M1} and examining how the regression aggregates
 heterogeneous treatment effects.

The following corollary of 
\cref{thm:main}, to be introduced, summarizes how implicit designs and implicit estimands
relate to \cref{item:M}:
\begin{restatable}{cor}{corbinary}
    \label{cor:binary_main}
    When $k=T=J=1$, $\tau$ is minimally quasi-experimental 
     if and only if
\begin{enumerate}
    \item $\rho_i(1) \rho_i(0) \le 0$ for all $i$. Some implicit design $\bpi$ satisfies
    \eqref{eq:gram_criterion} and has $\pi_i =
    \frac{-\rho_i (0)} {\rho_i
    (1) - \rho_i(0)}$ for all $i$ with one of $\rho_i(1)$ and $\rho_i(0)$ nonzero.
    \item For all units $i$ with one of $\rho_i(1)$ and $\rho_i(0)$ nonzero, $\pi^*_i =
    \frac{-\rho_i(0)}{\rho_i(1) - \rho_i(0)}$.
\end{enumerate}
    When this happens, the implicit estimand is $
    \tau = \frac{1}{n} \sum_{i=1}^n \omega_i^* (y_i(1) - y_i(0))
    $
  for  $\omega_i^* \equiv \omega_i(\bpi^*, 1) =
    \pi_i^* \rho_i(1).$ The  weight $\omega_i^* < 0$ if and only if $\rho_i(1) < 0 <
    \rho_i(0)$.
\end{restatable}

The two conditions in \cref{cor:binary_main} separate \cref{item:M} into \cref{item:M1}
and \cref{item:M2}.
\cref{cor:binary_main}(1) formalizes \cref{item:M1}. 
  If an implicit
 design exists, it is
 uniquely and explicitly defined (up to units with $\rho_i(1) = \rho_i(0) = 0$). 
 \Cref{cor:binary_main}(2) formalizes \cref{item:M2}, which requires that $\pi_i^*$ is
  equal to the unique implicit design $\frac{-\rho_i(0)}{\rho_i(1) - \rho_i(0)}$. 

  There are at least two ways that \cref{item:M1} is rejected. The more obvious one is if
  $\pi_i
\not\in [0,1]$ for any $i$. When this happens, the implicit design is not even a
probability distribution. More subtly, $\pi_i$ is also indefensible if it \emph
{generates} a Gram matrix that is different from $G_n(\bpi^*)$: \[G_n(\bpi) =
\frac{1}{n}\sum_{i=1}^n \pi_i z(x_i, 1)z(x_i, 1)' + (1-\pi_i) z(x_i, 0)z(x_i, 0)' \neq
G_n(\bpi^*).
\numberthis \label{eq:gram_criterion}
\] This restriction is useful when we analyze specifications theoretically under this
 framework. It is harder to implement when we do not know and have to estimate $G_n$,
 though, with a confidence set for $G_n$, one could use it as a basis for inference on
 $\bpi^*$.\footnote{See Appendix D  of a working paper version of this paper 
 \citep{chen2024potential}.}

Here, the implicit estimand is a weighted average treatment effect, where weights
 $\omega_i$ are all nonnegative provided no unit has $\rho_i(1) < 0 <
\rho_i(0)$.
Applied to \eqref{eq:angrist98intro}, \cref
{cor:binary_main} shows that the implicit design is precisely $\pi_i = x_i'\delta$ and
the corresponding estimand is a weighted ATE\[
\tau = \frac{1}{n} \sum_{i=1}^n \omega_i(y_i(1) - y_i(0)) \quad \omega_i = \frac{\pi_i
(1-\pi_i)} {\frac{1}{n}\sum_{j=1}^n\pi_j
(1-\pi_j)} \numberthis \label{eq:angrist_estimand_simple}.
\]
Computing them thus recovers results in \citet{angrist1998} and
\citet{blandhol2022tsls}.\footnote{Both \citet{angrist1998} and \citet{blandhol2022tsls}
consider a superpopulation sampling setup. \citet{angrist1998} considers a binary $x_i$ in
his equation (9), but the argument can be easily generalized, e.g., in
\citet{borusyak2024negative,goldsmith2022contamination}. Corollary 1 in
\citet{blandhol2022tsls}---which specializes their TSLS result to OLS---shows that assuming unconfoundedness, $\tau$
 is a positively weighted average treatment effect if and only if the propensity score is
 linear. This is effectively what we find, and thus we view our result (formally in
\cref{thm:zoo}(\ref{item:angrist98})) as a reinterpretation of theirs. We clarify the
exact relationship to Theorem 1 in \citet{blandhol2022tsls} in 
\cref{sub:relation_to_blandhol}. } 

To summarize, our analysis proceeds in four steps:
\vspace{0.2em}
\begin{mdframed}
\begin{enumerate}[label=(\roman*)]
    \item\label{item:i} We treat the triplet $(\br{z(x_i, \cdot)}_{i=1}^n, \Lambda, G_n)$
    as known (at least in the population).
     
     \item\label{item:ii} We write the population regression estimand $\tau$ in the form  \eqref{eq:simple_aggregation}. Because we treat $G_n$ as known,
      the \emph{potential weights} $\rho_i (w)$ are known for all units.

     \item\label{item:iii} We observe that \cref{item:M} imposes linear restrictions on
     $\pi_i^*$, where the coefficients are the potential weights.

      \item\label{item:iv} Separating \cref{item:M} into \cref{item:M1,item:M2}, we call
      the solutions to these linear equations \emph{implicit designs}.     Computing
 implicit designs also yields the corresponding estimands by \eqref
 {eq:simple_aggregation}. If \cref{item:M} holds, then $\tau$ has a quasi-experimental
 interpretation, and one may then assess the extent to which $\tau$ is substantively
 relevant.
 \end{enumerate}
 \end{mdframed}
 \vspace{0.2em}

We conclude this subsection by stating the
 superpopulation analogue of these results.

\begin{rmksq}[Superpopulation]
\label{rmk:superpop} Suppose instead $ (Y_i (0), Y_i(1), W_i, X_i) \iid P$. We can convert
 the sampling setup to a design-based setup by setting $\pi_i^* = P(W_i = 1 \mid X_i, Y_i
 (1), Y_i(0))$ and conditioning on $(X_i, Y_i(1), Y_i(0))$. Now, consider a hypothetical
 set of potential outcomes $Y_i' (w) = Y_i (w) + C_i$ where $C_i$ is some random variable
 satisfying $C_i \indep W_i \mid X_i, Y_i(\cdot)$. This independence restriction makes
 sure that $Y_i'(\cdot)$ does not introduce new selection concerns: $P(W=1 \mid Y_i'(1),
 Y_i'(0), Y_i(1), Y_i(0), X_i) = \pi_i^*.$

 The sampling analogue of
 \cref{defn:level_irrelevance_simple} is that $\tau$ is unchanged for all
  such $Y_i'$: \begin{align*}
 \tau &= \E[\underbrace{\Lambda \E[z(X_i, W_i) z(X_i, W_i)']^{-1} z(X_i, W_i)}_{\rho_i
 (W_i)} Y_i(W_i)] = \E[ \rho_i(W_i) Y_i' (W_i)]. \numberthis \label{eq:sampling_cond}
 \end{align*} By the law of iterated expectations, conditioning on $(Y_i(1), Y_i(0), C_i,
  X_i)$, \eqref{eq:sampling_cond} is equivalent to $0 = \E\bk{C_i \pr{\pi_i^* \rho_i
  (1) + (1-\pi_i^*) \rho_i(0)}}
 $. 
 Since we can choose $C_i$ as an arbitrary function of $X_i, Y_i(1), Y_i(0)$ and in
 particular as $C_i = \pi_i^* \rho_i(1) + (1-\pi_i^*) \rho_i(0)$, we can force the
 following   analogue of \eqref{eq:pop_level_irrelevance_condition}: \[
     \pi_i^* \rho_i(1) + (1-\pi_i^*) \rho_i(0) = 0 \quad \text{$P$-almost surely.}
 \]
 See \cref{asec:superpop} for a formalized analogue with general
     $J,k,T$.
\end{rmksq}

\subsection{General setup}
\label{sub:main_results}

We now generalize to panel data and multivalued treatments. Return to the general
regression \eqref{eq:general_regression}, with general outcome dimension $T$, number of
treatments $J+1$, and coefficient dimension $k$. Following \cref{item:i}, let $\bz
(\bx_i,\cdot)
\in \R^{T\times K}$ stack $z_t
 (\bx_i, \cdot)$; we treat $ (\Lambda, G_n, \bz (\bx_1, \cdot),
    \ldots, \bz(\bx_n, \cdot))$ as known.
As in \cref{item:ii}, the regression estimand is a linear aggregation of potential
outcomes \begin{align*}
    \btau  &= \frac{1}{n}\sum_{i=1}^n \sum_{\bw \in \cW}
    \pi_i^*(\bw) \cdot \underbrace{\bk{
         \Lambda G_n^{-1}\bz(\bx_i, \bw)'
    }}_{\brho_i(\bw) \in \R^{k \times T}}\by_i(\bw).
    \numberthis
    \label{eq:estimand}
    \end{align*} Relative to the simple case
    \eqref{eq:simple_aggregation}, we sum over $J+1$ values, and potential
    weights $
    \brho_i(\bw) \equiv \Lambda G_n^{-1} \bz(\bx_i, \bw)'
    $ are matrices of dimension $k \times T$. The potential weights do not depend on how
    the regression is parametrized, nor are they changed by Frisch--Waugh--Lovell
    transforms (\cref{asub:invariances}).\footnote{There are two subtleties of
    parametrization in panel settings. First, since treating $G_n$ as known is motivated
      by its consistent estimation, we require representing fixed effects through the
      within-transformation for $\bz(\bx,
    \cdot)$, rather than through unit-level dummy variables, since the latter does not
    lead to consistently estimable Gram matrices. Second, assuming $\bz (\bx,
  \cdot)$ is known  precludes mediators (e.g. lagged outcomes) in the right-hand side of
  the regression, since we do not know counterfactual values of the mediator.
}
For \cref {item:iii}, a natural generalization of \cref{defn:level_irrelevance_simple}
imposes that the estimand is invariant to shifts in potential outcome
\emph{paths} that do not alter treatment effects:
\begin{defn}[Minimally quasi-experimental]
\label{defn:level_irrelevance}
$\btau$ is minimally quasi-experimental if it is always unchanged when we replace all potential
outcomes $\by_{it}(\bw)$ with $\by_{it} (\bw) + c_ {it}$ for arbitrary $c_{it}
\in \R$, fixing $\bpi^*, \bx_1,\ldots, \bx_n$. For linear estimands, this is equivalent
to 
\begin{align*}
\btau &= \frac{1}{n} \sum_{i=1}^n \sum_{\bw\in \cW} {\bomega_{i}
   (\bw)} \by_ {i}(\bw) \text{ for some $\bomega_{i} (\bw) \in \R^{k\times T}
   $ where }
   0 = \sum_ {\bw
  \in \cW} \bomega_{i}
   (\bw)
\end{align*}
\end{defn}

\Cref{defn:level_irrelevance} is equivalent to the following linear system \[\text{For $i
= 1,\ldots,
n$}, \sum_{\bw \in \cW}\pi_i^*(\bw) \brho_i(\bw) = 0 ,\quad \sum_ {\bw \in \cW} \pi_i^*
(\bw) = 1.
\numberthis
\label{eq:pop_level_irrelevance_condition}
\] Since $\rho_i(w)$ is a $k\times T$ matrix and $|\mathcal W| - 1 = J$, there are $kT$
 restrictions in $J$ unknowns. We call any solution an \emph{implicit design}. Implicit
 designs are typically unique when they exist, because often the number of equations $kT$
 is at least the number of unknowns $J$.\footnote{For instance, $J + 1$ treatments
 generate $k = J$ contrasts; panels under staggered adoption admit fewer unique treatment
 times $(J + 1)$ than time horizon $T$. Lemma F.4 in a working paper draft of this paper
 verifies uniqueness when $T=1$ \citep{chen2024potential}.} In practice, we can estimate
 implicit designs by solving sample analogues of
 \eqref{eq:pop_level_irrelevance_condition}, replacing $G_n$ with its sample counterpart
  $\hat G_n = \frac{1}{n} \sum_{i=1}^n z(x_i, W_i) z(x_i, W_i)'$.

For a given implicit design, the corresponding \emph{implicit
estimand} is the following, for $\bomega_i(\bpi,\bw) \equiv \pi_i(\bw) \brho_i(\bw)$: 
\begin{align*}
\btau(\bpi) &\equiv \frac{1}{n} \sum_{i=1}^n \sum_{\bw \in \cW} \bomega_i(\bpi, \bw) \by_i
(\bw) = \frac{1}{n} \sum_{i=1}^n \sum_{\bw \neq \bw_0} \bomega_i(\bpi, \bw) (\by_i
    (\bw) - \by_i(\bw_0)).
    \numberthis
    \label{eq:implicit_estimand}
\end{align*}
These observations result in the following theorem formalizing how implicit designs answer
\cref{item:M1} and \cref{item:M2}, as in \cref{item:iv}. For a given implicit design
$\bpi$, we call it \emph{proper} if all $\pi_i(\cdot)$ are probability distributions. We
say it \emph{generates $G_n$} if it satisfies \eqref{eq:gram_criterion}: $G_n(\bpi) =
G_n$.

\begin{restatable}{theorem}{thmmain}
    \label{thm:main}

$\btau$ is minimally quasi-experimental if and only if (1) some implicit design $\bpi$
exists, is proper, and generates $G_n$, and (2) the true design $\bpi^*$ is equal to
$\bpi$. When this happens, the estimand $\btau$ is equal to the implicit estimand under
$\bpi$.
\end{restatable}

\Cref{thm:main} separates \cref{item:M} into an objectively computable question
\cref{item:M1} and a substantive question \cref {item:M2}. Proper implicit designs that
generate $G_n$ answer \cref{item:M1}. If no such designs exist, then $\tau$ cannot be
minimally quasi-experimental. On the other hand, judging  whether the true design is
plausibly equal to an implicit design $\bpi$ \cref{item:M2} is context-specific. Computing
implicit designs makes this judgment concrete and transparent.

In practice, both can be assessed by estimating implicit designs through solving the
empirical analogue of
\eqref{eq:pop_level_irrelevance_condition} by replacing $G_n$ with $\hat G_n$. If 
\cref{item:M} is rejected through either \cref{item:M1} or \cref{item:M2}, then the
regression does not have a quasi-experimental interpretation, in the sense that the
potential outcomes can be arranged adversarially so that the regression estimand is
arbitrarily far from any aggregation of individual causal effects.

\subsection{A novel double robustness for the OLS estimator}  

\Cref{thm:main} formalizes when the regression estimand is \emph {exactly}
 quasi-experimental. In many cases, a proper implicit design may not exist, or it may be
 implausible that \cref{item:M2} holds even if a proper implicit design were to exist. For
 instance, in the leading case of \eqref{eq:angrist98intro}, the implicit design is the
 linear projection of $\pi^*$ onto the covariates. This linear projection can have $\pi_i
 \not\in [0,1]$; linear functional form restrictions on the propensity score are also
 frequently rejected \citep{blandhol2022tsls}. Likewise,
 \cref{sec:theoretical_applications} shows that exact quasi-experimental interpretation
 for many specifications is fragile. 

Nevertheless, one may expect that if either the implicit design is somehow ``close'' to
the true design or the regression specification provides a reasonable approximation of
the outcomes, then the regression estimand is still approximately quasi-experimental for
some causal effect. The next result formalizes this intuition through a numerical
equivalence. 

Suppose one wishes to target some causal contrast  $\frac{1}{n}\sum_{i}
\sum_w \tilde \omega_i(w) y_i (w)$ under some explicitly hypothesized design $\tilde
\bpi$, possibly
equal to $\bpi^*$.  Suppose too that the regression functional form $z(w,
x)'\beta$ is a plausible approximation of conditional means of potential outcomes. Then
an attractive estimator for $\frac{1} {n}\sum_{i=1}^n \sum_{w\in \cW} \tilde \omega_i(w)
y_i(w)$ is the AIPW estimator which uses the hypothesized design as propensity weights
and the regression as an outcome model:
\[
    \tilde\tau_{\mathrm{AIPW}} = \frac{1}{n} \sum_{i=1}^n \sum_{w\in \cW} \tilde \omega_i
    (w) \br{\frac{\one(W_i = w)} {\tilde \pi_i(w)} (Y_i - z(x_i, w)'\hat\beta)  + z(x_i,
    w)'\hat\beta}. \numberthis \label{eq:aipw}
\]
It is well-known that $\tilde\tau_{\mathrm{AIPW}}$ is doubly robust
\citep{bang2005doubly}: $\tilde\tau_{\mathrm{AIPW}}$  recovers the target estimand if
either $z(x_i, w)'\hat\beta$ is correctly specified or if $\tilde\bpi=\bpi^*$.

We analyze how far the OLS estimator $\Lambda \hat\beta$ is from this AIPW estimator, for
\emph{some} $\tilde \omega$.

\begin{restatable}{theorem}{thmaipwnew}
\label{thm:aipw_new}

Let $\tilde \bpi$ be strictly proper ($\tilde\pi_i(\cdot) > 0$) and let
$\hat\beta$ be the OLS estimator of \eqref{eq:general_regression}. 
\begin{enumerate}[wide]
    \item If $\tilde\bpi$ is an implicit design that generates $G_n$, then
    $\Lambda\hat\beta$ is the AIPW
    estimator for the implicit estimand: $\Lambda\hat\beta = \tilde\tau_{\mathrm{AIPW}}$,
    for $\tilde \omega_i(w) = \tilde\pi_i(w)\brho_i(w)$ the implicit estimand.\footnote{
    \cref{thm:aipw_new}(1) is closely related to Proposition 3.2 in
\citet{bruns2025augmented} and to section 3 of \citet{robins2007comment}. Applying
Proposition 3.2 in \citet{bruns2025augmented} would show that $\hat\tau_{
\mathrm{AIPW}}$
is numerically equivalent to the imputation estimator targeted to implicit
estimand $ \frac{1} {n} \sum_ {i} \sum_w
\omega_i (w) z(x_i, w)'\hat \beta$, and further algebra shows that this imputation
estimator is numerically equivalent to the OLS coefficients $\hat\tau_{\mathrm{OLS}} =
\Lambda
\hat\beta$. Discussions in
\citet{bruns2025augmented} and \citet{robins2007comment} mainly focus on cases where
 regressions are separately fit within treatment groups; \cref{thm:aipw_new}(1) allows the
 regression specification  to be arbitrary over the entire sample.
 \citet{sloczynski2025covariate} show related numerical equivalence results for estimators
 of average treatment effects. }

    \item Suppose instead that, after reordering and rescaling regressors,
    $
        z(x_i, w) = (z_{1}(x_i, w)', z_2(x_i)')' $ for a dimension-$k$ $z_1$, $\Lambda =
[I_k, 0]$, and a rank condition \cref{as:aipw_recentered_rank} holds. Then there are
weights $\tilde \omega_i(w)$, with $\sum_w \tilde \omega_i(w) = 0$, and a nonsingular
$k\times k$ matrix $\tilde A$, defined in
\eqref{eq:tilde_omega}, such that
    \[
        \tilde\tau_{\mathrm{AIPW}}-\Lambda\hat\beta
        =
        \frac{\tilde A}{n} \sum_i\tilde\brho_i \cdot (Y_i - z(x_i,
        W_i)'\hat\beta),
        \text{ where }
        \tilde\brho_i
        =
        \sum_{w\in\cW}\tilde\pi_i(w)\brho_i(w). \numberthis \label{eq:ols-aipw}
    \]

\item Specializing part (2) to the regression in \eqref{eq:angrist98intro}, under the
assumptions of \cref{tab:regression_specifications_from_the_literature}(1), we have that
part (2)
holds with $\tilde\omega_i (1) =
\frac{\tilde \pi_i(1-\tilde \pi_i)}{\frac{1}{n} \sum_j \tilde \pi_j (1-\tilde \pi_j)}$
and 
\[
    \tilde\tau_{\mathrm{AIPW}} - \Lambda \hat\beta
    =
    -\frac{1}{\frac1n\sum_j\tilde\pi_j(1-\tilde\pi_j)} \frac1n
    \sum_{i=1}^n(\tilde\pi_i-\pi_i)(Y_i - z(x_i, W_i)'\hat\beta),
\] where $\tilde\pi_i=\tilde\pi_i(1)$ and $\pi$ is the linear projection of $\pi^*$ on the
 covariates.
	
\end{enumerate}

\end{restatable}

Remarkably, when $\tilde \bpi$ is equal to a proper implicit design (thus requiring
\cref{item:M1}), the OLS estimator is numerically equivalent to \eqref{eq:aipw} with
$\tilde \omega_i$ indexing the implicit estimand. In general, it turns out that there
\emph{exists} some estimand $\tilde \omega_i (\cdot)$ for which the gap between OLS and
AIPW is proportional to the empirical covariance between the extent to which $\tilde \bpi$
violates
\eqref{eq:pop_level_irrelevance_condition} and the OLS residuals: The quantity $\tilde
\rho_i$ in \eqref{eq:ols-aipw} is exactly the former, where $\tilde\rho_i = 0$ if
$\tilde \bpi$ is an implicit design.

Applying \eqref{eq:ols-aipw} to the specification in \citet{angrist1998} shows that the
OLS-AIPW gap is exactly proportional to the covariance between $\tilde \pi_i - \pi_i$,
for $\pi_i$ the implicit design, and
the regression residual. Here, the estimand $\tilde \omega_i$ is the weighted average
treatment effect with weights proportional to $\tilde \pi_i(1-\tilde \pi_i)$. This gives
\eqref{eq:angrist98intro} an \emph{approximate} quasi-experimental interpretation even
 when \cref{item:M} fails: It is approximately an AIPW estimator \eqref{eq:aipw}  for the
 variance-weighted estimand under the true design $\bpi^*$---if the linear projection
 residuals $\pi^*_i - \pi_i$ and $Y_i - z(x_i, W_i)'\hat\beta$ are not very correlated.
 Empirically, as we show in \cref{sec:empirical}, this covariance seems frequently small,
 driven by a lack of correlation between $\tilde \pi -
 \pi$ and the regression residuals.

To summarize, this section characterizes the necessary implications of \cref {item:M}
by breaking it into \cref{item:M1} and \cref{item:M2} and defining implicit designs.
A regression coefficient has a quasi-experimental interpretation if both are satisfied: An
implicit design exists, and the true design is exactly equal to it (\cref {thm:main}).
From this perspective, an implicit design, if it exists, can be thought of as a model of
treatment assignment implicitly chosen by the regression specification. Again, since
regressions are often used to estimate causal effects without an explicit model, making
this implicit model explicit is useful and enables important debates about whether it is
correctly specified. 

Another benefit of making the design explicit is to assess whether the regression is
nevertheless approximately quasi-experimental under the explicit design $\tilde \bpi$. It
is still approximately robust for some causal effect under $\tilde \bpi$, so long as the
regression's failures as a model of treatment and its failures as a model of outcome are
uncorrelated (\cref{thm:aipw_new}).

 {
\begin{landscape}
\renewcommand{\arraystretch}{1.55}
\begin{table}[h]
\begin{tabularx}{1\linewidth}{@{} l l l l X @{}}
\toprule
  \# & Setting & Specification & Contrast & Additional conditions\\
\midrule
    (1) & $k=T=J=1$ & $Y_i=\tau W_i + x_i'\gamma
    +\epsilon_i$ & $\tau$ &  $x_i$ includes a constant \\ 
    (2)  & $k=J, T=1$ & $Y_i=\sum_{j=1}^J \tau_j W_{ij} +
    x_i'\gamma +
    \epsilon_i$ & $(\tau_1,\ldots, \tau_J)$ & $x_i$ includes a constant. $\mathcal W = 
    \br{0,\ldots, J}$, $W_ {ij}=\one
    (W_i = j)$ \\
    (3)  & $k=T=J=1$ &
    $Y_i = \alpha_0 + \gamma_1'x_i + \tau W_i + W_i (x_i-\bar x)'\gamma_2 + \epsilon_i$ &
    $\tau$ & $x_i$ saturated for some discrete covariate $x_i^*$ taking values in $
    \br{0,\ldots, L}$: $x_i = [x_ {i1},\ldots,
    x_ {iL}]'$ for $x_ {i\ell} = \one
    (x^*_i = \ell)$, $\bar x = \frac{1}{n} \sum_i x_i$ \\
    (4)  & $k=T=J=1$ & $Y_i = \alpha_0 + \gamma_1'x_i + \tau W_i + W_i(x_i
    - \bar x_1)'\gamma_2 + \epsilon_i$ & $\tau$ & $\bar x_1 = \frac{\sum_i \pi_i^* x_i}
    {\sum_i \pi_i^*}$ \\ 
    (5)  & $T > 1$ & $Y_{it} = \alpha_i + \mu_t + \tau W_{it} +
    \epsilon_{it}$ & $\tau$ & $\mathcal W \subset \br{0,1}^T$ is the set of treatment
     paths. The nonzero elements of $\mathcal W$ are linearly independent vectors whose
     span excludes $1_T = (1,\ldots,1)'$. This condition is satisfied by staggered
     adoption that excludes always-treated units.
    \\\bottomrule
\end{tabularx}
\caption{Regression specifications analyzed in \cref{thm:zoo}}
\label{tab:regression_specifications_from_the_literature}
\begin{proof}[Notes]
    (1) is discussed in \citet{angrist1998} and section 2.1 of \citet{blandhol2022tsls}; 
    (2) is discussed in \citet{goldsmith2022contamination}; (3) is discussed in 
    \citet{miratrix2013adjusting,imbens2009recent,lin2013agnostic}, among others; (4) is
    discussed in
    \citet{kline2011oaxaca}; (5) is discussed in \citet{athey2018design}. 
\end{proof}
\end{table}
\renewcommand{\arraystretch}{1.0}
\end{landscape}

}

\section{Theoretical applications and examples}
\label{sec:theoretical_applications}

We now illustrate the generality of this framework by applying it to a wide swath of
regression specifications and  discussing them in self-contained vignettes.  To emphasize,
our results essentially reduce the problem to \emph{computing} the potential weights and
the set of implicit designs---unifying results across starkly distinct settings. We start
with several specifications with known causal interpretations under specific designs
\citep
{angrist1998,goldsmith2022contamination,imbens2009recent,lin2013agnostic,kline2011oaxaca,athey2018design}.

\subsection{A unified analysis of quasi-experimental interpretation in regression}
\label{sub:zoo}
Assume throughout that the population Gram matrix is invertible.

\begin{restatable}{theorem}{thmzoo}
\label{thm:zoo} We compute the implicit designs and estimands of the  regression
 specifications (1)--(5) described in \cref
 {tab:regression_specifications_from_the_literature}. In every specification, the
 implicit design exists uniquely. The implicit design generates $G_n$ regardless of
 whether $\bpi=\bpi^*$, for all specifications except (\cref{item:kline}).
\begin{enumerate}
    \item \label{item:angrist98}
    \begin{enumerate}
        \item  $\pi_i = x_i'\delta$ for $\delta = \pr{\frac{1}{n}\sum_{i=1}^n
    x_ix_i'}^{-1} \frac{1}{n} \sum_{i=1}^n
    \pi_i^* x_i$
    \item $\pi_i^* = \pi_i$ if and only if $\pi_i^* = x_i'\delta$
    \item $\omega_i \equiv \omega_i(\bpi, 1) = -\omega_i(\bpi,0) = \frac{\pi_i
    (1-\pi_i)} {\frac{1}{n} \sum_{i=1}^n \pi_i (1-\pi_i)}$. When $\pi_i^* = \pi_i$,
    $\omega_i \ge 0$.     
    \end{enumerate}
    \item \label{item:pgp22}
    \begin{enumerate}
        \item $\pi_i(j) = x_i'\delta_j$ for  $\delta_j = \pr{\frac{1}{n}\sum_{i=1}^n
    x_ix_i'}^{-1} \frac{1}{n} \sum_{i=1}^n
    \pi_i^*(j) x_i$
    \item $\pi_i^* = \pi_i$ if and only if $\pi_i^*(j) = x_i'\delta_j$ for all $j \in [J]$
    \item The implicit estimand is shown in \eqref{eq:implicit_estimand_pgp}. This
    estimand is generally contaminated (that is, $\omega_{ij}(\bpi, \ell) \neq 0$ for some
    $j \in [J]$ and $\ell    \not\in \br{0,j}$).

    \end{enumerate}
    \item \label{item:imbens-wooldridge} 
    \begin{enumerate}
        
        \item The implicit design
        equals the mean of $\pi_i^*$ among the units with the same $x_i$-value

        \item $\pi_i^* = \pi_i$ if and only if $\pi_i^*$ is the same for all units with
        the same $x_i$-value
        
        \item The implicit estimand is the ATE. That is, $\omega_i = \omega_i(\bpi, 1) =
        -\omega_i(\bpi,0) = 1$. 
    \end{enumerate}

    \item \label{item:kline} 
    \begin{enumerate}
        \item $\pi_i = \frac{\delta_0 + (x_i-\bar x)'\delta_1}{1+\delta_0 + 
        (x_i-\bar x)'\delta_1 }$, where $\delta_0, \delta_1$ are equal to the population
        weighted least-squares coefficients of $\pi_i^*/(1-\pi_i^*)$ on $x_i-\bar x$ and a
        constant, \emph{weighted by} $1-\pi_i^*$

        \item $\pi_i^* = \pi_i$ if and only if $\pi_i^*/(1-\pi_i^*) = \delta_0 +
        \delta_1'(x_i-\bar x)$
        \item When $\bpi^* = \bpi$, the implicit estimand is the ATT: $\omega_i =
        \omega_i (\bpi, 1) = \frac{\pi_i}{\frac{1}{n} \sum_{i=1}^n
        \pi_i}$.
    \end{enumerate}

    \item \label{item:athey-imbens} 
    \begin{enumerate}
        \item The implicit design is constant in $i$ and is unique, $\pi_i(\bw) = \frac{1}
        {n} \sum_
        {i=1}^n
        \pi_i^* (\bw)$
        \item $\pi_i^* = \pi_i$ if and only if $\pi_i^*$ is the same for all $i$
        \item The implicit estimand is shown in \eqref{eq:twfe_estimand}, which matches
        Theorem 1(ii) in \citet{athey2018design} under staggered adoption.\footnote{One
        might wish to further impose that the post-treatment weights are nonnegative
        (i.e., $\bomega_{it}(\bpi^*, \bw) \ge 0$ if $\bw_t = 1$). Failure of this
        condition implies that post-treatment units are very much used as comparisons for
        newly treated units 
        \citep{roth2023s,borusyak2024revisiting,de2020two,goodman2021difference}.
        \Cref{prop:forbidden_comparisons} shows that when $\cW$ only has two elements and
        includes a never-treated unit, all weights post-treatment are nonnegative, but
        such forbidden comparisons are possible in \emph{all} other cases.
        Additionally, \cref
 {sub:additional_panel} extends the analysis to one-way FE and event-study designs.}
    \end{enumerate}
\end{enumerate}
\end{restatable}

\Cref{thm:zoo} computes implicit designs and estimands for several specifications that
are analyzed piecemeal in the literature. Simply examining
 \cref{item:M1} shows the implicit design exists, is unique, and matches the form studied;
 the implicit estimand matches as well. \Cref{thm:zoo} is thus a set of \emph{converses} to the
 existing results---the regression estimand satisfies \cref{item:M} \emph{only if} $\bpi^* =
 \bpi$ and the target causal effect is the implicit estimand. These necessity results are
  new, to our knowledge, except for \cref{thm:zoo}(\ref{item:angrist98}).  These
  calculations, combined with \cref{thm:aipw_new}(1), also immediately imply regression
  estimators are equivalent to AIPW for the implicit estimand---regardless of
  \cref{item:M2}---for all but (\cref {item:kline}).

We next turn to a set of novel results by computing the implicit design for specifications
generalizing \cref{thm:zoo}(3)--(5). \Cref{sub:forbidden_interactions_} studies interacted
specifications like \cref{thm:zoo}(3)--(4), but requires causal interpretation for
\emph{both} the coefficients on $W_i$ and on $W_i x_i$.  \Cref{sub:panel} studies TWFE's
quasi-experimental interpretation with additional covariates. Both sets of results are
broadly negative: Exact quasi-experimental interpretation only exists in certain
knife-edge scenarios.

\subsection{Forbidden interactions}
\label{sub:forbidden_interactions_}

Assume $T=J=1$ and split $x_i$ into subvectors $x_{1i},x_{2i}$ (possibly overlapping).
Consider the specification \[Y_i =
\gamma_0 +
\tau_0 W_i + \tau_1' W_i x_{1i} + \gamma_1'x_{2i} + \epsilon_i.
\numberthis \label{eq:interaction}
\] Viewed as an outcome model, $\tau_0$ is the treatment effect for a baseline covariate
 value, and $\tau_1$ captures how treatment effect varies with $x_1$. One might hope that
 even without the outcome model, $\btau = (\tau_0, \tau_1')$ retains causal
 interpretation in a more flexible manner than the specification 
 \eqref{eq:angrist98intro} without interactions. This hope generally fails:
 Quasi-experimental interpretation of $\btau$ necessitates that \emph {both} $\pi_i^*$ and
 $\pi_i^* x_ {1i}$ be linear in $x_ {2i}$. When this fails, some contrast $\tau_0 +
\tau_1'x_{1}$ does not satisfy level independence.\footnote{This result was novel at the
 time of a working paper draft of this paper (arXiv:2407.21119v2, January 13, 2025);
 concurrent and independent work by \citet{zhao2025interacted}
 (arXiv:2502.00251, February 1, 2025) provides a similar result, whose necessity
 direction relies on \cref{prop:forbidden_interactions}.}

\begin{restatable}{prop}{propforbidden}
\label{prop:forbidden_interactions}
 Consider the specification \eqref{eq:interaction} and let $\btau =
(\tau_0, \tau_1')'$ be the coefficients of interest. Then the corresponding implicit
design exists if and only if, for some conformable matrices $(\Gamma_0, \Gamma_1)$ and
all $i$, 
$(\delta_0 + \delta_1'x_ {2i}) x_ {1i} =
\Gamma_0 + \Gamma_1 x_{2i}$, where $\delta_0, \delta_1$ are population projection
coefficients of $\pi_i^*$ on $x_ {2i}$. When this happens, the unique implicit design is
$\pi_i =
\delta_0
+
\delta_1' x_ {i2}$. Therefore, if $\btau$ satisfies \cref{item:M}, 
 then $\pi_i^* = \delta_0 + \delta_1'x_{2i}$ and $\pi_i^*x_{1i} = \Gamma_0 +
\Gamma_1 x_{2i}$. 
\end{restatable}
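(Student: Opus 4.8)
The plan is to write the level-irrelevance equations \eqref{eq:pop_level_irrelevance_condition} in block form and read off the characterization. Reorder the regression coordinates (harmless, by parametrization invariance of potential weights) so that the $W$-free terms $g_i := (1, x_{2i}')'$ come first and the treatment-interacted terms $(W_i, W_i x_{1i}')'$ come last, so that $\btau = (\tau_0, \tau_1')'$ is the bottom block of $\beta$ and $\Lambda = [\,0 \mid I_k\,]$ with $k = 1 + \dim(x_1)$. Then $z(x_i, 0) = (g_i', 0, 0')'$, $z(x_i, 1) = (g_i', 1, x_{1i}')'$, and, writing $h_i := (\pi_i^*, \pi_i^* x_{1i}')'$ and using $W_i^2 = W_i$ and $\E_{W_i \sim \pi_i^*}W_i = \pi_i^*$,
\[ G_n = \begin{pmatrix} A & B \\ B' & C \end{pmatrix},\quad A = \tfrac1n\sum_i g_i g_i',\quad B = \tfrac1n\sum_i g_i h_i',\quad C = \tfrac1n\sum_i \begin{pmatrix} \pi_i^* & \pi_i^* x_{1i}' \\ \pi_i^* x_{1i} & \pi_i^* x_{1i} x_{1i}' \end{pmatrix}. \]
Since $z(x_i, \cdot)$ is affine in $w \in \{0,1\}$, for any design $\bpi$ we have $\sum_w \pi_i(w) z(x_i, w) = \bar z_i(\pi_i)$ with $\bar z_i(\pi) := (g_i', \pi, \pi x_{1i}')'$, hence $\sum_w \pi_i(w)\brho_i(w) = \Lambda G_n^{-1}\bar z_i(\pi_i)$, which equals the bottom $k$-block of $G_n^{-1}\bar z_i(\pi_i)$. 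So the implicit-design equation for unit $i$ --- and, with $\bpi = \bpi^*$, the level-irrelevance condition of \cref{defn:level_irrelevance} --- is exactly that this bottom block vanishes.

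Next I would solve that condition. Writing $G_n^{-1}\bar z_i(\pi_i) = (u_i, v_i)'$, invertibility of $G_n$ gives $\bar z_i(\pi_i) = (A u_i + B v_i,\ B' u_i + C v_i)'$, so $v_i = 0$ iff $A u_i = g_i$ and $B' u_i = (\pi_i, \pi_i x_{1i}')'$ for some $u_i$. As $G_n$ is positive definite, its principal submatrix $A$ is invertible, forcing $u_i = A^{-1} g_i$, so the condition becomes $(\pi_i, \pi_i x_{1i}')' = B' A^{-1} g_i$. But $B' A^{-1} = (\tfrac1n\sum_i h_i g_i')(\tfrac1n\sum_i g_i g_i')^{-1}$ is the transpose of the coefficient matrix of the population least-squares projection of $h_i$ onto $g_i$, so $B' A^{-1} g_i$ is the fitted value of $h_i$ at unit $i$: in the statement's notation, $B' A^{-1} g_i = (\delta_0 + \delta_1' x_{2i},\ (\Gamma_0 + \Gamma_1 x_{2i})')'$. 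Thus the level-irrelevance/implicit-design equations for unit $i$ reduce to the pair $\pi_i = \delta_0 + \delta_1' x_{2i}$ and $\pi_i x_{1i} = \Gamma_0 + \Gamma_1 x_{2i}$.

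The proposition now follows. A solution $\pi_i$ exists simultaneously for all $i$ --- i.e., an implicit design exists --- if and only if, eliminating $\pi_i$, $(\delta_0 + \delta_1' x_{2i}) x_{1i} = \Gamma_0 + \Gamma_1 x_{2i}$ for every $i$; and when it does, the first equation pins down $\pi_i = \delta_0 + \delta_1' x_{2i}$, so the implicit design is unique. If $\btau$ is a vector of causal contrasts under $\bpi^*$, then \cref{thm:main} says $\bpi^*$ equals some implicit design, hence this unique one, giving $\pi_i^* = \delta_0 + \delta_1' x_{2i}$ and then $\pi_i^* x_{1i} = (\delta_0 + \delta_1' x_{2i}) x_{1i} = \Gamma_0 + \Gamma_1 x_{2i}$ by the existence condition. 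There is no deep obstacle here; the work is bookkeeping --- checking that $\Lambda G_n^{-1}$ extracts exactly the bottom block, that the blocks $A, B, C$ take the stated form under the design expectation (in particular, no $\bar x_1$-type term intrudes, the intercept and covariate terms being $W$-free), and that the inversion of $A$ and the projection identification are valid under the blanket assumption that $G_n$ is invertible. A consistency check: when $x_{1i}$ is empty the characterization collapses to $\pi_i^* = \delta_0 + \delta_1' x_{2i}$, recovering \cref{thm:zoo}(\ref{item:angrist98}).
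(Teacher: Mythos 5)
Your proof is correct. It reaches exactly the same pair of unit-level equations as the paper --- $\pi_i = \delta_0 + \delta_1'x_{2i}$ and $\pi_i x_{1i} = \Gamma_0 + \Gamma_1 x_{2i}$ --- and the final deductions (existence iff the compatibility condition, uniqueness from the first equation, and the causal-contrast implication via \cref{thm:main}) match the paper's. The route there is genuinely different in mechanics, though. The paper first applies its population Frisch--Waugh--Lovell theorem (\cref{thm:frisch-waugh}) to residualize $W_i$ and $W_i x_{1i}$ on $(1, x_{2i})$, writes the potential weights as $\tilde G_n^{-1}$ times the residualized covariate transform, and then uses positive definiteness of the Schur complement $\tilde G_n$ to strip off the inverse and rearrange. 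You instead keep the untransformed Gram matrix, observe that level irrelevance says the bottom block of $G_n^{-1}\bar z_i(\pi_i)$ vanishes, and characterize that via the linear system $A u_i = g_i$, $B'u_i = (\pi_i, \pi_i x_{1i}')'$ --- effectively inlining the FWL argument, since the block-inversion identity is how \cref{thm:frisch-waugh} is itself proved. Your version is self-contained (it needs only that the principal submatrix $A$ is invertible, not the explicit form of the residualized weights or the positive definiteness of $\tilde G_n$); the paper's version has the advantage that the FWL-transformed potential weights it produces are reused elsewhere, e.g.\ in the implicit-estimand computations of \cref{thm:zoo} and \cref{prop:interact_with_t}. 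One cosmetic point: your appeal to ``parametrization invariance'' for the coordinate reordering is covered by \cref{thm:reparameterization}, so that step is licensed by the paper's own machinery.
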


The necessary condition for interpreting $\btau$ as minimally quasi-experimental is that \emph
{both} the propensity score $\pi_i^*$ and its interaction with the covariates $\pi_i^*
x_ {1i}$ are linear functions of $x_{2i}$. When $x_{1i}$ is included in $x_{2i}$, this
condition is unlikely to hold in general, as $\pi_i^* x_{1i}$ would involve nonlinear
transformations of $x_ {1i}$ and thus cannot be linear. This condition does hold if
$\pi_i^*$ is constant or if $x_{1i}$ represents a saturated categorical variable and $x_
{2i}$ contains all other covariates interacted with $x_ {1i}$.\footnote{That is, $x_
{1i}$ contains mutually exclusive binary random variables, and $x_{2i}$ contains $x_
{1i}$, some set of other covariates $x_{3i}$, and all interactions $x_{3ik}x_{1i\ell}$.}

Why can we not interpret $\tau_0 + \tau_1'x_ {1i}$ as a linear approximation of the
conditional average treatment effect? One could think of
\eqref{eq:interaction} as two regressions, one on the treated $W=1$ and one on the
untreated $W=0$. Both regressions are indeed best linear approximations to $\E [Y(1) \mid
x, W=1]$ and $\E[Y(0) \mid x, W=0]$, which are equal to the mean potential outcomes $\E [Y
(1) \mid x],
\E[Y(0) \mid x]$ under unconfoundedness. The contrast $\tau_0 + \tau_1'x_1$ is then the
 difference of the fitted values of these two regressions. However, the two regressions
 are best linear approximations with respect to  {different distributions} of
 the covariates. Thus, their difference is not a best
 linear approximation to the conditional average treatment effect. Shifting $Y(1)$ and $Y(0)$ by the same arbitrary amount
 causes asymmetric behavior in the two regressions.

When $x_{1i} = x_{2i} = x_i$, this result supplements \cref{thm:zoo}(\ref
{item:imbens-wooldridge})--(\ref{item:kline}). One could observe that \cref{thm:zoo}(\ref{item:kline}) is asymmetric: Flipping treatment and control yields an exact
quasi-experimental interpretation for the ATU where the implicit design is that the 
\emph{inverse} odds is linear in $x_i$. \Cref{prop:forbidden_interactions} then shows that
 because different contrasts necessitate different and incompatible designs like with
 ATT/ATU, requiring all of them to be causal rejects \cref{item:M1}.\footnote{\Cref
 {sub:interact_more} shows formally that requiring
\cref{item:M1} for the contrast $\tau_\lambda = \lambda_0 \tau_0 + \lambda_1'\tau_1$ in
this regression implies implicit designs $\pi_\lambda$, generally fractional-linear in
$x_i$, that depend on the contrast $\lambda_0, \lambda_1$.

In particular, these results are relevant for the ATE contrast $\tau_0+\tau_1'\bar x$,
which is separately studied in Theorem 1 in \citet{chattopadhyay2023implied}. \citet
{chattopadhyay2023implied} show that if we insist that $\tau_0+\tau_1'\bar x$ equal the
ATE, then we need both propensity odds and reciprocal odds to be linear. In contrast, we
show that if $\tau_0+\tau_1'\bar x$ is only required to be some treatment effect contrast
(not necessarily the ATE), the implicit design exists but is fractional-linear $\pi_i =
 \frac{\theta_{0} + \theta_1'(x-\bar x)}{1-\Gamma_2'(x-\bar x)}$. However, the
  requirement \cref{item:M2} that $\pi_i = \pi_i^*$ then imposes additional (unpleasant)
  restrictions on
  $
  (\theta_0, \theta_1,
 \Gamma_2)$, formalized in \cref{rmk:ate_wrong}.} Without saturated covariates, \cref{thm:zoo}(\ref{item:kline}) shows that
  particular contrasts (e.g., the ATT) maintain causal interpretation, at the expense of
  others.

Taken together, interacted regressions are \emph{less} robust in terms of \cref{item:M}
than the simple regression \eqref{eq:angrist98intro}, contrasting with intuition in 
\citet{lin2013agnostic} and \citet{negi2021revisiting}. The uninteracted
regression introduces variance weighting for the estimand, but maintains validity under a
simple design. The interacted regression removes the weighting when covariates are
saturated but loses exact quasi-experimental interpretation in general.

 Is there a simple regression that targets the ATE under linear propensity scores?
 Unfortunately, \cref{prop:no-ate-regression} shows that the answer is \emph{no}, at
 least not with specifications that are linear in $[1, x_i, W_i, W_ix_i]$. As a result,
 targeting the ATE under the same implicit design as
 \eqref{eq:angrist98intro} necessitates moving beyond regression estimators.

 \subsection{The fragility of quasi-experimental TWFE}
\label{sub:panel}

\Cref{thm:zoo}(\cref{item:athey-imbens}) shows TWFE is minimally quasi-experimental only
 under totally randomized treatment paths. We now show that adding time-varying covariates
 often destroys even that.

\begin{restatable}{prop}{proptimevarying}
\label{prop:time_varying_cov}
Assume $\cW \subset \br{0,1}^T$. Consider the regression $\bY_{it} =
\alpha_i + \gamma_t + \tau \bW_{it} + \delta'\bx_{it}$ where $\tau$ is the coefficient of
 interest. Let $\beta_{w\to x}$ be the population projection coefficient of $W_{it}$ on
 $x_{it}$ under $\bpi^*$, with individual and time fixed effects. If an implicit design
 exists, then, for $x_i \in \R^{T \times \dim(x_{it})}$ that stacks the covariates $x_
 {it}$,
\[\pr{\bx_i - \frac{1}{n}\sum_{j=1}^n \bx_j}\beta_{w\to x} \in \Span(\cW \cup \br{1_T})\,
\text{ for all $i=1,\ldots,n$}.\] When $\beta_{w\to x} = 0$, if $\cW$ contains linearly
independent vectors whose span excludes $1_T$, the implicit design is uniquely equal to $\pi_i (\bw) =
\frac{1}{n} \sum_{i=1}^n \pi_i^*(\bw)$ for all $i$ as in \cref{thm:zoo}(\ref{item:athey-imbens}).

\end{restatable}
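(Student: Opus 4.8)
The plan is to compute the potential weights for the TWFE-with-covariates specification explicitly and then impose the level-irrelevance equations \eqref{eq:pop_level_irrelevance_condition}, extracting the stated necessary condition. First I would apply the Frisch--Waugh--Lovell transform (invoking the parametrization-invariance remark) to partial out the individual and time fixed effects from both $\bW_{it}$ and $\bx_{it}$. Write $\ddot{\bW}_{it}$ and $\ddot{\bx}_{it}$ for the within-transformed objects (double-demeaned across $i$ and $t$ appropriately). The coefficient $\tau$ is then the coefficient in the simple regression of $\ddot{\bY}_{it}$ on $\ddot{\bW}_{it}$ after also partialling out $\ddot{\bx}_{it}$; equivalently, $\tau$ is the coefficient from regressing $\ddot{\bY}$ on the residual $\tilde{\bW}_{it} \equiv \ddot{\bW}_{it} - \ddot{\bx}_{it}'\beta_{w\to x}$, where $\beta_{w\to x}$ is exactly the population projection coefficient named in the statement. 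The potential weight $\brho_{it}(\bw)$ is then proportional to $\tilde{\bw}_t$ evaluated along treatment path $\bw$, i.e. to the $t$-th component of $\bw$ minus its within-demeaning minus $\ddot{\bx}_{it}'\beta_{w\to x}$, normalized by $\frac1n\sum_i\sum_t \tilde W_{it}^2$.

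Second, I would write out the level-irrelevance equation for unit $i$: $\sum_{\bw\in\cW}\pi_i^*(\bw)\brho_{it}(\bw) = 0$ for every $t$. Because the part of $\brho_{it}(\bw)$ coming from $\ddot{\bx}_{it}'\beta_{w\to x}$ does not depend on $\bw$, it factors out of the sum as $-(\text{const})\cdot \ddot{\bx}_{it}'\beta_{w\to x}$, while the part coming from $\bw$ itself (and its time-demeaning) contributes $\sum_{\bw}\pi_i^*(\bw)$ times something in $\Span(\cW\cup\{1_T\})$ — the time fixed effects contribute the $1_T$ direction and the $\bw$-paths contribute directions in $\Span(\cW)$. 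Collecting terms, the equation for unit $i$ says precisely that the vector $\big(\bx_i - \frac1n\sum_j\bx_j\big)\beta_{w\to x}$ (which is what $\ddot{\bx}_{it}$ averages to over $t$ once the across-$i$ demeaning is tracked carefully) must lie in $\Span(\cW\cup\{1_T\})$. This yields the stated necessary condition. I should be careful about exactly which demeaning survives: the individual fixed effect removes the $i$-specific mean over $t$, so what remains of the covariate term is $\bx_{it}$ minus its own time-average minus the cross-sectional average of that — and the claimed condition $(\bx_i - \frac1n\sum_j\bx_j)\beta_{w\to x} \in \Span(\cW\cup\{1_T\})$ is the clean way to state the residual obstruction.

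Third, for the $\beta_{w\to x}=0$ case, the specification collapses (in terms of potential weights) to the plain TWFE specification of \cref{thm:zoo}(\ref{item:athey-imbens}): the covariates get fully absorbed and contribute nothing to $\brho$. Under the hypothesis that $\cW$ contains linearly independent vectors whose span excludes $1_T$, I would simply cite \cref{thm:zoo}(\ref{item:athey-imbens}) to conclude that the implicit design is unique and equal to $\pi_i(\bw)=\frac1n\sum_j\pi_j^*(\bw)$.

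The main obstacle I anticipate is the bookkeeping in the FWL step: tracking precisely how the double within-transformation (individual $+$ time fixed effects) acts on $\bx_{it}$ and on the indicator-of-path structure of $\bW_{it}$, so that the $\bw$-dependent part of $\brho$ lands exactly in $\Span(\cW\cup\{1_T\})$ and the $\bw$-independent part reduces to the single vector $(\bx_i - \bar{\bx})\beta_{w\to x}$ claimed in the statement. A secondary subtlety is verifying that $\beta_{w\to x}$ as defined by the full fixed-effects projection is the right object appearing in the residualized treatment — this is standard FWL but worth stating carefully since it is what makes the necessary condition interpretable (covariates matter iff $\beta_{w\to x}\neq 0$). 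Once these are pinned down, the remaining algebra is routine.
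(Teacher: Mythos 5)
Your proposal is correct and follows essentially the same route as the paper's proof: FWL to reduce to the regression on $\ddot{\bW}_{it} - \ddot{\bx}_{it}'\beta_{w\to x}$, rearranging the level-irrelevance equations to $\ddot{\bx}_i\beta_{w\to x} = \sum_{\bw}\pi_i(\bw)\ddot{\bw} \in \Span(\cW\cup\{1_T\})$, noting the leftover demeaning terms are absorbed into the $1_T$ direction, and citing \cref{thm:zoo}(\ref{item:athey-imbens}) for the $\beta_{w\to x}=0$ case. The bookkeeping subtleties you flag are exactly the ones the paper handles, and they resolve as you anticipate.
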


An implicit design exists only if a linear combination of demeaned covariates lies in the
span of $\cW$ and $1_T$ for every unit. This condition arises because we essentially need
that the mean treatment $\E[\bW_i] = \sum_{\bw \in \cW}
\pi_i^*
(\bw) \bw$ is \emph{exactly} described by two-way fixed effects with time-varying
covariates, analogous to the intuition for \eqref{eq:angrist98intro}. This then restricts
the space of covariates, since they need to generate vectors that lie in the linear span
of $\cW$.

With staggered adoption, $\Span\pr{\cW \cup
\br{1_T}}$ is the subspace of vectors that are piecewise constant between adjacent
adoption dates. This subspace is highly restrictive if there are relatively few adoption
dates. If $\beta_ {w\to x} \neq 0$, it is thus knife-edge that $
\pr{\bx_i - \frac{1} {n}\sum_ {j=1}^n \bx_j}
\beta_{w\to x}$ happens to be located in that subspace, unless columns of $\bx_i$ happen
 to be piecewise constant over $t$ as well.\footnote{This \emph{is} plausible if the
 time-varying covariates are interactions of fixed covariates with the time fixed
 effects ($\bx_ {it}'\delta = x_i'\delta_t$). \Cref{prop:unit-covariate-interact} shows
 that for this specification, causal interpretation is possible necessarily under linear
 generalized propensity scores $\pi_i(\bw) = \delta_0(\bw) +
 \delta_1(\bw)'x_i$.} On the other hand, if $\beta_ {w \to x}$ is
 zero under $\bpi^*$, including the covariates makes no difference to the coefficient on
 $W_{it}$. Thus TWFE with time-varying covariates rarely retains a
 quasi-experimental interpretation. Researchers using such a specification either believe
 that the covariates do not affect treatment assignment and are irrelevant for
 identification, or they are embedding outcome modeling assumptions.

Finally, a similar fragility afflicts regressions with imbalanced panels. Such a
regression only has a quasi-experimental interpretation when the missingness pattern is
uncorrelated with the treatment assignment pattern, in which case the design must  again
be total randomization of treatment paths. We detail this result in \cref{asub:imbalance}.

\section{Extension: Two-stage least-squares}
\label{sec:extension_two_stage_least_squares}

Similar ideas to \cref{thm:main} extend to two-stage least-squares (TSLS): We can use
level irrelevance to recover some design---now a distribution of the \emph
{instrument} $W_i$---under which TSLS estimands have a causal interpretation in the
instrument $W$ (as weighted averages of intent-to-treat effects). Interestingly, the
implicit estimand also provides necessary conditions on \emph{compliance behavior} for
TSLS to estimate properly weighted causal effects in terms of the endogenous treatment.

For instance, examining the implicit estimand for a binary treatment, binary instrument
TSLS regression recovers (strong) \emph{monotonicity} as a necessary condition \citep
{imbensangrist,sloczynski2020should}. Doing so for TSLS with multiple treatments yields a
compliance restriction in \citet{bhuller20242sls}. These results are recovered simply by
enumerating which compliance types for each unit are consistent with the implicit
estimand assigning proper weights to said unit's potential outcomes in the treatment.

Consider the following TSLS specification of a scalar outcome on a covariate transform
\[
    Y_i = t(D_i, x_i)'\beta + \epsilon_i,
\]
instrumenting $t(D_i, x_i)$ with $z(W_i, x_i)$. Here, $D_i = d^*_i(W_i) \in \cD$ is the
endogenous treatment, $d^*_i(\cdot)$ is the \emph{compliance type}  for unit $i$, and $t
(\cdot,
\cdot), z (\cdot, \cdot)$ are again known transforms. Assume the exclusion restriction
 holds so that $y_i(d_i^* (w), w) = y_i(d_i^*(w))$. In this notation, a binary treatment,
 binary instrument TSLS regression can be represented by $t(D_i, x_i) = [1, D_i]'$ and
 $z (W_i, x_i) = [1, W_i]'$.

We extend steps \cref{item:i}--\cref{item:iv} in \cref{sub:leading}. For \cref{item:i},
define the TSLS estimand  $\btau = \Lambda\beta$ as \[
    \btau = \Lambda \pr{G_{tz} G_{zz}^{-1} G_{zt}}^{-1} \pr{G_{tz} G_{zz}^{-1} \frac{1}
    {n} \sum_{i=1}^n \E_{W_i\sim \pi_i^*}[z(W_i, x_i) y_i(W_i)]},
\]
where $ G_{tz} \equiv \frac{1}{n}\sum_{i=1}^n \E_{W_i \sim \pi_i^*}\bk{ t(d_i^*(W_i), x_i)
z(W_i, x_i)' } = G_{zt}'$ and $G_{zz} \allowbreak  \equiv \allowbreak
\frac{1}
{n}\sum_ {i=1}^n
\allowbreak\E\bk{ z(W_i, x_i) z(W_i, x_i)'} $. This representation
 simply replaces all averages in the TSLS estimator with expectations over the instrument
 $W_i$. Let $H_n \equiv \pr{G_{tz} G_{zz}^{-1} G_{zt}}^{-1} G_{tz} G_{zz}^{-1}$ be
 the analogue of the inverse Gram matrix $G_n^{-1}$.\footnote{Indeed, if $t(d_i(W_i), x_i)
 = z (W_i, x_i)$ so that the TSLS specification is equivalent to OLS, then $H_n$ is
 exactly the inverse Gram matrix.} Like $G_n^{-1}$, $H_n$ is known in population and
 consistently estimable in sample. Thus, we similarly treat $H_n$ as known.

Next, for \cref{item:ii}, write $\btau$ in the form of a linear contrast:
\[
    \btau = \frac{1}{n} \sum_{i=1}^n \sum_{w \in \cW} \pi_i^*(w)\Lambda H_n z(w, x_i) y_i
    (w) \equiv \frac{1}{n} \sum_{i=1}^n \sum_{w \in \cW}  \pi_i^*(w) \underbrace{\brho_i
    (w)}_{k \times 1} y_i(d_i^*(w)).
\]
We define \emph{potential weights} analogously by $\brho_i(w) \equiv \Lambda H_n z(w,
x_i)$. For \cref{item:iii}--\cref{item:iv}, the requirement that $\btau$ is minimally
quasi-experimental continues to be reasonable.
Maintaining this restriction again yields
\eqref{eq:pop_level_irrelevance_condition} for $\pi_i^*(w)$, whose solutions we
continue to call \emph{implicit designs}---they are again plausible candidates for the
true design $\pi_i^*(\cdot)$ in the sense of \cref{item:M1}.

Just-identified TSLS specifications have enough equations\footnote{For a TSLS
specification to be non-collinear, an instrument that takes $J+1$ values can support
$k\le J$ endogenous coefficients of interest. Since $\pi_i(\cdot)$ is a $J$-dimensional
unknown vector, we need $k \ge J$ restrictions to have a unique implicit design. } to pin
down an implicit design $\pi_i (\cdot)$. If there are more distinct instrument values
than coefficients of interest, then we may have too few restrictions on $\pi_i^*(w)$ from
$\btau$ alone. However, it may be reasonable to also impose level irrelevance for certain
first-stage coefficients, which would add more restrictions to recover a unique implicit
design.

The estimand for TSLS depends on units' unknown compliance types $d_i^*
(\cdot)$. Therefore, interpreting the estimand as a causal effect of the treatment $d$
requires restricting compliance patterns. This can be operationalized as follows. Given
an implicit design $\pi_i (\cdot)$, the corresponding implicit estimand $\btau$ can be
written as a weighted sum of individual potential outcomes, which can be grouped into
treatment conditions: \begin{align*}
\btau &=  \frac{1}
{n}\sum_{i=1}^n \sum_{w \in \cW} \bomega_i(w; \bpi) y_i(d_i^*(w)) \quad \text{  where  }
\bomega_i (w;
\bpi) \equiv
\pi_i(w) \brho_i(w) \numberthis 
\label{eq:iv-implicit-estimand}
\\
&= \frac1n \sum_{i=1}^n \sum_{k \in \cD} \underbrace{\pr{\sum_{w: d_i^*
(w)=k} \bomega_i(w; \bpi)}}_{ \equiv \bomega^*_i
(k; \bpi, d_i^*)} y_i(k) \equiv \frac1n \sum_{i=1}^n \sum_{k\in \cD}
\bomega^*_i
(k; \bpi, d_i^*) y_i(k). \numberthis \label{eq:iv-implicit-estimand-grouped}
\end{align*}
\eqref{eq:iv-implicit-estimand} represents the estimand as an aggregation of  $w$-on-$y$
 causal effects.
\eqref{eq:iv-implicit-estimand-grouped} then groups together $w$ values that lead to the
 same $d_i^*(w) = k$, thereby translating \eqref{eq:iv-implicit-estimand} to $d$-on-$y$
 effects. In
\eqref{eq:iv-implicit-estimand-grouped}, the weight on the $k$\th{} treatment is
 $\bomega^*_i(k; \bpi, d_i^*) \equiv \sum_{w: d_i^*(w)=k} \bomega_i(w; \bpi)$, which is
 known given $d_i^* (\cdot)$. If $\btau$ were to have a causal interpretation, we can then
 \emph{enumerate} all compliance types $d_i^*$ for each unit and check which ones lead to
  weights $\bomega^*_i (k; \bpi, d_i^*)$ that are consistent with the causal
  interpretation.

To illustrate, consider a particular class of TSLS specifications: For $x_i$ that includes
a constant, consider a specification with $J+1$ values of an unordered
treatment $\mathcal D = \br{0,\ldots, J}$
\begin{align*}
t(d, x_i) &= [\one(d = 1),\ldots, \one(d=J), x_i']' \\
z(w, x_i) &= [\one(w = 1),\ldots, \one(w=J), x_i']'. \numberthis \label{eq:tsls-setup}
\end{align*}
In this TSLS specification, the coefficients of interest are $\btau = (\tau_1,\ldots,
\tau_{J})'$, where $\tau_k$ is the coefficient on $\one(d=k)$, meant to capture the causal
effect of $d=k$ relative to $d=0$.

Examining entries in \eqref{eq:iv-implicit-estimand-grouped}, we have
\[
    \tau_k = \frac{1}{n} \sum_{i=1}^n  \sum_{k' = 1}^{J} \omega_{i}^{(k,k')}(d_i^*) (y_i(k') -
    y_i(0))
\quad \text{ for  } \quad 
    \omega_i^{(k, k')} \equiv (\bomega^*_{i}(k'; \bpi, d_i^*))_k. \numberthis 
    \label{eq:estimand_tsls}
\] If $\tau_k$ is to be interpreted as a causal effect of $d=k$ relative to $d=0$, then we
 should at least restrict $\omega_i^{(k,k)} \ge 0$ and $\omega_i^{(k,k')} = 0$ for
 $k \neq k'$. If so, $\tau_k$ equals a convex aggregation of $y_i(k) - y_i
 (0)$ that is not contaminated by treatment effects of some other arm   $y_i(\ell) - y_i
 (0)$. If this is true and if $\bpi^* = \bpi$, following \citet{bhuller20242sls}, we say
 that TSLS
\emph{assigns proper weights}.\footnote{When $J=2$, $\btau$ is minimally quasi-experimental and
assigns proper weights if and only if it is weakly causal in the sense of 
\citet{blandhol2022tsls}.}

Given $\omega_i^{(k,k')}(\cdot)$, for each unit, we can then enumerate all compliance
types $d_i(\cdot)$ and retain those consistent with proper weights. Analogous to implicit
designs, we refer to each element of the following set as an \emph{implicit compliance
profile}: For $\mathcal D = \br{0,\ldots, J}$,\[
    \br{ (d_1(\cdot),\ldots, d_n(\cdot)): \text{ for all $i$, $k\neq k' \in \mathcal D$},
        \omega_i^ {(k, k)} (d_i) \ge 0
        \text{ and  } \omega_i^{(k, k')}(d_i) = 0
    } \numberthis \label{eq:implicit_compliance_profile}.
\]
The following proposition summarizes these results:

\begin{restatable}{prop}{propivmain}
\label{iv:main}
    In TSLS, $\btau$ is minimally quasi-experimental if and only if
    \begin{enumerate}
        \item An implicit design $\bpi$ exists
        \item $\bpi^* = \bpi$.
    \end{enumerate} Additionally, $\btau$ from the specification \eqref
     {eq:tsls-setup} assigns proper weights if and only if the following holds for the
     implicit estimand under $\bpi$:
    \begin{enumerate}[resume]
        \item An implicit compliance profile $d_1(\cdot), \ldots, d_n(\cdot)$ in
        \eqref{eq:implicit_compliance_profile} exists

        \item Some implicit compliance profile $d_1(\cdot), \ldots, d_n(\cdot)$
        is equal to $d_i^*(\cdot),\ldots, d_i^*(\cdot)$.
    \end{enumerate}
\end{restatable}

Like \cref{thm:main}, \cref{iv:main} separates requirements for causal interpretation into
objective and subjective components. We can directly compute items (1) and
(3) since the potential weights, implicit design, and implicit estimand are known in the
population. Results from this computation are plausible candidates for items (2) and
(4)---if no such candidate is found, then causal interpretation must be rejected.

These computations are informative. To illustrate, simply computing the implicit design
and compliance profiles recovers necessary conditions for several recent results in the
instrumental variables literature. To introduce these, we first give terminology to
compliance
patterns.

\begin{defn}[Compliance restrictions]
     \begin{itemize}
         \item With $J+1=2$, we say that a profile $d_1(\cdot), \ldots,
         d_n
          (\cdot)$ satisfies \emph{strong monotonicity} if either $d_i(1) \ge d_i(0)$ for
          all $i$ or $d_i(1)
\le d_i (0)$ for all $i$. 

\item With $J+1>2$, for $k = 1,\ldots, J$, we say that $d(\cdot)$ is a $k$-always taker
if
$d (\cdot) = k$; it is a $k$-never taker if $d(\cdot) \neq k$; otherwise we say $d(\cdot)$
is a $k$-complier. We say $d(\cdot)$ is a \emph{full complier} if it is a $k$-complier for
all $k$. 

\item We say that a compliance profile $ (d_1 (\cdot),\ldots, d_n(\cdot))$ satisfies \emph
 {common compliance} if for any $k = 1,\ldots, J$ and any two $k$-compliers $d_i
 (\cdot), d_j(\cdot)$, we have $d_i(w) = k$ iff $d_j(w) = k$.  

\item We say that a compliance profile satisfies \emph{extended monotonicity} if there
 exists some permutation $f (\cdot)$ of the instrument values $\br{0,\ldots, J}$ such
 that, for all $i$, either (i) for all $w$, $d_i (f (w))
        \in \br{0,w}$ or (ii) $d_i(\cdot)$ is constant.\footnote{For three instrument
         values, up to permutation of the instruments, extended monotonicity limits $d
         (\cdot)$ to one of six types $(d(0), d(1), d(2)) \in \br{(000), (111),(222),
         (010),(002), (012)}$---for, respectively, never-taker, 1-always-taker,
         2-always-taker, 1-complier, 2-complier, or full complier \citep
         {bhuller20242sls}.

         This condition is a generalization of Assumption 3 in
         \citet{behaghel2013robustness}, who call this assumption \emph {extended
         monotonicity}. Indeed, the condition is equivalent to that, for all $i$, $w\neq
         0$ and $w', w'' \neq w$, $
             \one\br{d_i(f(w)) = w} \ge \one\br{d_i(f(w')) = w} = \one\br{d_i(f(w'')) = w}.
         $}
     \end{itemize}
 \end{defn} 

\begin{restatable}{prop}{proptsls}
    
\label{item:tsls} Consider the TSLS specification in \eqref{eq:tsls-setup}.
\begin{enumerate}
    \item The unique implicit design satisfies $\pi_i(j) =
    x_i'\delta_j$
    for
    $\delta_j = \pr{
   \sum_{i=1}^n
    x_ix_i'}^{-1} \sum_{i=1}^n
    \pi_i^*(j) x_i$.
    \item When $\bpi^* = \bpi$, the implicit compliance profiles relative to the implicit
    design satisfy:
    \begin{enumerate}
        \item When $J+1=2$, all implicit
        compliance profiles satisfy strong monotonicity.
        \item When $J+1>2$ and $x_i$ is a
        constant, all implicit compliance profiles satisfy {common
         compliance}; all implicit compliance profiles containing a full complier
         satisfy extended monotonicity.
    \end{enumerate}
\end{enumerate}
\end{restatable}

\Cref{item:tsls} recovers several results for TSLS.
\Cref{item:tsls}(1) and (2)(a) recover the necessary direction for Corollary 3.4 in
\citet{sloczynski2020should} and Theorem 1 in \citet{blandhol2022tsls}: With binary
treatment, monotonicity is required for interpreting the TSLS coefficient causally, in the
sense that it assigns proper weights.\footnote {Theorem 1 in \citet{blandhol2022tsls}
imposes exogeneity and monotonicity and finds that $\btau$ is minimally quasi-experimental
and has proper weights if and only if $\pi_i^*$ is linear. \Cref {item:tsls}(1) and (2)(a)
in turn show that if $\btau$ is minimally quasi-experimental and has proper weights, then
linear propensity scores \emph{and monotonicity} are satisfied (see
\cref{sub:relation_to_blandhol} for details). Likewise, Corollary 3.4 in
\citet{sloczynski2020should} shows that strong monotonicity implies proper weights, but
not the converse. } Without covariates, this is a converse to \citet {imbensangrist}.

\Cref{item:tsls}(2)(b) recovers---and \emph{corrects}---Propositions 5 and B.1 in \citet
 {bhuller20242sls}. Proposition 5 in \citet{bhuller20242sls}  claims that if TSLS assigns
 proper weights, then compliance satisfies extended monotonicity. Unfortunately, just
 assuming TSLS assigns proper weights does not suffice for this conclusion (see
 \cref{sub:relation_to_blandhol} for a counterexample).\footnote{We are grateful to Henrik
 Sigstad for discussion. } Instead, the essence of their argument implies that compliance
 profiles satisfy common compliance; their conclusion in turn stands if it is known that
 some full complier exists. Both implications are captured by \cref{item:tsls}(2)(b).

\section{Empirical illustrations}
\label{sec:empirical}

\subsection{An AI-assisted census of AER and AEJ:AE}

Our framework is motivated by the impression that applied researchers predominantly
estimate causal effects by regression, justified quasi-experimentally by arguing that
treatment is as-good-as-randomly assigned. To assess the pervasiveness of this practice,
we conduct a \emph{census} of papers published in the \emph{American Economic Review} and
\emph {American Economic Journal: Applied Economics} from January 2020 to June 2026,
totaling 1,051 papers. Specifically, an AI agent (Claude Haiku 4.5 or Sonnet 4.6, across
different runs) reads the full main text of each paper and answers questions according to
a rubric about that paper's empirical methodology. Of course, this exercise is an
automated review, and thus should be interpreted as a rough but informative snapshot of
empirical practice.\footnote{See the replication repository
\href{https://github.com/jiafengkevinchen/potential-weights-public}
{github.com/jiafengkevinchen/potential-weights-public} for details of this exercise,
including prompts and per-paper results.}

\begin{figure}[htb]
    \centering
    \includegraphics[width=\textwidth]{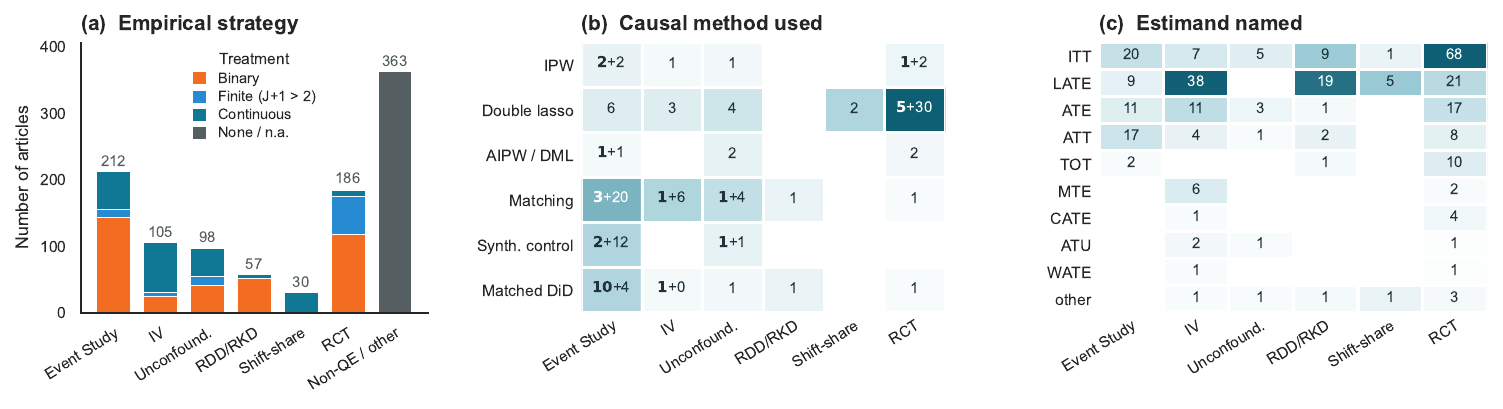}

\begin{proof}[Notes]
    Panel B returns AI classification for whether the listed method is used in a paper. If
    the AI agent judges that the method is used in a headline result, then the
    corresponding counts are in bold.
\end{proof}

    \caption{AI classification of empirical strategies and settings in AER and AEJ:AE 
    (January
    2020--June 2026)}
    \label{fig:methods}
\end{figure}

We task AI agents to classify each paper's empirical strategy according to standard
quasi-experimental categories and record (i) whether the paper uses non-regression methods
in causal inference  and (ii) whether the paper names an explicit estimand (\cref
{fig:methods}). The vast majority of quasi-experimental papers do not implement causal
inference methods outside of regression: Among papers arguing unconfoundedness or
exogenously assigned IV, AI agents report that only four use (augmented)
inverse-propensity weighting \citep{hirano2003efficient} or double/debiased machine
learning \citep{chernozhukov2018double}; 12 papers feature (propensity-score) matching,
but only 2 use matching in their headline results. Likewise, practitioners are also not
explicit about their target estimand: Only 9 out of 98 papers classified under
unconfoundedness report a target estimand; 48 IV papers do, but 38 of them mention LATE,
perhaps inappropriately so per the results of \citet{blandhol2022tsls}. Essentially no
paper names weighted average effects, even though those are typically what regression or
IV estimates under heterogeneity.

Our framework is primarily applicable to studies that argue for quasi-experimental
assignment of a treatment (Unconfound.) or an instrument (IV) with a finite-valued
treatment, which corresponds to 85 papers, or 18\% of all quasi-experimental papers. This
calculation excludes event studies, but 46\% of those are actually classified to contain
language that suggests treatment-based identification, for which the analysis of
\cref{thm:zoo}(\cref{item:athey-imbens}) applies.

Furthermore, the negative results on interacted regressions
and difference-in-differences 
(\cref{prop:forbidden_interactions,prop:time_varying_cov,prop:unbalanced}) would
immediately
 reject exact quasi-experimental interpretation for a sizable proportion. Among papers
 classified as using a quasi-experimentally assigned binary treatment/instrument, 90\%
 report some heterogeneity analysis. 42\% run the main specification separately by
 covariate category, but 48\% solely interact the treatment as in \eqref
 {eq:interaction}, possibly subject to the negative result of 
\cref{prop:forbidden_interactions}. Among the event study papers, at least 40\% are
 classified to include time-varying covariates and 36\% are classified to involve
 unbalanced panels. The results in \cref{prop:unbalanced,prop:time_varying_cov} show that
 these specifications are quasi-experimental essentially only under total random
 assignment of treatment, and implicit designs may frequently not exist.

\subsection{Diagnostics for  the implicit design}

We next turn to a battery of nine papers that estimate \eqref{eq:angrist98intro} like 
\citet{blakeslee2020way}\footnote{\citet{ang2023birth,dippel2021leadership} are IV
 applications, for which we take the reduced form specification. 

\citet{blakeslee2020way} report weighted least squares estimates but
  state that their results are robust to the weighting. We thus report OLS versions of
  their exercises while replicating their WLS estimates.

Some covariates used by \citet{ang2023birth} are not included in the replication data. We
     drop these covariates from the specification. Doing so changes the coefficient from
     0.146 (0.026) to 0.179.

     In \citet{cage2023heroes}, treatment can be fractional for 4,599 of 34,947
     observations. We keep the treatment binary by dropping these observations. } and
     illustrate our diagnostics and recommendations in the context of these papers. We
     start by assessing their implicit designs according to \cref{thm:main,thm:aipw_new}.
     We estimate implicit designs by computing
\eqref{eq:implicit_design_formula_binary}, substituting in the sample Gram matrix $
\frac{1}{n} \sum_{i=1}^n z(x_i, W_i)z(x_i, W_i)'$ for $G_n$. For the simple
 specification \eqref{eq:angrist98intro}, this is simply the sample OLS regression of the
 realized $W$ on covariates. For each estimated implicit design $\hat\bpi$, we then test
 whether it equals some implicit design that satisfies \cref{item:M1} and \cref
 {item:M2}. 

\begin{table}[htb]
    \caption{Diagnostics for the implicit design}
    \label{tab:diagnostics_implicit_design}
    \centering
    \small
    \begin{tabular}{lrrrrr}
\toprule
Paper & $N$ & \% $\hat\pi \notin [0,1]$ & \% $\hat\pi \in [0.1,0.9]$ & RESET $p$ & Permutation $p$ \\
\midrule
Xu (2018) & 3,510 & 33.4 & 28.6 & $<0.001$ & -- \\
Blakeslee et al.\ (2020) & 785 & 4.2 & 88.7 & $<0.001$ & -- \\
Dippel and Heblich (2021) & 261 & 23.0 & 41.4 & $<0.001$ & 0.005 \\
Herrnstadt et al.\ (2021) & 41,720 & 0.2 & 99.6 & $<0.001$ & -- \\
Ang (2023) & 2,998 & 7.5 & 78.5 & $<0.001$ & -- \\
Cag\'e et al.\ (2023) & 30,342 & 39.5 & 18.3 & $<0.001$ & -- \\
Covert and Sweeney (2023) & 1,297 & 25.6 & 49.1 & $<0.001$ & -- \\
Cervellati et al.\ (2024) & 1,638 & 1.0 & 91.2 & 0.008 & 0.064 \\
Chambru et al.\ (2024) & 3,211 & 6.8 & 83.9 & $<0.001$ & 0.030 \\
\bottomrule
\end{tabular}

    \begin{proof}[Notes] 
     The RESET test regresses the     treatment indicator on a quartic polynomial of
     $\hat\pi$ and reports a Wald test of the coefficients on nonlinear terms, clustering
     at whatever level the target paper's specification clusters.
    \end{proof}
\end{table}

\Cref{tab:diagnostics_implicit_design} conducts this exercise. All nine papers have
 estimated implicit designs that feature $\hat\pi_i \not\in [0,1]$, with a median of
 7.5\%. While possibly due to estimation noise, having even estimated propensity scores
 that are out-of-bounds is not ideal. Following \citet{blandhol2022tsls}, we can further
 assess whether $\bpi^*$ is actually linear in the covariates by conducting a Ramsey RESET
 test of linearity, assuming treatments are independently drawn. This test decisively
 rejects in all nine applications. Finally, three applications either have permutation
 tests or explicitly known designs such that the joint distribution of the treatments
 $(W_1,\ldots, W_n) \sim \Pi^*$ is known. We can then compare the realized discrepancy
 $\frac{1}{n}\sum_{i=1}^n (\hat\pi_i - \pi_i^*)^2$ to its null distribution under $\Pi^*$
 and compute a $p$-value for \cref{item:M2}. This exercise rejects \cref{item:M2} in  two
 of the three applications.

Overall, \cref{tab:diagnostics_implicit_design} shows that failure of exact
quasi-experimental interpretation---in the sense of \cref{thm:main}---is pervasive. While
all nine papers justify causal identification through exogenously assigned treatments,
they implicitly estimate poor models of treatment. Interpreted strictly, the optimism of
\citet{angrist2008mostly} is misplaced: These common estimators do not have defensible
quasi-experimental interpretations that are agnostic of potential outcomes.

We next assess whether the specifications can be said to be approximately
quasi-experimental under some posited design $\tilde \bpi$, following the decomposition in
\cref{thm:aipw_new}(3). We consider a host of different candidate $\tilde \bpi$'s (see
\cref{asec:emp} for details):
\begin{enumerate}[wide, label=(\roman*)]
    \item (Recalibrated) We recalibrate $\hat\bpi$ by binning it and compute the empirical frequency of treatment within
     each bin, following chapter 17 of \citet{imbens2015causal}.

     \item (Logit) We run a logistic regression of $W$ on the same covariates as in
     $\hat\bpi$. 
     \item (ML) For sample sizes above 1000, we use gradient boosting (lightGBM) to
      predict $W$ on the covariates and $\hat\bpi$. 
     \item (Paper) For the three applications with reported designs, we take $\tilde \bpi$
      as the known design. In addition, \citet{covert2023relinquishing} include a
      random-forest estimated propensity score, which we also treat as a candidate
      $\tilde \bpi$.
\end{enumerate}

\begin{figure}[htb]
    \centering
    \includegraphics[width=\textwidth]{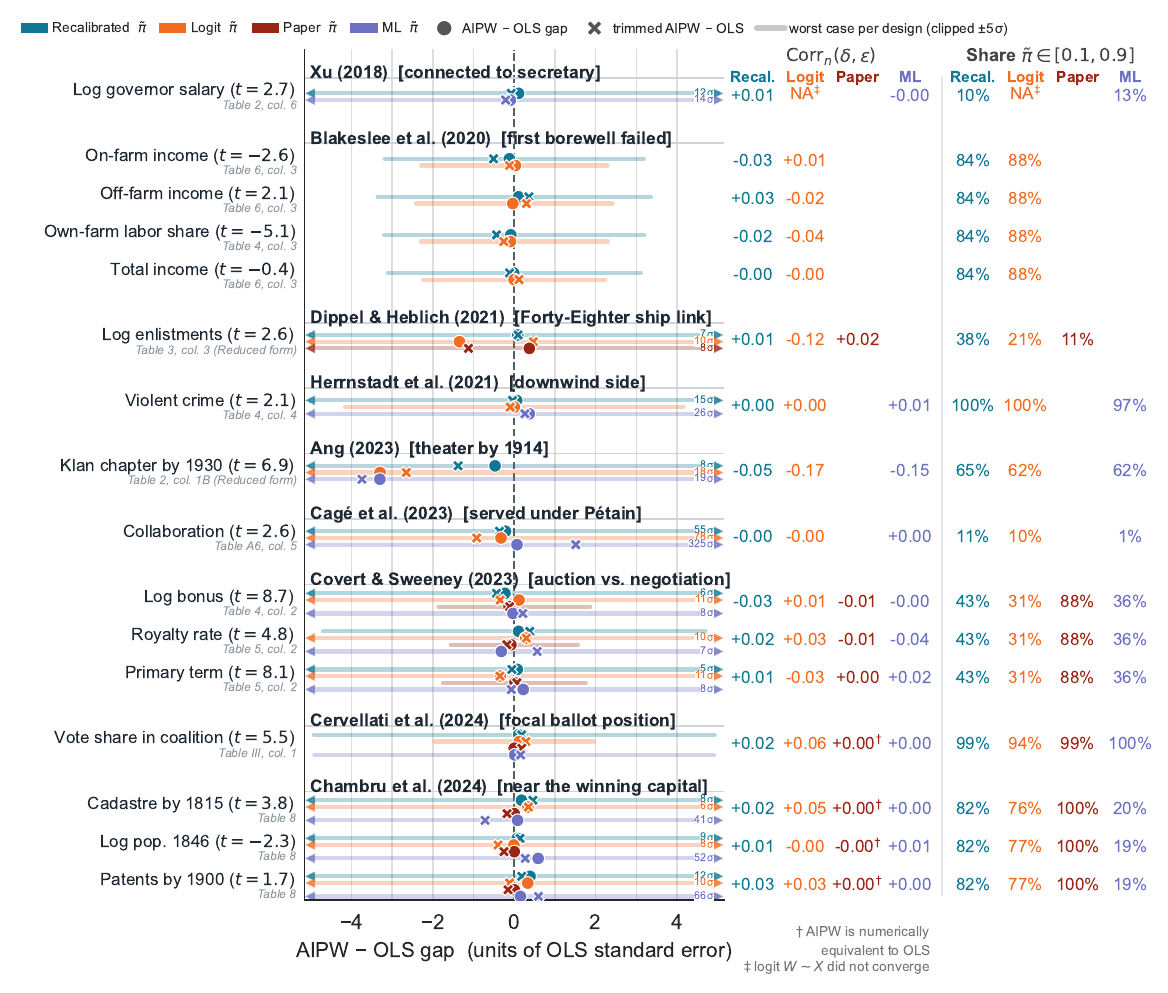}
    \vspace{-2em}

\begin{proof}[Notes]
    For each specification in the paper and each $\tilde \bpi$, we compute two AIPW
    estimates \eqref{eq:aipw}. One weights by $\omega_i(w) \propto \tilde\pi_i
    (1-\tilde\pi_i)$ ($\bullet$) and one weights by $\omega_i(w) \propto \one(\tilde\pi_i
    \in [.1,.9])$ ($\times$). We show the difference of these estimates against the OLS
    $\hat\tau$, normalized by the standard errors of the OLS estimate. We also show the
    possible range of this difference, following
\eqref{eq:worst-case} (truncated to 5 OLS SEs, and labeling those truncated ranges). 
Following the decomposition \eqref{eq:decomp_aipw_corr}, we print the realized $\corr_n
(\delta, \epsilon)$. For the trimmed AIPW estimator, we also print the share of
observations that survive the trimming. 
     \end{proof}

    \caption{Comparison of AIPW estimate (\cref{thm:aipw_new}(3)) and trimmed AIPW estimate
    to OLS estimate}
    \label{fig:forest}

\end{figure}

The decomposition \Cref{thm:aipw_new}(3) produces an informative decomposition of the
AIPW-OLS gap as a covariance between discrepancies on the designs and the regression
residual, which we can further decompose into the product of a correlation and two
standard deviations: For $\epsilon = (\epsilon_1,\ldots, \epsilon_n)'$ the sample
regression residuals, \[
    \tilde\tau_{\mathrm{AIPW}} - \hat\tau_{\mathrm{OLS}} = \mathrm{SD}_n(\delta) \times 
    \mathrm{SD}_n(\epsilon) \times \corr_{n} (\delta, \epsilon) \quad \delta_i \equiv \frac{-
    (\tilde\pi_i -
    \hat \pi_i)} {\frac{1}{n} \sum_{i=1}^n \tilde \pi_i (1-\tilde\pi_i)} \numberthis 
    \label{eq:decomp_aipw_corr}
\] The size of the correlation term is an empirical question. If outcomes were
 adversarially chosen, it could be as large as the correlation of $\delta$ with its
 residuals when projected onto $W$ and $x$:\footnote{The decomposition 
 \eqref{eq:decomp_aipw_corr} also holds if we replace $\pi_i$ with $\hat\pi_i$ since both
  are orthogonal to the residuals $\epsilon$. The worst case in \eqref
  {eq:worst-case} is achieved if the potential outcomes are proportional to
  $\delta_i$ and treatment effect is zero: $y_i(1) = y_i (0) = t \delta_i$, where $t$ is
  such that the regression residual has standard deviation matching $\mathrm{SD}_n
  (\epsilon)$.} \[
     \abs{\corr_{n} (\delta, \epsilon)} \le \abs{\corr_n(\delta, \delta - \hat\gamma_1 W -
     \hat\gamma_2'x)} \numberthis \label{eq:worst-case}.
 \]

\Cref{fig:forest} assesses this decomposition across the nine papers. The dots $\bullet$
display  the AIPW-OLS gap for each specification and each $\tilde \bpi$ considered, in
units of the standard error of the OLS estimate. The possible range of AIPW-OLS gaps,
corresponding to the worst case of \eqref{eq:decomp_aipw_corr}, is displayed in a band. We
also print the realized correlation in \eqref{eq:decomp_aipw_corr} on the side.

Overall, the vast majority of the 50 specification-by-$\tilde\bpi$ scenarios do not
feature AIPW-OLS gaps that are much larger than statistical uncertainty---and OLS has an
approximate quasi-experimental interpretation in that sense under $\tilde \bpi$. Among the
50, only 3 change the OLS estimate by more than one standard error. In terms of the original reported units, 9
scenarios change the OLS estimate by more than 10\%. Notably, the four papers that report
an explicit design are especially robust for the reported design. AIPW is even numerically
equivalent to OLS in \citet {cervellati2024random} and
\citet{chambru2024dynamic}.\footnote{This numerical equivalence is not covered by
\cref{thm:aipw_new}(1) since $\tilde \bpi$ is not an implicit design. Nevertheless, in
these applications, $\tilde \bpi$ lives in the linear span of the covariates, and so the
covariance in \eqref{eq:decomp_aipw_corr} is exactly zero.}

In almost all of the scenarios, the upper bound \eqref{eq:worst-case} is large, indicating
that the discrepancy in the designs could in principle result in very different
estimates. An adversary could choose potential outcomes to be perfectly aligned with the
differences $\tilde\pi_i - \pi_i$, and regression would do poorly if $\tilde \pi_i$ were
the true design. Fortunately, in all scenarios,  the empirical correlation in \eqref
{eq:decomp_aipw_corr} is much smaller than its worst-case value
 \eqref{eq:worst-case}---with  all but three scenarios below 10\% of \eqref
  {eq:worst-case}. It appears that, throughout the scenarios, misspecification of the
  regression as a model of the outcomes and its misspecification as a model for treatment
  assignment are sufficiently uncorrelated that the regression is nevertheless a
  reasonable estimator. 

While it seems that \citet{angrist2008mostly}'s optimism---that simple estimators almost
always have quasi-experimental interpretations---mostly holds up in this approximate
sense, a clear exception is \citet{ang2023birth} (with the caveat that we only use
publicly available covariates in \citet{ang2023birth}'s specification). There, alternative
estimates of the treatment probability (logit or ML) change the estimate by more than two
standard errors---roughly \emph{half} of the OLS-estimated effect. This difference is
attributable to the regression residuals having relatively high correlation with the
discrepancy in designs. There, the plausible misspecification of regression as a treatment
model interacts with its failures as an outcome model, making the estimate sensitive to
model specification.

\subsection{Diagnostics for  the implicit estimand} We conclude by assessing the implicit
estimand. In general, one could compute the empirical analogue of $\omega_i$ in 
\eqref{eq:implicit_estimand} by again plugging in the sample Gram matrix of the OLS
 estimator. Practitioners can assess whether any of these weights are negative, and
 reweight the regression estimator if alternative weighted averages are desirable. For a
 given weighted ATE, \citet{poirier2024quantifying} propose methods for assessing the size
 of the implicit subpopulation the weighting targets.

 In the context of \eqref{eq:angrist98intro}, that implicit estimand is simply the
 weighted average treatment effect whose weights are proportional to $\pi_i
 (1-\pi_i)$. Since essentially all implicit designs are rejected in our battery of papers
 (\cref{tab:diagnostics_implicit_design}), we cannot assess the corresponding estimand
 directly. We instead study the estimand that  weights proportional to $\tilde \pi_i
 (1-\tilde\pi_i)$ for a design considered in \cref{fig:forest}. The estimand under the
 recalibrated
 design $\tilde
 \bpi$ is the closest analogue to the estimand of the regression.

If $\tilde\bpi$ were the true design, then this weighted estimand would be a treatment
effect that is in a sense easiest to estimate, per a result by
\citet{goldsmith2022contamination}. However, per \cref{fig:methods}, essentially no paper
in our census explicitly targets a weighted ATE.\footnote{In fact, according to the AI
agents, one of the only two papers mentioning a weighted average effect is in fact \citet
{goldsmith2022contamination}. The other is a methodology paper by \citet
{frandsen2023judging}.}  Among the named estimands, the ATE is clearly more popular. It
also has clearer theoretical and policy relevance \citep
{kwon2025estimating,poirier2024quantifying,sloczynski2022interpreting,mogstad2024instrumental}.
We thus ask whether changing the estimand makes a big difference. Unfortunately, due to
poor overlap, direct estimates of the ATE are usually too noisy or unavailable. We follow
\citet{crump2009dealing} and trim out observations with $\tilde \pi_i \not\in [0.1,0.9]$.

Estimates for the trimmed ATE are easy to compute by simply reweighting the AIPW estimator
\eqref{eq:aipw} with a weight proportional to $\one(\tilde\pi \in [0.1,0.9])$. We mark
 these estimates with a $\times$ in \cref{fig:forest}. The overall pattern is similar to
 the diagnostics for the implicit design: Most of these estimates also hug tightly their
 OLS or weighted AIPW counterparts. The gaps between $\bullet$ and
 $\times$---reflecting the effect of reweighting the estimand---are only larger than 1 OLS
 standard error for three scenarios, concentrated in applications with particularly poor
 overlap, \citet{dippel2021leadership} and \citet{cage2023heroes}.

\begin{table}[htb]\centering

\caption{Representativeness of the implicit estimand across designs, in the sense of
  \citet{poirier2024quantifying}: the size of the subpopulation for which the weighted
  estimand is an average treatment effect. This statistic is $\frac{1} {n}\sum_i\omega_i /
  \max_i \omega_i$, for $\omega_i = \omega_i(1)$.}
\label{tab:representativeness}

\begin{tabular}{lcccccccc}
\toprule
 & \multicolumn{2}{c}{Recalibrated} & \multicolumn{2}{c}{Logit} & \multicolumn{2}{c}{ML} & \multicolumn{2}{c}{Paper} \\
Paper & $\bullet$ & $\times$ & $\bullet$ & $\times$ & $\bullet$ & $\times$ & $\bullet$ & $\times$ \\
\midrule
Xu (2018) & 0.12 & 0.10 & -- & -- & 0.13 & 0.13 & -- & -- \\
Blakeslee et al.\ (2020) & 0.71 & 0.84 & 0.71 & 0.88 & -- & -- & -- & -- \\
Dippel and Heblich (2021) & 0.28 & 0.38 & 0.19 & 0.21 & -- & -- & 0.12 & 0.11 \\
Herrnstadt et al.\ (2021) & 0.90 & 1.00 & 0.91 & 1.00 & 0.87 & 0.97 & -- & -- \\
Ang (2023) & 0.59 & 0.65 & 0.52 & 0.62 & 0.53 & 0.62 & -- & -- \\
Cag\'e et al.\ (2023) & 0.10 & 0.11 & 0.09 & 0.10 & 0.01 & 0.01 & -- & -- \\
Covert and Sweeney (2023) & 0.30 & 0.43 & 0.26 & 0.31 & 0.32 & 0.36 & 0.69 & 0.88 \\
Cervellati et al.\ (2024) & 0.74 & 0.99 & 0.75 & 0.94 & 0.81 & 1.00 & 0.74 & 0.99 \\
Chambru et al.\ (2024) & 0.62 & 0.82 & 0.63 & 0.77 & 0.19 & 0.19 & 0.89 & 1.00 \\
\bottomrule
\end{tabular}

\begin{proof}[Notes]
$\bullet$: Estimand with weights proportional to $\tilde\pi(1-\tilde\pi)$.
$\times$: Estimand with weights proportional to $\one(\tilde\pi_i \in[0.1,
0.9])$. 
\end{proof}

\end{table}

Strikingly, many of these applications  have quite poor overlap (5 out of 9 papers have at
least some $\tilde \pi$ for which less than half of the observations have $\tilde \pi_i
\in [0.1,0.9]$). The treatment effect estimates for these applications are thus
representative for a substantially smaller share of the population. Following
\citet{poirier2024quantifying}, for either estimand we can compute the size of the largest
subpopulation for which the estimand is an average treatment effect---a statistic that
\citet{poirier2024quantifying} call representativeness. Because of poor overlap,
representativeness is poor for many scenarios. Moving from the weighted estimand to the
trimmed estimand improves representativeness by on average 10pp across the 50 scenarios.
This moderate improvement indicates that most of the poor representativeness comes from
poor overlap rather than from continuously downweighting covariates with worse overlap.

Taken together, these empirical results show that it is very common to estimate causal
effects with regression justified under implicit models of treatment. Interpreting
regression estimates as quasi-experimental commits the analyst to a particular model of
treatment---the implicit design---and to a particular weighting of heterogeneous
effects---the implicit estimand. These are rarely made explicit, and in our re-analysis of
several papers, the implicit design fails an exact quasi-experimental interpretation in
every case. While that failure is frequently inconsequential, at least for some explicit
models of treatment---vindicating \citet{angrist2008mostly}'s optimism---it can in
principle be quite large and indeed is large in one application \citep{ang2023birth}.
Since one cannot tell whether this is consequential ex ante without making the design
explicit, we recommend that practitioners make explicit both models of treatment
assignment and target estimand, even when using regression.

\section{Conclusion}
\label{sec:conclusion}

Linear regressions are ubiquitous. Interpreting their results as causal, thanks to
quasi-random assignment, is similarly commonplace, motivated by an optimism that
regression estimands have quasi-experimental interpretation. This paper studies the
necessary conditions that this interpretation imposes on treatment assignment and
evaluates this optimism both theoretically and empirically. 

We do so by studying the comparisons that regression estimands make under random
assignment. Requiring that a regression be minimally quasi-experimental imposes linear
restrictions in the design. The set of designs that satisfy these restrictions can be
thought of as models of treatment assignment that the regression implicitly specifies.
Each design also pinpoints a corresponding estimand that the regression implicitly
chooses. Indeed, the regression is numerically related to AIPW estimators to the extent
that the design used by the latter satisfies these linear restrictions. 

Understanding  quasi-experimental interpretation of regressions in this way essentially
reduces to mechanical computations that can be scaled and automated. These computations
can aid in examining new theoretical properties of particular specifications, itself the
subject of a highly influential recent literature. In several theoretical vignettes,
these computations unify and strengthen disparate strands of the literature.
Additionally, we find that regressions with interactions and with two-way fixed effects
have fragile design-based interpretations. This calls for caution and nuance when using
them and presenting their results.

\FloatBarrier

\bibliographystyle{aer}
\bibliography{main.bib}

@article{chen2024potential,
  title={Potential weights and implicit causal designs in linear regression},
  author={Chen, Jiafeng},
  journal={arXiv preprint arXiv:2407.21119v4},
  year={2024}
}

@article{hirano2003efficient,
  title={Efficient estimation of average treatment effects using the estimated propensity score},
  author={Hirano, Keisuke and Imbens, Guido W and Ridder, Geert},
  journal={Econometrica},
  volume={71},
  number={4},
  pages={1161--1189},
  year={2003},
  publisher={Wiley Online Library}
}

@article{frandsen2023judging,
  title={Judging judge fixed effects},
  author={Frandsen, Brigham and Lefgren, Lars and Leslie, Emily},
  journal={American Economic Review},
  volume={113},
  number={1},
  pages={253--277},
  year={2023},
  publisher={American Economic Association 2014 Broadway, Suite 305, Nashville, TN 37203}
}

@article{arganaraz2024randomly,
  title={Randomly assigned first differences?},
  author={Arga{\~n}araz, Facundo and de Chaisemartin, Cl{\'e}ment and Lei, Ziteng},
  journal={arXiv preprint arXiv:2411.03208},
  year={2024}
}

@article{kwon2025estimating,
  title={Estimating treatment effects under bounded heterogeneity},
  author={Kwon, Soonwoo and Sun, Liyang},
  journal={arXiv preprint arXiv:2510.05454},
  year={2025}
}

@article{robins2007comment,
  title={Comment: Performance of double-robust estimators when" inverse probability" weights are highly variable},
  author={Robins, James and Sued, Mariela and Lei-Gomez, Quanhong and Rotnitzky, Andrea},
  journal={Statistical Science},
  volume={22},
  number={4},
  pages={544--559},
  year={2007},
  publisher={JSTOR}
}

@article{goff2024does,
  title={When does IV identification not restrict outcomes?},
  author={Goff, Leonard},
  journal={arXiv preprint arXiv:2406.02835},
  year={2024}
}

@article{bang2005doubly,
  title={Doubly robust estimation in missing data and causal inference models},
  author={Bang, Heejung and Robins, James M},
  journal={Biometrics},
  volume={61},
  number={4},
  pages={962--973},
  year={2005},
  publisher={Oxford University Press}
}

@article{sloczynski2025covariate,
  title={Covariate balancing and the equivalence of weighting and doubly robust estimators of average treatment effects},
  author={S{\l}oczy{\'n}ski, Tymon and Uysal, Derya and Wooldridge, Jeffrey M},
  year={2025},
  publisher={JSTOR}
}

@article{bruns2025augmented,
  title={Augmented balancing weights as linear regression},
  author={Bruns-Smith, David and Dukes, Oliver and Feller, Avi and Ogburn, Elizabeth L},
  journal={Journal of the Royal Statistical Society Series B: Statistical Methodology},
  pages={qkaf019},
  year={2025},
  publisher={Oxford University Press UK}
}

@article{navjeevan2023identification,
  title={Identification and estimation in a class of potential outcomes models},
  author={Navjeevan, Manu and Pinto, Rodrigo and Santos, Andres},
  journal={arXiv preprint arXiv:2310.05311},
  year={2023}
}

@article{cervellati2024random,
  title={Random Votes to Parties and Policies in Coalition Governments},
  author={Cervellati, Matteo and Gulino, Giorgio and Roberti, Paolo},
  journal={Econometrica},
  volume={92},
  number={5},
  pages={1553--1588},
  year={2024},
  publisher={Wiley Online Library}
}

@article{card2022design,
  title={Design-based research in empirical microeconomics},
  author={Card, David},
  journal={American Economic Review},
  volume={112},
  number={6},
  pages={1773--1781},
  year={2022},
  publisher={American Economic Association 2014 Broadway, Suite 305, Nashville, TN 37203}
}

@article{leamer1983let,
  title={Let's take the con out of econometrics},
  author={Leamer, Edward E},
  journal={The American Economic Review},
  volume={73},
  number={1},
  pages={31--43},
  year={1983},
  publisher={JSTOR}
}

@article{behaghel2013robustness,
  title={Robustness of the encouragement design in a two-treatment randomized control trial},
  author={Behaghel, Luc and Cr{\'e}pon, Bruno and Gurgand, Marc},
  year={2013}
}

@article{bhuller20242sls,
  title={2SLS with multiple treatments},
  author={Bhuller, Manudeep and Sigstad, Henrik},
  journal={Journal of Econometrics},
  volume={242},
  number={1},
  pages={105785},
  year={2024},
  publisher={Elsevier}
}

@article{sloczynski2020should,
  title={When should we (not) interpret linear iv estimands as late?},
  author={S{\l}oczy{\'n}ski, Tymon},
  journal={arXiv preprint arXiv:2011.06695},
  year={2024}
}

@article{kolesar2024dynamic,
  title={Dynamic Causal Effects in a Nonlinear World: the Good, the Bad, and the Ugly},
  author={Koles{\'a}r, Michal and Plagborg-M{\o}ller, Mikkel},
  journal={arXiv preprint arXiv:2411.10415},
  year={2024}
}

@article{chetverikov2023logit,
  title={Logit-based alternatives to two-stage least squares},
  author={Chetverikov, Denis and Hahn, Jinyong and Liao, Zhipeng and Sheng, Shuyang},
  journal={arXiv preprint arXiv:2312.10333},
  year={2023}
}

@article{arkhangelsky2021double,
  title={Double-robust two-way-fixed-effects regression for panel data},
  author={Arkhangelsky, Dmitry and Imbens, Guido W and Lei, Lihua and Luo, Xiaoman},
  journal={arXiv preprint arXiv:2107.13737},
  volume={12},
  year={2021},
  publisher={arXiv}
}

@article{poirier2024quantifying,
  title={Quantifying the Internal Validity of Weighted Estimands},
  author={Poirier, Alexandre and S{\l}oczy{\'n}ski, Tymon},
  journal={arXiv preprint arXiv:2404.14603},
  year={2024}
}

@article{miratrix2013adjusting,
  title={Adjusting treatment effect estimates by post-stratification in randomized experiments},
  author={Miratrix, Luke W and Sekhon, Jasjeet S and Yu, Bin},
  journal={Journal of the Royal Statistical Society Series B: Statistical Methodology},
  volume={75},
  number={2},
  pages={369--396},
  year={2013},
  publisher={Oxford University Press}
}

@misc{wager2024causal,
  title={Causal inference: A statistical learning approach},
  author={Wager, Stefan},
  year={2024},
  publisher={preparation}
}

@article{crump2009dealing,
  title={Dealing with limited overlap in estimation of average treatment effects},
  author={Crump, Richard K and Hotz, V Joseph and Imbens, Guido W and Mitnik, Oscar A},
  journal={Biometrika},
  volume={96},
  number={1},
  pages={187--199},
  year={2009},
  publisher={Oxford University Press}
}

@article{bugni2023decomposition,
  title={Decomposition and interpretation of treatment effects in settings with delayed outcomes},
  author={Bugni, Federico A and Canay, Ivan A and McBride, Steve},
  journal={arXiv preprint arXiv:2302.11505},
  year={2023}
}

@article{kline2011oaxaca,
  title={Oaxaca-Blinder as a reweighting estimator},
  author={Kline, Patrick},
  journal={American Economic Review},
  volume={101},
  number={3},
  pages={532--537},
  year={2011},
  publisher={American Economic Association}
}

@article{blakeslee2020way,
  title={Way down in the hole: Adaptation to long-term water loss in rural India},
  author={Blakeslee, David and Fishman, Ram and Srinivasan, Veena},
  journal={American Economic Review},
  volume={110},
  number={1},
  pages={200--224},
  year={2020},
  publisher={American Economic Association 2014 Broadway, Suite 305, Nashville, TN 37203}
}

@techreport{mogstad2024instrumental,
  title={Instrumental Variables with Unobserved Heterogeneity in Treatment Effects},
  author={Mogstad, Magne and Torgovitsky, Alexander},
  year={2024},
  institution={National Bureau of Economic Research}
}

@article{splawa1990application,
  title={On the application of probability theory to agricultural experiments. Essay on principles. Section 9.},
  author={Neyman, Jerzy},
  journal={Statistical Science},
  pages={465--472},
  year={1923/1990},
  publisher={JSTOR}
}

@article{arkhangelsky2023fixed,
  title={Fixed Effects and the Generalized Mundlak Estimator},
  author={Arkhangelsky, Dmitry and Imbens, Guido W},
  journal={Review of Economic Studies},
  pages={rdad089},
  year={2023},
  publisher={Oxford University Press US}
}

@article{goodman2021difference,
  title={Difference-in-differences with variation in treatment timing},
  author={Goodman-Bacon, Andrew},
  journal={Journal of econometrics},
  volume={225},
  number={2},
  pages={254--277},
  year={2021},
  publisher={Elsevier}
}

@article{borusyak2024revisiting,
  title={Revisiting event-study designs: robust and efficient estimation},
  author={Borusyak, Kirill and Jaravel, Xavier and Spiess, Jann},
  journal={Review of Economic Studies},
  pages={rdae007},
  year={2024},
  publisher={Oxford University Press US}
}

@article{de2020two,
  title={Two-way fixed effects estimators with heterogeneous treatment effects},
  author={De Chaisemartin, Cl{\'e}ment and d’Haultfoeuille, Xavier},
  journal={American economic review},
  volume={110},
  number={9},
  pages={2964--2996},
  year={2020},
  publisher={American Economic Association 2014 Broadway, Suite 305, Nashville, TN 37203}
}

@article{zhao2025interacted,
  title={Interacted two-stage least squares with treatment effect heterogeneity},
  author={Zhao, Anqi and Ding, Peng and Li, Fan},
  journal={arXiv preprint arXiv:2502.00251},
  year={2025}
}

@article{roth2023s,
  title={What’s trending in difference-in-differences? A synthesis of the recent econometrics literature},
  author={Roth, Jonathan and Sant’Anna, Pedro HC and Bilinski, Alyssa and Poe, John},
  journal={Journal of Econometrics},
  volume={235},
  number={2},
  pages={2218--2244},
  year={2023},
  publisher={Elsevier}
}

@book{imbens2015causal,
  title={Causal inference in statistics, social, and biomedical sciences},
  author={Imbens, Guido W and Rubin, Donald B},
  year={2015},
  publisher={Cambridge university press}
}

@article{aronow2016does,
  title={Does regression produce representative estimates of causal effects?},
  author={Aronow, Peter M and Samii, Cyrus},
  journal={American Journal of Political Science},
  volume={60},
  number={1},
  pages={250--267},
  year={2016},
  publisher={Wiley Online Library}
}

@techreport{borusyak2024negative,
  title={Negative weights are no concern in design-based specifications},
  author={Borusyak, Kirill and Hull, Peter},
  year={2024},
  institution={National Bureau of Economic Research}
}

@article{freedman2008regression,
  title={On regression adjustments to experimental data},
  author={Freedman, David A},
  journal={Advances in Applied Mathematics},
  volume={40},
  number={2},
  pages={180--193},
  year={2008},
  publisher={Elsevier}
}

@article{sloczynski2022interpreting,
  title={Interpreting OLS estimands when treatment effects are heterogeneous: smaller groups get larger weights},
  author={S{\l}oczy{\'n}ski, Tymon},
  journal={Review of Economics and Statistics},
  volume={104},
  number={3},
  pages={501--509},
  year={2022},
  publisher={MIT Press One Rogers Street, Cambridge, MA 02142-1209, USA journals-info~…}
}

@article{negi2021revisiting,
  title={Revisiting regression adjustment in experiments with heterogeneous treatment effects},
  author={Negi, Akanksha and Wooldridge, Jeffrey M},
  journal={Econometric Reviews},
  volume={40},
  number={5},
  pages={504--534},
  year={2021},
  publisher={Taylor \& Francis}
}

@article{imbensangrist,
 ISSN = {00129682, 14680262},
 URL = {http://www.jstor.org/stable/2951620},
 author = {Guido W. Imbens and Joshua D. Angrist},
 journal = {Econometrica},
 number = {2},
 pages = {467--475},
 publisher = {[Wiley, Econometric Society]},
 title = {Identification and Estimation of Local Average Treatment Effects},
 urldate = {2024-06-04},
 volume = {62},
 year = {1994}
}

@article{goldsmith2022contamination,
  title={Contamination bias in linear regressions},
  author={Goldsmith-Pinkham, Paul and Hull, Peter and Koles{\'a}r, Michal},
  journal={American Economic Review},
  volume={114},
  number={12},
  pages={4015--4051},
  year={2024},
  publisher={American Economic Association 2014 Broadway, Suite 305, Nashville, TN 37203}
}

@techreport{pgp,
  title={Tracking the Credibility Revolution across Fields},
  author={Goldsmith-Pinkham, Paul},
  year={2024}
}

@inproceedings{currie2020technology,
  title={Technology and big data are changing economics: Mining text to track methods},
  author={Currie, Janet and Kleven, Henrik and Zwiers, Esm{\'e}e},
  booktitle={AEA Papers and Proceedings},
  volume={110},
  pages={42--48},
  year={2020},
  organization={American Economic Association 2014 Broadway, Suite 305, Nashville, TN 37203}
}

@article{imbens2009recent,
  title={Recent developments in the econometrics of program evaluation},
  author={Imbens, Guido W and Wooldridge, Jeffrey M},
  journal={Journal of economic literature},
  volume={47},
  number={1},
  pages={5--86},
  year={2009},
  publisher={American Economic Association}
}

@article{angrist1998,
 ISSN = {00129682, 14680262},
 URL = {http://www.jstor.org/stable/2998558},
 abstract = {The volunteer armed forces play a major role in the American youth labor market, but little is known about the effects of voluntary military service on earnings. The effects of military service are difficult to measure because veterans are both self-selected and screened by the military. This study uses two strategies to reduce selection bias in estimates of the effects of military service on the earnings of veterans. Both approaches involve the analysis of a special match of Social Security earning records to administrative data on applicants to the armed forces. The first strategy compares applicants who enlisted with applicants who did not enlist, while controlling for most of the characteristics used by the military to select soldiers from the applicant pool. This is implemented using matching methods and regression. The second strategy uses instrumental variables that were generated by an error in the scoring of the exams that screen military applicants. Estimates from both strategies are interpreted using models with heterogeneous potential outcomes. The empirical results suggest that soldiers who served in the early 1980s were paid considerably more than comparable civilians while in the military, and that military service is associated with higher employment rates for veterans after service. In spite of this employment gain, however, military service led to only a modest long-run increase in the civilian earnings of nonwhite veterans while actually reducing the civilian earnings of white veterans.},
 author = {Joshua D. Angrist},
 journal = {Econometrica},
 number = {2},
 pages = {249--288},
 publisher = {[Wiley, Econometric Society]},
 title = {Estimating the Labor Market Impact of Voluntary Military Service Using Social Security Data on Military Applicants},
 urldate = {2024-05-30},
 volume = {66},
 year = {1998}
}

@article{lin2013agnostic,
  title={Agnostic notes on regression adjustments to experimental data: Reexamining Freedman’s critique},
  author={Lin, Winston},
  year={2013}
}

@article{chattopadhyay2023implied,
  title={On the implied weights of linear regression for causal inference},
  author={Chattopadhyay, Ambarish and Zubizarreta, Jos{\'e} R},
  journal={Biometrika},
  volume={110},
  number={3},
  pages={615--629},
  year={2023},
  publisher={Oxford University Press}
}

@article{rambachan2020design,
  title={Design-based uncertainty for quasi-experiments},
  author={Rambachan, Ashesh and Roth, Jonathan},
  journal={arXiv preprint arXiv:2008.00602},
  year={2020}
}

@techreport{blandhol2022tsls,
  title={When is TSLS actually late?},
  author={Blandhol, Christine and Bonney, John and Mogstad, Magne and Torgovitsky, Alexander},
  year={2025},
  institution={National Bureau of Economic Research}
}

@article{angrist2010credibility,
  title={The credibility revolution in empirical economics: How better research design is taking the con out of econometrics},
  author={Angrist, Joshua D. and Pischke, J{\"o}rn-Steffen},
  journal={Journal of Economic Perspectives},
  volume={24},
  number={2},
  pages={3--30},
  year={2010}
}

@article{xu2017costs,
  title={The Costs of Patronage: Evidence from the British Empire},
  author={Xu, Guo},
  journal={American Economic Review},
  volume={108},
  number={11},
  pages={3170--3198},
  year={2018},
  publisher={American Economic Association}
}

@article{zhao2022regression,
  title={Regression-based causal inference with factorial experiments: estimands, model specifications and design-based properties},
  author={Zhao, Anqi and Ding, Peng},
  journal={Biometrika},
  volume={109},
  number={3},
  pages={799--815},
  year={2022},
  publisher={Oxford University Press}
}

@book{angrist2008mostly,
  title={Mostly harmless econometrics: An empiricist's companion},
  author={Angrist, Joshua D and Pischke, J{\"o}rn-Steffen},
  year={2008},
  publisher={Princeton university press}
}

@misc{chernozhukov2018double,
  title={Double/debiased machine learning for treatment and structural parameters},
  author={Chernozhukov, Victor and Chetverikov, Denis and Demirer, Mert and Duflo, Esther and Hansen, Christian and Newey, Whitney and Robins, James},
  year={2018},
  publisher={Oxford University Press Oxford, UK}
}

@article{athey2018design,
  title={Design-based analysis in difference-in-differences settings with staggered adoption},
  author={Athey, Susan and Imbens, Guido W},
  journal={Journal of Econometrics},
  volume={226},
  number={1},
  pages={62--79},
  year={2022},
  publisher={Elsevier}
}

@article{covert2023relinquishing,
  title={Relinquishing Riches: Auctions versus Informal Negotiations in Texas Oil and Gas Leasing},
  author={Covert, Thomas R and Sweeney, Richard L},
  journal={American Economic Review},
  volume={113},
  number={3},
  pages={628--663},
  year={2023},
  publisher={American Economic Association}
}

@article{cage2023heroes,
  title={Heroes and Villains: The Effects of Heroism on Autocratic Values and Nazi Collaboration in France},
  author={Cag{\'e}, Julia and Dagorret, Anna and Grosjean, Pauline and Jha, Saumitra},
  journal={American Economic Review},
  volume={113},
  number={7},
  pages={1888--1932},
  year={2023},
  publisher={American Economic Association}
}

@article{chambru2024dynamic,
  title={The Dynamic Consequences of State Building: Evidence from the French Revolution},
  author={Chambru, C{\'e}dric and Henry, Emeric and Marx, Benjamin},
  journal={American Economic Review},
  volume={114},
  number={11},
  pages={3578--3622},
  year={2024},
  publisher={American Economic Association}
}

@article{dippel2021leadership,
  title={Leadership in Social Movements: Evidence from the ``Forty-Eighters'' in the Civil War},
  author={Dippel, Christian and Heblich, Stephan},
  journal={American Economic Review},
  volume={111},
  number={2},
  pages={472--505},
  year={2021},
  publisher={American Economic Association}
}

@article{herrnstadt2021air,
  title={Air Pollution and Criminal Activity: Microgeographic Evidence from Chicago},
  author={Herrnstadt, Evan and Heyes, Anthony and Muehlegger, Erich and Saberian, Soodeh},
  journal={American Economic Journal: Applied Economics},
  volume={13},
  number={4},
  pages={70--100},
  year={2021},
  publisher={American Economic Association}
}

@article{ang2023birth,
  title={The Birth of a Nation: Media and Racial Hate},
  author={Ang, Desmond},
  journal={American Economic Review},
  volume={113},
  number={6},
  pages={1424--1460},
  year={2023},
  publisher={American Economic Association}
}

\appendix

\counterwithin{theorem}{section}
\counterwithin{prop}{section}
\counterwithin{lemma}{section}
\counterwithin{rmk}{section}
\counterwithin{as}{section}
\counterwithin{ex}{section}

\begin{appendices}

\section{\Cref{thm:main} under i.i.d. sampling}
\label{asec:superpop}

This section states and proves an analogue of \cref{thm:main} under i.i.d. sampling. To
do so, we also state an analogue of \cref{defn:level_irrelevance}. 
Let $(Y_i(\cdot), X_i, W_i) \iid P$ and assume that $Y_i = Y_i(W_i)$. Assume that $Y_i
(w) \in \R^T$ and $W_i \in \mathcal W$. Define the random variables $
    \pi^*_i(w) = \pi(w; Y_i(\cdot), X_i) \equiv P(W=w \mid Y_i(\cdot), X_i). 
$
Let $Y_i'(\cdot) = Y_i(\cdot) + C_i$ for some random variable $C_i \in \R^T$ where \[ C_i
\mid (Y_i(\cdot), X_i, W_i) \sim P_C(Y_i(\cdot), X_i) \text{ independently across $i$.}\]
This restricts $C_i \indep W_i \mid Y_i(\cdot), X_i$. Define $P'$ as the joint
distribution of $ (Y_i' (\cdot), W, X_i)$ where $(Y_i(\cdot), W, X_i) \iid P$. Let
$\bz(X_i, W_i) \in \R^{T
\times k}$ be a covariate transform, and assume that \[
    G_n(P) \equiv \E_P[\bz(X_i, W_i)'\bz(X_i, W_i)]
\] exists and is invertible. For a distribution $Q$ of potential outcomes $Y_i
 (\cdot)$, treatment $W_i$, and covariates $X_i$, let $Y_i^Q = Y_i(W_i)$. Thus $Y_i^P =
 Y_i(W_i)$ and $Y_i^{P'} = Y_i'(W_i)$.  A population regression estimand is defined as \[
    \tau(P) = \Lambda G_n(P)^{-1} \E_P\bk{
        \bz(X_i, W_i)' Y_i^P
    }.
\]
Assume that $P$ has bounded moments so that $\tau(P)$ exists.

\begin{defn} The regression estimand $\tau(P)$ is minimally quasi-experimental under $P$
 if for all $P'$ corresponding to an arbitrary $P_C \sim C \mid (Y_i(\cdot), X_i)$, we
 have $
     \tau(P') = \tau(P)
    $
    whenever $\tau(P')$ exists.
\end{defn}

\begin{prop}
Define $\rho_i(W_i) = \Lambda G_n(P)^{-1} \bz(X_i, W_i)'$.
    $\tau(P)$ is minimally quasi-experimental under $P$ if and only if $
        \sum_{w\in \mathcal W} \pi_i^*(w) \rho_i(w) = 0
    $
    $P$-almost surely. 
\end{prop}
\begin{proof}
 Note that $G_n(P) = G_n(P')$
since $G_n$ does not depend on outcomes. Thus, taking the expectation under the joint
distribution $(Y_i(\cdot), W_i, X_i, C_i)$: \begin{align*}
\tau(P') - \tau(P) &= \E\bk{
        \rho_i(W_i) \pr{
            Y_i' (W_i) - Y_i(W_i)
        }
    }  = \E\bk{
        \rho_i(W_i) C_i
    }  = \E\bk{
        C_i \sum_{w\in \mathcal W} \one(W_i = w) \rho_i(w) 
    } \\
    &= \E\bk{
        C_i \sum_{w\in \mathcal W} \E[\one(W_i = w) \mid X_i, Y_i(\cdot), C_i] \rho_i(w) 
    }  \\&= \E\bk{
        C_i \sum_{w\in \mathcal W} \E[\one(W_i = w) \mid X_i, Y_i(\cdot)] \rho_i(w) 
    }  = \E\bk{
        C_i \sum_{w\in \mathcal W} \pi_i^*(w) \rho_i(w) 
    } \numberthis \label{eq:if_direction_eqn}.
\end{align*}
If one chooses $P_C$ such that $
    C_i = \sum_{w\in \mathcal W} \pi_i^*(w) \rho_i(w),
$
then \[
    0 = \E\bk{
        C_i \sum_{w\in \mathcal W} \pi_i^*(w) \rho_i(w) 
    } = \E\bk{
       \norm[\bigg]{\sum_{w\in \mathcal W} \pi_i^*(w) \rho_i(w) }^2
    } .
\]
This implies that $\sum_{w\in \mathcal W} \pi_i^*(w) \rho_i(w) = 0$ $P$-almost surely.
This proves the only if direction. The if direction follows from 
\cref{eq:if_direction_eqn}.
\end{proof}

\section{Treating the Gram matrix as known}
\label{asec:known_gram_matrix}

This section discusses the ramification of treating $G_n$ as known in
\cref{item:i} in \cref{sub:leading}. This is subtle, since $G_n$ itself depends on the true
design $\bpi^*$. We can think of \eqref {eq:equation_simple} as  nonlinear
equations in $\pi_i^*$: \[
\pi_i^* \rho_i(1; \bpi^*) + (1-\pi_i^*) \rho_i(0; \bpi^*) = 0
\] where $\rho_i(\cdot ; \bpi^*)$ depends on $\bpi^*$ through $G_n$. These equations then
 define a nonlinear surface of designs $
\br{\bpi: \pi_i \rho_i(1; \bpi) + (1-\pi_i) \rho_i(0; \bpi) = 0\, \forall\, i}.$
This surface is defined solely as a function of the regression $(\Lambda, z(x_i, \cdot))$
 and thus does not reflect any information in the data.

 The implicit designs can then be thought of as choosing a point near\footnote{When an
 implicit design generates the same Gram matrix as the true design $G_n(\bpi) =
 G_n(\bpi^*)$, then the implicit design is on the surface.} this surface that is ``most
 consistent'' with the data by fixing $(G_n, \rho_i(\cdot))$ and solving the corresponding
 equations. For the regression
 \eqref{eq:angrist98intro}, this surface is the span of the column space of the covariates
  \[
 \br{\bpi: \pi_i = \tilde x_i'\delta, \delta \in \R^{\dim(x_i) + 1}};\] the implicit
 design chooses the point on the surface by projecting $\bpi^*$ to the covariates, since
 all other points are ruled out by large datasets. When $\bpi^*$ belongs to the surface,
 doing this exactly recovers $\bpi^*$.

Additionally, some joint distributions of treatment imply that $\hat G_n = G_n$ almost
surely. When treatment arises from such a joint distribution, there is no estimation
noise to ignore. We term those fixed Gram designs and discuss them in the following
remark:
\begin{rmksq}[Fixed Gram designs]
\label{rmk:fixed_gram}

Let $\Pi^*$ denote the joint distribution of $\bW_1,\ldots, \bW_n$. For some $\Pi^*$, it
is possible that $\hat G_n = G_n (\bpi^*)$ with probability one. Under such a design,
there is no estimation noise to abstract away. As an example, suppose that we consider
$Y_i = \alpha + \tau  W_i +
\epsilon_i$ with binary $W_i$, but $W_i$ is assigned in a \emph{completely randomized
 experiment}.\footnote{That is, there are $n_1$ treated units, selected uniformly at
 random from the set of $\binom{n}{n_1}$ subsets.} Then, the sample and the population
 Gram matrices are equal and fixed in every realization of the treatment, \[\hat G_n = G_n
 = \begin{bmatrix} 1 & n_1/n\\ n_1/n & n_1/n
\end{bmatrix}.\] We term these designs \emph{fixed Gram} designs for a particular
 specification.  Our results in the population can alternatively be read as exact results
 for fixed Gram designs.\footnote{Regression estimators in the design-based statistical
 inference literature are frequently analyzed under fixed Gram designs \citep[see, e.g.,]
 []
 {rambachan2020design,athey2018design,zhao2022regression,splawa1990application,lin2013agnostic}.}
\end{rmksq}

\section{Proofs of key results}

\begin{proof}[Proof of \cref{thm:main}]
We first show that \cref{defn:level_irrelevance} is equivalent to the design satisfying
\eqref{eq:pop_level_irrelevance_condition}. By 
\eqref{eq:estimand}, $\btau$ satisfies \cref{defn:level_irrelevance} if and only if for
all $i,t$ \[
\sum_{\bw \in \cW} \pi_i^*(\bw) \brho_{it}(\bw) c_{it} = 0 \iff \sum_{\bw \in \cW} \pi_i^*
(\bw) \brho_{it}(\bw) = 0.
\]
The sum-to-one condition is automatically satisfied by $\pi_i^*$. This is then equivalent
to \eqref{eq:pop_level_irrelevance_condition}. This shows that $\btau$ is MQE
if and only if $\bpi^*$ is an implicit design. $\bpi^*$ is proper and generates $G_n$ by
definition.
The statement that $\bpi^*$ is an implicit design is equivalent to (1) and (2). 

When $\btau$ is minimally quasi-experimental, the estimand $\btau$ is by definition the implicit
estimand under $\bpi^*$. This proves the second statement. 
\end{proof}

\begin{proof}[Proof of \cref{cor:binary_main}]
Applying \cref{thm:main}, we need to show that \cref{cor:binary_main}(1) is equivalent to
\cref{thm:main}(1). Some proper implicit design exists if and only if $\rho_i(1)\rho_i(0)
\le0$
for all $i$.  When it exists, it satisfies $\pi_i = -\rho_i(0)/(\rho_i(1) - \rho_i(0))$.
Thus \cref{cor:binary_main}(1) is equivalent to \cref{thm:main}(1). In turn,
\cref{thm:main}(2) is equivalent to \cref{cor:binary_main}(2). \qedhere 

\end{proof}

\begin{as}
\label{as:aipw_recentered_rank}
In the notation of \cref{thm:aipw_new}(2), the matrix
\[
    \tilde G_n \equiv \frac1n\sum_{i=1}^n \sum_{w\in\cW}\tilde\pi_i(w)
    \left\{
        z_1(x_i,w)-\sum_{w'\in\cW}\tilde\pi_i(w')z_1(x_i,w')
    \right\}
    \left\{
        z_1(x_i,w)-\sum_{w'\in\cW}\tilde\pi_i(w')z_1(x_i,w')
    \right\}'
\]
is nonsingular.
\end{as}

\begin{proof}[Proof of \cref{thm:aipw_new}]
For part (1), we have $\sum_w\tilde\pi_i(w)\brho_i(w)=0$, so the weights
$\tilde \omega_i(w) = \tilde\pi_i(w)\brho_i(w)$ define a contrast. Observe that
\[
    \frac1n\sum_i\brho_i(W_i)(Y_i-z(x_i,W_i)'\hat\beta)
    =
    \Lambda G_n^{-1}
    \left\{\frac1n\sum_i z(x_i,W_i)(Y_i-z(x_i,W_i)'\hat\beta)\right\}
    =
    0
\]
by the OLS normal equations. Moreover,
\[
    \frac1n\sum_i\sum_w
    \tilde\pi_i(w)\brho_i(w)z(x_i,w)'\hat\beta
    =
    \Lambda G_n^{-1}G_n(\tilde\bpi)\hat\beta
    =
    \Lambda\hat\beta,
\]
where the last equality uses that $\tilde\bpi$ generates $G_n$. Combining the last two
displays implies the statement. 

For part (2), set
\[
    \tilde A
    =
    -\tilde G_n^{-1}
    \left[
        [I_k,0]G_n^{-1}
        \colvecb{2}{I_k}{0}
    \right]^{-1}
\quad 
    \tilde\omega_i(w)
    =
    \tilde\pi_i(w)\tilde A\{\tilde\brho_i-\brho_i(w)\}. \numberthis 
    \label{eq:tilde_omega}
\]
The matrix $\tilde A$ is nonsingular by \cref{as:aipw_recentered_rank} and invertibility
of $G_n$. Observe that $\sum_w \tilde \omega_i(w) = 0$ because
$\sum_w\tilde\pi_i(w)\{\tilde\brho_i-\brho_i(w)\}=0$ for every $i$. Moreover,
\[
    \tilde\brho_i-\brho_i(w)
    =
    -[I_k,0]G_n^{-1}
   \colvecb{2}{
    z_1(x_i,w)-\sum_{w'\in\cW}\tilde\pi_i(w')z_1(x_i,w')}
    {0}.
\]
Therefore, we may calculate
\[
\begin{aligned}
    \frac1n\sum_i\sum_w \tilde\omega_i(w)z(x_i,w)'
    &=
    -\tilde A
    [I_k,0]G_n^{-1}
   \colvecb{2}{I_k}{0}
    \begin{bmatrix}
    \tilde G_n & 0
    \end{bmatrix} =
    [I_k,0]
    =
    \Lambda,
\end{aligned}
\]
We thus have that \[
    \frac1n\sum_i\sum_w \tilde\omega_i(w)z(x_i,w)'\hat\beta-\Lambda\hat\beta=0.
\]

Also, since
$
    \frac{\tilde\omega_i(W_i)}{\tilde\pi_i(W_i)}
    =
    \tilde A\{\tilde\brho_i-\brho_i(W_i)\},
$
we have that
\begin{align*}
    &\frac1n\sum_i
    \frac{\tilde\omega_i(W_i)}{\tilde\pi_i(W_i)}
    (Y_i-z(x_i,W_i)'\hat\beta) =
    \frac{\tilde A}{n}\sum_i
    \tilde\brho_i\cdot (Y_i-z(x_i,W_i)'\hat\beta)
    -
    \frac{\tilde A}{n}\sum_i
    \brho_i(W_i)(Y_i-z(x_i,W_i)'\hat\beta).
\end{align*}
The second term is again zero by the OLS normal equations.
Combining the last two displays proves part (2).

For part (3), in \eqref{eq:angrist98intro}, the potential weights take the form
$
    \brho_i(w)
    =
    \frac{w-\pi_i}{\frac1n\sum_j\pi_j(1-\pi_j)},
$
where $\pi$ is the implicit design. Thus
\[
    \tilde\brho_i
    =
    \frac{\tilde\pi_i-\pi_i}{\frac1n\sum_j\pi_j(1-\pi_j)}.
\]
The construction in part (2) yields
\[
    \tilde A\tilde\brho_i
    =
    -\frac{\tilde\pi_i-\pi_i}{\frac1n\sum_j\tilde\pi_j(1-\tilde\pi_j)}
\quad 
    \tilde\omega_i(1)
    =
    \frac{\tilde \pi_i(1-\tilde \pi_i)}{\frac{1}{n} \sum_j \tilde \pi_j (1-\tilde \pi_j)}.
\]
Substituting these identities into part (2) gives
\[
    \tilde\tau_{\mathrm{AIPW}} - \Lambda\hat\beta
    =
    -\frac{1}{\sum_j\tilde\pi_j(1-\tilde\pi_j)}
    \sum_i(\tilde\pi_i-\pi_i)(Y_i-z(x_i,W_i)'\hat\beta). \qedhere
\]
\end{proof}

\begin{proof}[Proof of \cref{thm:zoo}]
We will repeatedly use \cref{thm:frisch-waugh} to residualize the treatment
variables against the controls to compute the potential weights.
\begin{enumerate}[wide]
    \item {Let
    $\delta = (\frac1n\sum_i x_ix_i')^{-1}\frac1n\sum_i\pi_i^*x_i$. After residualizing
    $W_i$ on $x_i$, the residualized treatment is $W_i-x_i'\delta$. Hence, for
    $\tilde G_n = \frac1n\sum_i\E[(W_i-x_i'\delta)^2] >0$,}
$
    \rho_i(w)=\tilde G_n^{-1}(w-x_i'\delta).
$

\begin{enumerate}[wide]
    \item By \cref{cor:binary_main}, the unique implicit design is
$
    \pi_i=-\frac{\rho_i(0)}{\rho_i(1)-\rho_i(0)}=x_i'\delta.
$
It generates $G_n$ because the entries of
$\frac{1}{n}\sum_i \E_{\pi_i} z(x_i,W)z(x_i,W)'$ depend on $\pi$ only through the
moments
$n^{-1}\sum_i \pi_i$ and $n^{-1}\sum_i\pi_i x_i$, and these match the corresponding
moments under $\pi_i^*$ by the projection normal equations and the constant in $x_i$.

    \item Immediate from the uniqueness in (a).
    \item The implicit weights satisfy
    $\omega_i=\pi_i\rho_i(1)=\tilde G_n^{-1}\pi_i(1-\pi_i)$. Since
    $\pi_i=x_i'\delta$ is the projection of $\pi_i^*$ on $x_i$,
\[
    \tilde G_n
    =\frac1n\sum_i\E[(W_i-\pi_i)^2]
    =\frac1n\sum_i\pi_i(1-\pi_i),
\]
where the last equality uses
$\frac1n\sum_i(\pi_i^*-\pi_i)=0$ and
$\frac1n\sum_i(\pi_i^*-\pi_i)\pi_i=0$. Thus
$\omega_i=\pi_i(1-\pi_i)/\{\frac1n\sum_j\pi_j(1-\pi_j)\}$, which is nonnegative when
$\pi_i=\pi_i^*\in[0,1]$.
\end{enumerate}

\item For $j=1,\ldots,J$, let
$\delta_j=(\frac1n\sum_i x_ix_i')^{-1}\frac1n\sum_i\pi_i^*(j)x_i$ and
$\Delta_i=(\delta_1'x_i,\ldots,\delta_J'x_i)'$. After FWL \cref{thm:frisch-waugh}, the
residualized treatment regressors
are
$W_{ij}-\delta_j'x_i$, so for a positive definite transformed Gram matrix
$\tilde G_n$,
\[
    \rho_i(\ell)=\tilde G_n^{-1}(e_\ell-\Delta_i),
    \qquad e_0=0_J .
\]

\begin{enumerate}[wide]
     \item The implicit-design equations are
\[
    0=\sum_{\ell=0}^J \pi_i(\ell)\rho_i(\ell)
    =\tilde G_n^{-1}\left\{\sum_{\ell=0}^J\pi_i(\ell)e_\ell-\Delta_i\right\}.
\]
Thus $\pi_i(j)=\delta_j'x_i$ for all $j=1,\ldots,J$ and
$\pi_i(0)=1-\sum_{j=1}^J\pi_i(j)$. As in part (1), this design generates $G_n$ because
$n^{-1}\sum_i\pi_i^*(j)x_i = n^{-1}\sum_i\pi_i(j)x_i$ for every $j$.

\item Again immediate from uniqueness of the implicit design in (a)

\item The implicit-estimand weights are
$
    \omega_i(\bpi,\ell)
    =
    \tilde G_n^{-1}\pi_i(\ell)\{e_\ell-(\pi_i(1),\ldots,\pi_i(J))'\}.
$
Therefore
\[
(\tau_1,\ldots,\tau_J)' = \frac{1}{n} \sum_i \sum_{\ell = 0}^J \omega_i(\bpi,
\ell) y_i(\ell). \numberthis \label{eq:implicit_estimand_pgp}
\]
\Cref{ex:pgp_example} gives an explicit design for which this estimand is
contaminated.
 \end{enumerate} 
 \item Let $x_{i0}=1-\sum_{\ell=1}^L x_{i\ell}$ and let $c_i$ denote the
 cell containing unit $i$. The saturated specification is a reparametrization of
\[
    Y_i = \sum_{\ell=0}^L x_{i\ell}(\alpha_\ell+\tau_\ell W_i)+\epsilon_i,
    \qquad
    \tau=\sum_{\ell=0}^L \bar x_\ell\tau_\ell . \numberthis 
    \label{eq:mean_within_group}
\]
Ordering the regressors as
$[x_{i0},x_{i0}W_i,\ldots,x_{iL},x_{iL}W_i]$, $G_n$ is block diagonal with
\[
    G_n = \begin{bmatrix}
        G_0 \\
    & \ddots \\
    &&G_L
    \end{bmatrix}
    \quad G_\ell = \begin{bmatrix}
        \bar x_\ell & \bar x_\ell \pi(\ell) \\
        \bar x_\ell \pi(\ell) & \bar x_\ell \pi(\ell)
    \end{bmatrix} \quad G_\ell^{-1} =
    \frac{1}{\bar x_\ell \pi(\ell)(1-\pi(\ell)) } \begin{bmatrix}
        \pi(\ell) & -\pi(\ell) \\
        -\pi(\ell) & 1
    \end{bmatrix}\]
where $\pi(\ell)$ is the mean of $\pi_i^*$ among units with $c_i=\ell$. Since
$\Lambda=[0,\bar x_0,\ldots,0,\bar x_L]$,
\[
    \Lambda G_n^{-1} = \bk{\frac{-\pi(0)}{\pi(0) (1-\pi(0))}, \frac{1}{\pi(0)(1-\pi(0))},
    \ldots, \frac{-\pi(L)}{\pi(L)(1-\pi(L))}, \frac{1}{\pi(L)(1-\pi(L))}}.
\]
For $c_i=\ell$,
$
z(x_i, w) =
    [0,\ldots, 0, 1, w, 0,\ldots,0]'
$
with nonzero entries in the $\ell$\th{} pair, so
\[
    \rho_i(w) = \Lambda G_n^{-1} z(x_i,w) = \frac{w - \pi(\ell)}{\pi(\ell)(1-\pi(\ell))}.
\]
\begin{enumerate}[wide]
    \item By \cref{cor:binary_main}, the unique implicit design sets
    $\pi_i=\pi(\ell)$ whenever $c_i=\ell$. The calculation above shows that this
    design generates $G_n$.

    \item Consequently $\pi_i^*=\pi_i$ if and only if the propensity score is
    constant within each cell.

    \item The treated potential-outcome weight is
    $\omega_i=\pi(\ell)\rho_i(1)=1$.
\end{enumerate}

\item \Cref{prop:interact_with_t} gives the potential weights for the
interacted regression for general centering $\bar x_t$. 
\begin{enumerate}[wide]
    \item  Applying it at $t=1$ and using
\cref{prop:kline_prop}, the implicit design has odds
\[
    \frac{\pi_i}{1-\pi_i}=\delta_0+\delta_1'(x_i-\bar x),
\]
after re-centering the weighted least-squares projection from
$x_i-\bar x_0$ to $x_i-\bar x$. This proves the implicit-design formula.
    \item Follows again from uniqueness
    \item 
    In the notation of \cref{prop:interact_with_t},
\[
    \tilde G_n\rho_i(1)\pi_i
    =
    \{1-\Gamma_{21}'(x_i-\bar x_1)-\theta_{01}-\theta_{11}'(x_i-\bar x)\}\pi_i.
\]
The proof of \cref{prop:kline_prop} gives
$\Gamma_{21}=-\theta_{11}$, so the expression in braces is constant in $i$. Therefore
$\omega_i=\rho_i(1)\pi_i$ is proportional to $\pi_i$. Finally,
$n^{-1}\sum_i\omega_i=1$: under the correctly specified constant-effect outcomes
$y_i(w)=vw$, the regression coefficient equals $v$, while
\cref{cor:binary_main} writes it as
$v\,n^{-1}\sum_i\omega_i$. Hence
$\omega_i=\pi_i/\{\frac1n\sum_j\pi_j\}$, the ATT weights.
\end{enumerate}

\item Let $\pi(\bw)=n^{-1}\sum_{i=1}^n\pi_i^*(\bw)$ and
$\bw_\pi=\sum_{\bw\in\cW}\pi(\bw)\bw$. By \cref{thm:frisch-waugh}, the potential weights
are those from the regression on residualized covariates
\[
    \ddot{\bw}
    =
    \bw-1_T\frac{1_T'\bw}{T}
    -\bw_\pi+1_T\frac{1_T'\bw_\pi}{T}.
\]
Its Gram matrix is the positive scalar
$\tilde G_n=\sum_{t=1}^T\sum_{\bw\in\cW}\pi(\bw)\ddot{\bw}_t^2$.

\begin{enumerate}[label=(\alph*), wide]
\item
An implicit design $\tilde\pi_i$ satisfies
\[
    0
    =
    \sum_{\bw\in\cW}\tilde\pi_i(\bw)\ddot{\bw}
    =
    \pr{I_T-\frac{1}{T}1_T1_T'}
    \sum_{\bw\in\cW}\{\tilde\pi_i(\bw)-\pi(\bw)\}\bw .
\]
Thus $\tilde\pi_i=\pi$ is an implicit design. Conversely, the last display implies
$\sum_{\bw\in\cW}\{\tilde\pi_i(\bw)-\pi(\bw)\}\bw\in\Span(1_T)$. This vector also lies in
$\Span(\cW)$, so it is zero by the assumption that $\mathcal W$'s span excludes $1_T$. As
a result, $\sum_{\bw\in\cW}\{\tilde\pi_i(\bw)-\pi(\bw)\}\bw = 0$; that is, some linear
combination of $w \in \cW$ is zero.
Since the nonzero elements of $\cW$ are linearly
independent, $\tilde\pi_i(\bw)=\pi(\bw)$ for all $\bw\ne0$; since both vectors sum to one,
the equality also holds at $\bw=0$. Hence the implicit design is unique.

The same $\pi$ generates the relevant Gram matrix, because
\[
    G_n
    =
    \frac{1}{n}\sum_{i=1}^n\sum_{\bw\in\cW}\pi_i^*(\bw)\bz(\bw)'\bz(\bw)
    =
    \sum_{\bw\in\cW}\pi(\bw)\bz(\bw)'\bz(\bw).
\]

\item
By part (a), $\pi_i^*=\pi_i$ if and only if $\pi_i^*(\bw)=\pi(\bw)$ for every $i$ and
$\bw$.

\item
For $\pi_i(\bw)=\pi(\bw)$, define
\[
    g_t(\bw)=\ddot{\bw}_t
    =
    \bw_t-\bw_{\pi,t}-\frac{1}{T}(1_T'\bw-1_T'\bw_\pi).
\]
Then
\[
    \bomega_t(\bpi,\bw)
    =
    \pi(\bw)\brho_{it}(\bw)
    =
    \frac{\pi(\bw)g_t(\bw)}
    {\sum_{s=1}^T\sum_{\bw\in\cW}\pi(\bw)g_s(\bw)^2}.
\numberthis
\label{eq:twfe_estimand}
\]

Under staggered adoption, let $a(\bw)$ be the adoption time. The residual $g(t,a(\bw))$ in
\citet[][Theorem 1(ii)]{athey2018design} satisfies
\[
    \one\{a(\bw)\le t\}=\bw_t,
    \quad
    \sum_{\bw_1:a(\bw_1)\le t}\pi(\bw_1)=\bw_{\pi,t},
\]
and
\[
    a(\bw)\one\{a(\bw)\le T\}
    =
    T+1-1_T'\bw-(T+1)\one\{\bw=0_T\}.
\]
Substituting these identities into their expression gives
$g(t,a(\bw))=\bw_t-\bw_{\pi,t}-T^{-1}(1_T'\bw-1_T'\bw_\pi)=g_t(\bw)$. Hence
\eqref{eq:twfe_estimand} matches their weights.\qedhere
\end{enumerate}
\end{enumerate}
\end{proof}

\begin{proof}[Proof of \cref{prop:forbidden_interactions}]
By \cref{thm:frisch-waugh}, residualize $[W_i, W_i x_{1i}]$
against $1,x_{2i}$. Let $\delta_0,\delta_1$ be the projection coefficients of
$\pi_i^*$ on $1,x_{2i}$, and let $\Gamma_0,\Gamma_1$ be the projection coefficients of
$\pi_i^*x_{1i}$ on $1,x_{2i}$. The FWL-transformed regressors are
$[
 {W_i-\delta_0-\delta_1'x_{2i}}, 
    {W_ix_{1i}-\Gamma_0-\Gamma_1x_{2i}}
]'$
so the potential weights are
\[
    \rho_i(w)
    =
    \tilde G_n^{-1}
    \colvecb{2}{w-\delta_0-\delta_1'x_{2i}}
    {wx_{1i}-\Gamma_0-\Gamma_1x_{2i}},
\]
for a nonsingular transformed Gram matrix $\tilde G_n$. Therefore the
implicit-design equations are equivalent to
\[
    0
    =
    \pi_i
    \colvecb{2}{1-\delta_0-\delta_1'x_{2i}}{x_{1i}-\Gamma_0-\Gamma_1x_{2i}}
    +(1-\pi_i)
    \colvecb{2}{-\delta_0-\delta_1'x_{2i}}{-\Gamma_0-\Gamma_1x_{2i}}
\implies 
\pi_i=\delta_0+\delta_1'x_{2i},\quad
    \pi_i x_{1i}=\Gamma_0+\Gamma_1x_{2i}.
\]
Hence an implicit design exists exactly when
$(\delta_0+\delta_1'x_{2i})x_{1i}=\Gamma_0+\Gamma_1x_{2i}$ for every $i$, and then it
is unique with $\pi_i=\delta_0+\delta_1'x_{2i}$. If $\btau$ satisfies
\cref{item:M}, \cref{thm:main} further requires $\pi_i^*=\pi_i$, giving the two
necessary linearity conditions stated in the proposition.
\end{proof}

\begin{proof}[Proof of \cref{prop:time_varying_cov}] Let double dots $\ddot{v}$ denote
 population residualization of variable $v$ with respect to individual and time fixed
 effects. By \cref{thm:frisch-waugh}, we can study the regression of $\by_i$ on
\[
    \ddot{\bW}_i-\ddot{\bx}_i\beta_{w\to x}.
\]
Thus the potential weights are proportional to
$\ddot{\bw}-\ddot{\bx}_i\beta_{w\to x}$. The level-irrelevance equations therefore imply
\begin{align}
    \ddot{\bx}_i\beta_{w\to x}
    =
    \sum_{\bw\in\cW}\pi_i(\bw)\ddot{\bw}.
    \label{eq:level_irrelevance_panel_with_covariates}
\end{align}
The right-hand side lies in $\Span(\cW\cup\{1_T\})$: residualizing a
treatment path by two-way fixed effects only subtracts a linear combination of treatment
paths and multiples of $1_T$. On the other hand, for
$\bar x=n^{-1}\sum_j\bx_j$ and some scalar $c_i$,
\[
    \ddot{\bx}_i\beta_{w\to x}
    =
    (\bx_i-\bar x)\beta_{w\to x}+c_i1_T .
\]
Since $1_T$ already belongs to the span, the necessary condition in the
proposition follows. If $\beta_{w\to x}=0$, then
\eqref{eq:level_irrelevance_panel_with_covariates} reduces to
$\sum_{\bw}\pi_i(\bw)\ddot{\bw}=0$. Under the stated linear-independence condition, the
same argument as in \cref{thm:zoo}(\ref{item:athey-imbens})(a) gives the unique solution
$\pi_i(\bw)=n^{-1}\sum_j\pi_j^*(\bw)$. \qedhere
\end{proof}

\begin{proof}[Proof of \cref{iv:main}]
    Technically, we never precisely defined \cref{item:M} for TSLS. For
    completeness' sake, we say that
    $\btau$ is minimally quasi-experimental if under $\bpi^*$ it is invariant to replacing
    $y_i(\cdot)$ with $y_i(\cdot) + c_i$. 
    Then exactly analogous to \cref{thm:main}, it is
    minimally quasi-experimental if and only if an implicit design exists and equals $\bpi^*$.
    The estimand $\btau$ depends on the profile of compliance types $d_i^*(\cdot),
    i=1,\ldots,n$. The estimand is \eqref{eq:estimand_tsls}. By definition, $\btau$
    assigns proper weights if and only if $\bpi^*$ belongs to the set of implicit
    compliance profiles
    \eqref{eq:implicit_compliance_profile}. This is in turn equivalent to (3) and (4).
\end{proof}

\begin{proof}[Proof of \cref{item:tsls}]
    See \cref{sub:proof_tsls}.
\end{proof}

\end{appendices}

\newpage 

\begin{appendices}

\begin{center}
    \textbf{Online Appendix for} \\\textbf{``Potential weights and implicit causal
    designs in
    linear
    regression''}

Jiafeng Chen

\today
\end{center}

\DoToC

\section{Proofs of auxiliary results}

\subsection{Auxiliary results for \cref{thm:zoo}}

\begin{exsq}
\label{ex:pgp_example}
    It is simple to analyze \cref{thm:zoo}(\ref{item:pgp22}) fully numerically, at least
given a concrete data-generating process. To do so, we follow the numerical example in
Section 2.2 of \citet{goldsmith2022contamination}. Consider the specification \[ Y_i
=\alpha_0 + \tau_1 W_{i1} + \tau_2 W_{i2} + \gamma'x_i + \epsilon_i
\] with binary $x_i$ and $J=2$. Suppose exactly half the units have $x_i = 1$. For the
units with $x_i = 0$, $\pi^*_i(0) = 0.5$, $\pi^*_i(1) = 0.05$, and $\pi^*_i(2) = 0.45$.
For the units with $x_i = 1$, $\pi_i^*(0) = 0.1$, $\pi_i^*(1) = 0.45$, and $\pi_i^*(2) =
0.45$. The coefficients of interest are $\btau = [\tau_1, \tau_2]'$.

Numerically, we can verify that the assignment
probabilities $\bpi^*$ are the only solution to
\eqref{eq:pop_level_irrelevance_condition}, and thus $\bpi^*$ is the only implicit design
for  this regression specification. The corresponding
implicit estimand for $\tau_j$ is
then \[
\tau_j = \frac{1}{n} \sum_{i=1}^n \sum_{k=0}^J \underbrace{\pi_i(k)\rho_{ij}(k)}_{\omega_
{ij} (\bpi, k)} y_i(k),
\]
which is the  sum of potential outcomes weighted by $\omega_{ij}(\bpi, k)$.

We can calculate the implicit estimand as well, and inspecting the implicit estimand
allows us to recover the contamination bias in \citet{goldsmith2022contamination}. For each $x_i$ value and each estimand, we
tabulate $\omega_{ij}(\cdot)$ for individuals with those $x_i$: \[
\begin{tabular}{ccccccc} \toprule
 & \multicolumn{3}{c}{$\tau_1$} & \multicolumn{3}{c}{$\tau_2$} \\
        & $\omega_{i1}(0)$ &$\omega_{i1}(1)$ & $\omega_{i1}(2)$ & $\omega_{i2}(0)$
        &$\omega_{i2} (1)$ & $\omega_{i2}(2)$\\ \midrule
$x_i = 0$ & $-140/106$  &   $41/106$   &   {\color{red}$99/106$} & $-160/106$  &   {\color{red} $9/106$}   &   $151/106$\\
$x_i = 1$ & $-72/106$                &  $171/106$ & {\color{red}$-99/106$}  & $-52/106$
&  {\color{red} $-9/106$} & $61/106$ \\ \bottomrule
\end{tabular}
\] We find that the implicit estimand for $\tau_j$ does not solely involve $y_i(0)$ and
$y_i (j)$, contrary to intuition; moreover, $\tau_j$ does not measure the same estimand
for individuals with $x_i=0$ and $x_i=1$. This echoes the result by
\citet{goldsmith2022contamination}. In fact, under $\bpi^*$, the estimand
$\tau_1$ is \[ \tau_1 = \frac{1}{2} \bk{\frac{41}{106} \bar\tau_{1 \mid x=0} +
\frac{99}{106} \bar\tau_ {2 \mid x=0}} + \frac{1}{2} \bk{\frac{171}{106} \bar\tau_{1 \mid
x=1} - \frac{99}{106} \bar\tau_ {2 \mid x=1}} \] where $\bar \tau_{k \mid x=j}$ is the
mean of $y_i(k) - y_i(0)$ among those with $x_i = j$. This decomposition exactly matches
the decomposition (7) in \citet{goldsmith2022contamination}.
\end{exsq}

\subsection{Proof of \cref{item:tsls}}
\label{sub:proof_tsls}

\begin{proof}
    \begin{enumerate}[wide]
        \item 

For \eqref{eq:tsls-setup}, partition conformably \[
    H_n^{-1} = G_{zt} = \begin{bmatrix}
        H_{zt} & H_{zx} \\ 
        H_{xt} & H_{xx}
    \end{bmatrix}.
\]
Let $G_{1|2} = H_{zt} - H_{zx} H_{xx}^{-1} H_{xt}$ be a $J \times J$ matrix. It is
invertible if $H_n$ is invertible.  We compute that \[
    \rho_i(j) = [I_J, 0] H_n \colvecb{2}{e_j}{x_i} = G_{1|2}^{-1} \pr{
        e_j - H_{zx} H_{xx}^{-1} x_i
    }.
\]
The implicit design is therefore $\pi_i(\cdot)$ such that \[
    0 = G_{1|2}^{-1} \pr{ (\pi_i(1),\ldots,\pi_i(J))' - H_{zx} H_{xx}^{-1}
    x_i} \iff \pi_i(j) = (H_{zx} H_{xx}^{-1}
        x_i)_j.
\]
We conclude the proof by observing that $(H_{zx} H_{xx}^{-1} x_i)_j$ is exactly the
        population projection of $\pi_i^* (j)$ on $x_i$. That is, $(H_{zx} H_{xx}^{-1}
        x_i)_j
        = x_i'\delta_j$.

\item \begin{enumerate}[wide]
    \item  When $J+1=2$, we can compute \begin{align*}
    G_{1|2} &= \frac{1}{n} \sum_{i=1}^n \pi^*_i d_i^*(1)  - \frac{1}{n} \sum_{i=1}^n
        \underbrace{H_{zx} H_{xx}^{-1} x_i}_{\pi_i} \br{\pi_i^* d_i^*(1) + 
        (1-\pi_i^*) d_i^*(0))} \\ 
        &= \frac{1}{n} \sum_{i=1}^n \pi_i^* (1-\pi_i) d_i^*(1) - \frac{1}{n}\sum_{i=1}^n
        \pi_i (1-\pi_i^*) d_i^*(0) \\
        &= \frac{1}{n} \sum_{i=1}^n \pi_i^*(1-\pi_i^*) (d_i^*(1) - d_i^*(0)).
    \end{align*}

    Given a compliance profile $d_i(\cdot)$, 
the estimand is in turn characterized by \[\omega_i(1) = \rho_i(1) \pi_i^* =
    \frac{1}{G_{1|2}} (1-\pi_i^*) \pi_i^* = \frac{ (1-\pi_i^*) \pi_i^*}{\frac{1}{n}
    \sum_{i=1}^n \pi_i^*(1-\pi_i^*) (d_i^*(1) - d_i^*(0))},\] where $
        \tau = \frac{1}{n}\sum_{i=1}^n \omega_i(1) (y_i(d_i(1)) - y_i(d_i(0))).
    $
   For this estimand to assign proper weights, $\omega_i(1)$ should weakly have the same
   sign as
   $d_i(1) - d_i(0)$. Since the sign of $\omega_i(1)$ is determined globally by $\frac{1}{n}
    \sum_{i=1}^n \pi_i^*(1-\pi_i^*) (d_i(1) - d_i(0))$, this implies that either $d_i
    (1) \ge d_i(0)$ or the reverse for all $i$.

\item Without covariates, let \[
    Q = \pr{[0, I_{J}] - (\pi(1),\ldots,\pi(J))' 1_{J+1}'} \cdot
    \diag
    (\pi(0),\ldots, \pi(J)).
\]

Because $x_i$ is just a constant, the implicit design $\pi(\cdot) = \pi^*_i(\cdot)$ is the
same for every $i$. Observe that with this definition, the $\R^{J \times (J+1)}$ matrix
representing the implicit estimand is equal to \[ [\bomega(0; \bpi), \ldots, \bomega
(J; \bpi)] = G_{1|2}^{-1} Q.
\]

A given compliance type $d(\cdot)$ can be represented with a matrix $C(d) \in \R^{(J+1)
\times J}$, for which \[
    C_{kk'}(d) = \one(d(k) = k').
\]
For this compliance type, the matrix $\Omega(d)$ whose entries are $
    \Omega_{kk'}(d) = \omega_i^{(k,k')}(d)
$
can be represented as $
    \Omega(d) = G_{1|2}^{-1} Q C(d). 
$
This matrix does not depend on $i$. 

If $d_i = d$ is included in an implicit compliance profile, then  $\Omega(d)$ is a
positive semidefinite diagonal matrix. When that happens, since
$
QC(d) = G_{1|2} \Omega(d),
$
each column of $QC(d)$ is equal to the corresponding column of $G_
{1|2}$, scaled by some nonnegative scalar $c_{d,k}$. To that end, note that the $k$\th{}
column of $Q C(d)$ is \[
    (Q C(d))[:, k] = -\pr{\sum_{w: d(w) = k} \pi(w)} \colvecb{3}{\pi(1)}{\vdots}
    {\pi(J)} + e_d \odot \colvecb{3}{\pi(1)}{\vdots}
    {\pi(J)} 
\]
where $e_d$ is the binary vector for which $e_{d,m} = 1$ if $d(m) = k$, for $m \in 
\br{1,\ldots, J}$. 

We first show common compliance. Take two compliance types $d_i, d_j$ in the implicit
compliance profile. Without loss, suppose $i=1, j=2$. Note that both $QC(d_\ell)[:,k]$ for
$\ell \in\br{ 1,2}$ are proportional to $G_{1|2}[:,k]$. 
Then for all $k = 1,\ldots, J$, either (i) one of $(Q C (d_\ell))[:,k] = 0$ for some
$\ell = 1,2$, or (ii)
\[
    (Q C(d_1))[:,k] = c (Q C(d_2))[:,k]
\]
for some $c > 0$. In case (i), the displayed formula for $(Q C(d_\ell))[:,k]$ implies
that $e_{d_\ell}$ is a constant binary vector and equal to $\sum_{w: d(w)=k} \pi(w)$, so
$d_\ell$ is a $k$-always taker or a
$k$-never taker. In case (ii), after dividing by the strictly positive vector
$(\pi(1),\ldots,\pi(J))'$, we have
\[
    e_{d_1} - c e_{d_2}
    =
    \pr{\sum_{w:d_1(w)=k}\pi(w) - c\sum_{w:d_2(w)=k}\pi(w)}1_J .
\]
For two binary vectors, this can hold only if $e_{d_1}=e_{d_2}$, except for the already
covered case where one type is a $k$-always taker and the other is a $k$-never taker.
Given $e_{d_1}=e_{d_2}$, the same display forces
$\one(d_1(0)=k)=\one(d_2(0)=k)$ unless both types are again $k$-always/never takers. Thus
any
two $k$-compliers have the same set of instrument values leading to treatment $k$.
Because this holds for every $k=1,\ldots,J$, all implicit compliance profiles satisfy
common compliance.

Suppose now that the profile contains a full complier $d_f$. Each column of $QC(d_f)$ is
nonzero, so each diagonal entry of $\Omega(d_f)$ is positive and $QC(d_f)$ has rank $J$.
Since $d_f$ takes every nonzero treatment value, either $d_f$ is a permutation of
$\br{0,\ldots,J}$ or $d_f$ never takes value $0$. The latter case gives
$C(d_f)1_J=1_{J+1}$ and hence $QC(d_f)1_J=Q1_{J+1}=0$, contradicting rank $J$.
Thus, writing $f$ for the inverse permutation, $d_f(f(k))=k$ for every
$k=0,\ldots,J$.

Take any other $d_j(\cdot)$. If $d_j$ is not constant, then any nonzero value it takes is
taken by a $k$-complier. Common compliance with $d_f$ then implies that, for every
$k=1,\ldots,J$, $d_j^{-1}(k)$ is either empty or equal to $\br{f(k)}$. Therefore
$d_j(f(k))\in\br{0,k}$ for all $k$, so the profile satisfies extended
monotonicity. \qedhere
\end{enumerate}
    \end{enumerate}
\end{proof}

\section{Additional results}

\subsection{Relation to \citet{blandhol2022tsls} and \citet{bhuller20242sls}}
\label{sub:relation_to_blandhol}

\subsubsection{Relation to \citet{blandhol2022tsls}, Theorem 1}

We now discuss the relation to \citet{blandhol2022tsls} between their Theorem 1 and our
\cref{item:tsls}(1) and (2)(a). In our finite-population setting, the terminology of
\citet{blandhol2022tsls} has analogues: \begin{enumerate}
    \item Montonicity / Assumption MON: $d_i^*(1) \ge d_i^*(0)$ for all $i$
    \item $\tau$ is weakly causal: This is equivalent to that $\tau$ assigns proper
    weights and can be decomposed into \begin{enumerate}
        \item $\tau$ is minimally quasi-experimental: $\tau = \frac{1}{n} \sum_i \omega_i^* (y_i(1)
        -
        y_i(0))$ for some $\omega_i^* \in \R$
        \item $\tau$ has convex weights $\omega_i^* \ge 0$. 
   
    \end{enumerate}
     \item  Exogeneity/Assumption EXO: $\pi_i^*$ not a function of $y_i(\cdot),
     d_i(\cdot)$.
\end{enumerate}

Theorem 1 in \citet{blandhol2022tsls} states that under exogeneity and monotonicity,
$\tau$ from \eqref{eq:tsls-setup} (with $J=2$) is weakly causal if and only if $\pi_i^*$
is linear in $x_i$. \cref{iv:main} and \cref{item:tsls}(1) shows that $\tau$ is minimally
quasi-experimental if and only if $\pi_i^*$ is linear in $x_i$. It does not need to impose
exogeneity because exogeneity must be satisfied were the regression to estimate treatment
effect contrasts for arbitrary outcomes, per \cref{item:M}.  Compared to
\citet{blandhol2022tsls}, it allows for $J > 2$ but restricts to finitely valued
instruments. \cref{item:tsls}(1) thus clarifies that the equivalence in Theorem 1 of
\citet{blandhol2022tsls} is coming from the equivalence between level independence and
linearity of $\pi_i^*$.

When $\tau$ is minimally quasi-experimental and $J=2$, \cref{iv:main} and \cref{item:tsls}(2)(a) shows
that $\tau$ is furthermore weakly causal if and only if monotonicity holds. Thus,
together, we show the following: Under exogeneity, $\tau$ is weakly causal if and only if
$\pi_i^*$ is linear in $x_i$ and monotonicity holds. This moves monotonicity in
\citet{blandhol2022tsls} from an assumption to an implication.
\subsubsection{A counterexample to \citet{bhuller20242sls}'s Proposition 5}

Following \citet{bhuller20242sls}, we work in a sampling framework. Let $Z$ be an
instrument that takes values $\br{0,1,2}$ and $D$ be a treatment that takes values 
$\br{0,1,2}$. Let $Z_z = \one(Z=z)$ and similarly define $D_1, D_2$. Consider the
two-stage least-squares regression
$
   Y = \alpha + \beta_1 D_1 + \beta_2 D_2 + \epsilon,
$
instrumenting for $[1, D_1, D_2]$ using $[1, Z_1, Z_2]$. Assume that the instrument $Z$
randomly assigned. Let $s (\cdot)$ denote a response
type, mapping from instrument values to treatment values.

Proposition 5 in \citet{bhuller20242sls} states that TSLS assigns proper weights in this
regression if and only if there exists a one-to-one $f: \br{0,1,2} \to \br{0,1,2}$ such
that for all $k \in \br{1,2}$ and response type $s(\cdot)$, either (i) $s(z) \neq k$ for
all $z \in \br{0,1,2}$, (ii) $s(z)=k$ for all $z \in \br{0,1,2}$, or (iii) $s(z) = k \iff
f (z) = k$.

They then conclude that, up to permuting the instrument values, we must then have response
types only being $
    (s(0), s(1), s(2)) \in \br{
        (000), (111),(222), (010), (002), (012)
    }.
$

Unfortunately, this claim is not quite correct as stated. Consider instead three types of
individuals characterized by $ (A,B,C) \equiv (000), (011), (002)$. That is, type $A$
always picks $D=0$. Type $B$ picks $D=1$ when $Z=1$ or $Z=2$. Type $C$ picks $D=2$ only
when $Z=2$. These response types do satisfy common compliance---the only 1-complier is
type $B$ and the only 2-complier is type $C$. Compliance does not satisfy extended
monotonicity: $Z=2$ moves type $C$ to $2$ but type $B$ to $1$. 

Indeed, consider the only if clause of Proposition 5 in \citet{bhuller20242sls}:
\begin{itemize}
    \item  For $k=1$, types $A$ and $C$ satisfy (i). However, type B satisfying (iii)
    necessitates $f (1) = f(2) = 1$. This contradicts $f(\cdot)$ being one-to-one. 
\end{itemize}
Thus, Proposition 5 in \citet{bhuller20242sls} would claim that TSLS does not assign
proper weights. However, in this example, one can show that \[
    \beta_1 = \E[Y(1) - Y(0) \mid B] \quad \beta_2 = \E[Y(2) - Y(0) \mid C].
\]
Thus TSLS does assign proper weights.

To verify this claim, observe that since $Y = D_1 (Y(1) - Y(0)) + D_2 (Y(2)-Y(0)) + Y(0) $
\begin{align*}
& Y - \alpha - \beta_1 D_1 + \beta_2 D_2 %
= Y(0) - \alpha + \begin{cases}
    0, &\text{$Z=0$ or type $A$}\\
    Y(1) - Y(0) - \beta_1  & \text{ $Z=1$, type $B$ }\\
    0 & \text{ $Z=1$, type $C$ } \\ 
     Y(1) - Y(0) - \beta_1  & \text{$Z=2$, type $B$}\\
     Y(2) - Y(0) - \beta_2 & \text{$Z=2$, type $C$}.
\end{cases}.
\end{align*}
Therefore, \begin{align*}
 &\E[Y- \alpha - \beta_1 D_1 + \beta_2 D_2 \mid Z=z] \\
 &= 
 \begin{cases}
        \E[Y(0) - \alpha]& z=0\\
        \E[Y(0) - \alpha] + \P(B)\E[Y(1) - Y(0) - \beta_1 \mid B]& z=1 \\ 
        \E[Y(0) - \alpha] + \P(C)\E[Y(1) - Y(0) - \beta_2 \mid C] + \P(B)\E[Y(1) - Y(0) -
        \beta_1
        \mid B]& z=2.
    \end{cases}
\end{align*}
Thus, $\alpha = \E[Y(0)]$, $\beta_1 = \E[Y(1) - Y(0) \mid B]$ and $\beta_2 = \E[Y(2)-Y
(0)\mid C]$ satisfies the moment condition for the TSLS specification: $\E[\epsilon \mid
Z=z] = 0$.

\subsection{Invariances}
\label{asub:invariances}

\begin{theorem}[Frisch--Waugh--Lovell, in population]
    \label{thm:frisch-waugh}
Consider a population regression specification $(\Lambda, G_n, z_1,\ldots, z_n)$.
Partition $z_t (\bx_i, \bw)$ into $z_{t1}(\cdot)$ and $z_{t2}(\cdot)$. Suppose $\Lambda =
[\Lambda_1, 0]$ loads solely on entries in $z_{t1}$: $\Lambda z_t = \Lambda_1 z_{t1}$. Let
\[\Gamma = \pr{\frac{1}{n}
\sum_{i,t} \sum_{\bw \in \cW} \pi_i^*(\bw) z_{t1}(\bx_i, \bw) z_{t2}(\bx_i, \bw)'}
\pr{\frac{1}{n} \sum_{i,t} \sum_{\bw \in \cW} \pi_i^*(\bw) z_{t2}(\bx_i, \bw)
z_{t2}(\bx_i, \bw)'}^{-1}
\]
 be the population projection matrix of $z_{t1}$ onto $z_{t2}$. Define $\tilde z_{t}
 (\bx_i, \bw) = z_{t1}(\bx_i, \bw) - \Gamma z_{t2}(\bx_i, \bw)$. Consider the regression
 specification defined by $\Lambda_1, \tilde z_t(\cdot),$ and \[
 \tilde G_n = \frac{1}{n} \sum_{i,t} \sum_{\bw \in \cW} \pi_i^*(\bw) \tilde z_t(\bx_i,
 \bw) \tilde z_t(\bx_i, \bw)'.
 \]
 Then:
 \begin{enumerate}[wide, label=(\roman*)]
    \item $\Gamma, \tilde G_n$ are functions of the original Gram matrix $G_n$. Since
    $G_n$ is assumed to be invertible, $\Gamma$ is well-defined and $\tilde G_n$ is
    positive definite.
    \item The potential weights associated with the two population regression specifications
    are the same.
 \end{enumerate}
\end{theorem}

\begin{proof}
Note that we can partition \[
G_n = \frac{1}{n} \sum_{i,t} \sum_{\bw \in \cW} \pi_i^*(\bw) z_t(\bx_i, \bw) z_t(\bx_i, \bw)' = \begin{bmatrix}
    G_{11} & G_{12} \\
    G_{21} & G_{22}
\end{bmatrix}
\]
conformably where $G_{11}$ is the Gram matrix associated with $z_{t1}$ and $G_{22}$ is the
Gram matrix associated with $z_{t2}$. Since $G_n$ is assumed to be invertible and thus
positive definite, $G_{11}, G_{22}$ are invertible and positive definite as well. This
implies that $\Gamma = G_{12} G_{22}^{-1}$ and $\tilde G_n = G_{11} - \Gamma G_{21}$ are
well-defined. This also implies that the Schur complement $\tilde G_n = G_{11} - \Gamma
G_{21}$ is invertible (and positive definite).  This proves (i).

By the block matrix inversion formula, \[
G_n^{-1} = \begin{bmatrix}
(G_{11} - G_{12} G_{22}^{-1} G_{21})^{-1} & -(G_{11} - G_{12} G_{22}^{-1} G_{21})^{-1} G_{12}
G_{22}^{-1} \\
    -G_{22}^{-1} G_{21} (G_{11} - G_{12} G_{22}^{-1} G_{21})^{-1} & G_{22}^{-1} +
    G_{22}^{-1} G_{21} (G_{11} - G_{12} G_{22}^{-1} G_{21})^{-1} G_{12} G_{22}^{-1}
\end{bmatrix}.
\]
Hence \[
    \Lambda G_n^{-1} = \bk{\Lambda_1(G_{11} - \Gamma G_{21})^{-1} \quad -\Lambda_1 (G_{11}
    - \Gamma G_{21})^{-1} \Gamma}
\]
Now, the potential weights for the original regression specification are \begin{align*}
    \brho_{it}(\bw) &= \Lambda_1(G_{11} - \Gamma G_{21})^{-1} z_{t1}(\bx_i, \bw) -\Lambda_1
    (G_{11} - \Gamma G_{21})^{-1} \Gamma z_{t2}(\bx_i, \bw) \\&= \Lambda_1(G_{11} - \Gamma
    G_{21})^{-1} \tilde z_{t}(\bx_i, \bw)  = \Lambda_1 \tilde G_n^{-1} \tilde z_{t}(\bx_i, \bw).
\end{align*}
This proves (ii).
\end{proof}

\begin{theorem}[Invariance under reparameterization]
    \label{thm:reparameterization}
    Consider a population regression specification $(\Lambda, G_n, z_1,\ldots, z_n)$. Consider an invertible matrix $M$ and
    another population specification defined by $\tilde z_t (\bx_i, \bw) = M z_t (\bx_i,
    \bw)$ and $\tilde \Lambda = \Lambda M'$, so that the two vectors of coefficients
    represent the same underlying contrasts. Then the potential weights associated with
    the two regression specifications are the same. The estimated potential weights are
    also the same.

\end{theorem}

\begin{proof}
    We can easily see that the second regression specification has Gram matrix $\tilde
    G_n = M G_n M'$ (and $\widehat {\tilde {G}_n} = M \hat G_n M'$). Thus the potential weights for the second specification are \[
    \tilde \brho_{it}(\bw) = \tilde \Lambda (M')^{-1} G_n^{-1} M^{-1} M z_t(\bx_i, \bw) =
    \brho_{it}(\bw).
    \]
    The corresponding equalities similarly hold for the estimated potential weights.
\end{proof}

\subsection{Additional results for interacted regression}
\label{sub:interact_more}

\Cref{prop:interact_with_t} provides a unified analysis of the regression 
\eqref{eq:interact_general}. It implies our results in \cref{thm:zoo}(\ref
 {item:kline}) when $\bar x_t = \bar x_1$ or $\bar x_0$, but allows for evaluating on
 other points in the line segment connecting $\bar x_1$ and $\bar x_0$. The implicit
 design for the model-based ATE is analyzed when we set $t = \alpha_0 = \frac{1}{n}\sum_
 {i=1}^n \pi_i^*$. 

Inspecting the resulting implicit designs \eqref{eq:implicit_design_interact}, we find
that they are in general fractional linear, encompassing the linear-odds result of $t=1$
as a special case. Moreover, a necessary implication of $\pi_i^* = \pi_i$ is the condition
\eqref{eq:necessary_interact}, which is very difficult to satisfy unless $t=0$ or $t=1$.
In this sense, the model-based ATU and the model-based ATT are the only estimands whose
implicit design is reasonable. See \cref{rmk:ate_wrong} for an example where the implicit
model for the ATE depends on the support of $x_i$. 

\begin{prop}
\label{prop:interact_with_t}
Consider the specification indexed by $t \in \R$, \[
Y_i = \gamma_0 + W_i \tau_t + \gamma_0 + \gamma_1'x_i + W_i (x_i - \bar x_t)'\gamma_2 +
\epsilon_i \numberthis \label{eq:interact_general}
\]
where $\bar x_t = t \bar x_1  + (1-t) \bar x_0$ for $\bar x_1 = \frac{\frac{1}{n} \sum_
{i=1}^n
\pi_i^* x_i}{\frac{1}{n}\sum_{i=1}^n \pi_i^*}$ and $\bar x_0 = \frac{\frac{1}{n} \sum_
{i=1}^n
(1-\pi_i^*) x_i}{\frac{1}{n}\sum_{i=1}^n (1-\pi_i^*)}$.

Let \begin{align*}
A &= \frac{1}{n}\sum_{i=1}^n (x_i-\bar x )(x_i-\bar x)' \\
V_t &= \frac{1}{n}\sum_{i=1}^n \pi_i^* (x_i-\bar x_t )(x_i-\bar x_t)' \quad U_t = 
\frac{1}{n}\sum_{i=1}^n (1-\pi_i^*) (x_i-\bar x_t )(x_i-\bar x_t)' 
\\
\alpha_0 &= \frac{1}{n}\sum_{i=1}^n \pi_i^* \\
\Gamma_{0t} &= \frac{1}{n} \sum_{i=1}^n \pi_i^* (x_i - \bar x_t) \\ 
\alpha_1 &= A^{-1} \frac{1}{n} \sum_ {i=1}^n \pi_i^* (x_i - \bar x) = A^{-1} (\Gamma_{0t}
+ \alpha_0 (\bar x_t - \bar x)) \\
\Gamma_{1t} &= \frac{1}{n} \sum_{i=1}^n \pi_i^* (x_i - \bar x_t)(x_i - \bar x)' = \pr{V_t
+
\Gamma_{0t}(\bar x_t - \bar x)'} A^{-1} \\ 
\Gamma_{2t} &= \pr{\frac{1}{n} \sum_{i=1}^n \E\bk{
    W_i(x-\bar x_t) \pr{W_i (x-\bar x_t) - \Gamma_{0t}- \Gamma_{1t} (x_i-\bar x)}'
}}^{-1}  \\
& \quad \quad \times \frac{1}{n} \sum_{i=1}^n\E\bk{
    (W_i - \alpha_0 - \alpha_1'(x_i-\bar x)) \pr{W_i (x_i-\bar x_t) - \Gamma_{0t} -
    \Gamma_
    {1t} (x-\bar x)}
} \\ 
&= \pr{V_t - \Gamma_{0t}\Gamma_{0t}' - V_t \Gamma_{1t}'  - \Gamma_{0t}(\bar x_t - \bar
x)'\Gamma_{1t}'}^{-1} \pr{
    \Gamma_{0t} - \alpha_0 \Gamma_{0t} - V_t \alpha_1 - \Gamma_{0t} (\bar x_t - \bar x)'
    \alpha_1
}
\end{align*} 

Then, the potential weights satisfy \[
    \rho_i(w) = \tilde G_n^{-1} \pr{
        w (1-\Gamma_{2t}'(x_i - \bar x_t)) - \underbrace{(\alpha_0 - \Gamma_{2t}'\Gamma_
{0t})}_{\theta_{0t}} -
\underbrace{(\alpha_1 - \Gamma_{1t}'\Gamma_{2t})'}_{\theta_{1t}'}(x_i - \bar x)
    },
\]
for some scalar $\tilde G_n > 0$. The implicit design $\pi_i$ satisfies \[
\pr{1-\Gamma_{2t}'(x_i - \bar x_t)} \pi_i = \theta_{0t} + \theta_{1t}'(x_i-\bar x). 
\numberthis 
\label{eq:implicit_design_interact}
\]

Moreover, if $\pi_i^* = \pi_i$, then a necessary condition is that \[
t V_1 (\theta_{1t} + \Gamma_{2t}) = (1-t) U_0 \theta_{1t}.
\numberthis
\label{eq:necessary_interact}
\]
\end{prop}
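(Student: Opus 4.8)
The plan has three components: the potential-weight formula, the implicit-design equation, and the necessary condition, which carries the real work. For the potential weights I apply the population Frisch--Waugh--Lovell theorem (\cref{thm:frisch-waugh}) to project $W_i$ onto the span of the remaining regressors $\{1,x_i,W_i(x_i-\bar x_t)\}$, organized as two nested steps: residualize $W_i$ and $W_i(x_i-\bar x_t)$ against $[1,x_i]$, obtaining $\dot W_i=(W_i-\pi_i^*)+\eta_i$ and $\dot V_i=(W_i-\pi_i^*)(x_i-\bar x_t)+\upsilon_i$ where $\eta_i=\pi_i^*-\alpha_0-\alpha_1'(x_i-\bar x)$ and $\upsilon_i=\pi_i^*(x_i-\bar x_t)-\Gamma_{0t}-\Gamma_{1t}(x_i-\bar x)$ are the population residuals (whose coefficients, computed using $W_i^2=W_i$, are exactly the stated $\alpha_0,\alpha_1,\Gamma_{0t},\Gamma_{1t}$); then residualize $\dot W_i$ against $\dot V_i$ with coefficient $\Gamma_{2t}$. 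Evaluating the twice-residualized regressor at treatment $w$ gives $w(1-\Gamma_{2t}'(x_i-\bar x_t))-\theta_{0t}-\theta_{1t}'(x_i-\bar x)$ with $\theta_{0t}=\alpha_0-\Gamma_{2t}'\Gamma_{0t}$ and $\theta_{1t}=\alpha_1-\Gamma_{1t}'\Gamma_{2t}$; since the transformed Gram matrix $\tilde G_n=\frac{1}{n}\sum_i\E_{W_i\sim\pi_i^*}[\ddot W_i^2]$ is a positive scalar by \cref{thm:frisch-waugh}, dividing by it gives the claimed $\rho_i(w)$. For the implicit design, $T=J=1$ reduces \eqref{eq:pop_level_irrelevance_condition} to the single equation $\pi_i\rho_i(1)+(1-\pi_i)\rho_i(0)=0$ (as in \cref{cor:binary_main}); multiplying by $\tilde G_n>0$, the pieces common to $\rho_i(1)$ and $\rho_i(0)$ cancel and leave the linear equation $(1-\Gamma_{2t}'(x_i-\bar x_t))\pi_i=\theta_{0t}+\theta_{1t}'(x_i-\bar x)$, which is \eqref{eq:implicit_design_interact}, determining $\pi_i$ uniquely wherever $\rho_i(1)\neq\rho_i(0)$.

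For the necessary condition, suppose $\pi_i^*=\pi_i$, so \eqref{eq:implicit_design_interact} holds pointwise with $\pi_i^*$ in place of $\pi_i$. Substituting $\theta_{0t}=\alpha_0-\Gamma_{2t}'\Gamma_{0t}$, $\theta_{1t}=\alpha_1-\Gamma_{1t}'\Gamma_{2t}$ and recognizing $\alpha_0+\alpha_1'(x_i-\bar x)$ and $\Gamma_{0t}+\Gamma_{1t}(x_i-\bar x)$ as the $[1,x_i]$-projections of $\pi_i^*$ and $\pi_i^*(x_i-\bar x_t)$, the identity rearranges to the equivalent pointwise relation $\eta_i=\upsilon_i'\Gamma_{2t}$ for all $i$. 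I then plug this into the normal equation defining $\Gamma_{2t}$, namely $\big(\frac{1}{n}\sum_i\E[\dot V_i\dot V_i']\big)\Gamma_{2t}=\frac{1}{n}\sum_i\E[\dot V_i\dot W_i]$; using $\E[\dot V_i\dot V_i']=\pi_i^*(1-\pi_i^*)(x_i-\bar x_t)(x_i-\bar x_t)'+\upsilon_i\upsilon_i'$ and $\E[\dot V_i\dot W_i]=\pi_i^*(1-\pi_i^*)(x_i-\bar x_t)+\upsilon_i\eta_i$, the term $\frac{1}{n}\sum_i\upsilon_i\upsilon_i'\Gamma_{2t}$ cancels from both sides and leaves $W_t\Gamma_{2t}=w_t$, where $W_t=\frac{1}{n}\sum_i\pi_i^*(1-\pi_i^*)(x_i-\bar x_t)(x_i-\bar x_t)'$ and $w_t=\frac{1}{n}\sum_i\pi_i^*(1-\pi_i^*)(x_i-\bar x_t)$.

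Next I take two weighted moments of \eqref{eq:implicit_design_interact} to surface $V_1$ and $U_0$. Re-centering the identity at $\bar x_1$ (using $\bar x_1-\bar x_t=(1-t)(\bar x_1-\bar x_0)$ and $\bar x_1-\bar x=(1-\alpha_0)(\bar x_1-\bar x_0)$), multiplying by $\pi_i^*(x_i-\bar x_1)$, averaging, and using $\frac{1}{n}\sum_i\pi_i^*(x_i-\bar x_1)=0$ together with $(\pi_i^*)^2=\pi_i^*-\pi_i^*(1-\pi_i^*)$ yields $V_1(\theta_{1t}+\Gamma_{2t})=W_1\Gamma_{2t}-A_1w_1$, with $A_1=1-(1-t)(\bar x_1-\bar x_0)'\Gamma_{2t}$ and $W_1,w_1$ the analogues of $W_t,w_t$ centered at $\bar x_1$; the symmetric computation centered at $\bar x_0$, after writing $\pi_i^*=1-(1-\pi_i^*)$ and multiplying by $(1-\pi_i^*)(x_i-\bar x_0)$, gives $U_0\theta_{1t}=A_0w_0-W_0\Gamma_{2t}$ with $A_0=1+t(\bar x_1-\bar x_0)'\Gamma_{2t}$. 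The target $tV_1(\theta_{1t}+\Gamma_{2t})=(1-t)U_0\theta_{1t}$ is therefore equivalent to $(tW_1+(1-t)W_0)\Gamma_{2t}=tA_1w_1+(1-t)A_0w_0$, which I verify by the parallel-axis decomposition of the $\pi_i^*(1-\pi_i^*)$-weighted moments: with $\bar x_p$ the weighted mean of $x_i$ and $S_p$ the weighted scatter about it, $W_j=S_p+\bar p(\bar x_p-\bar x_j)(\bar x_p-\bar x_j)'$ and $w_j=\bar p(\bar x_p-\bar x_j)$ for $j\in\{1,0,t\}$, and since $\bar x_t=t\bar x_1+(1-t)\bar x_0$, substituting $W_t\Gamma_{2t}=w_t$ collapses both sides to the common value $\bar p\big[(\bar x_p-\bar x_t)+t(1-t)((\bar x_1-\bar x_0)'\Gamma_{2t})(\bar x_1-\bar x_0)\big]$. \Cref{thm:main} then packages the implicit-design equation and this necessary condition into the stated claim.

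The main obstacle is the bookkeeping in the necessary-condition step: keeping the four centering points ($\bar x,\bar x_t,\bar x_1,\bar x_0$) and the two weightings ($\pi_i^*$ and $1-\pi_i^*$) straight, and spotting the two cancellations — the $\upsilon_i\upsilon_i'$ term in the $\Gamma_{2t}$ normal equation, and the collapse in the parallel-axis identity. Degenerate configurations, namely $\frac{1}{n}\sum_i\pi_i^*(1-\pi_i^*)=0$ (all $\pi_i^*\in\{0,1\}$) or $\Gamma_{2t}'(x_i-\bar x_t)=1$ for some $i$, make both sides trivially zero or the implicit design non-unique, and are disposed of separately; invertibility of the original Gram matrix (hence of $A$, and positivity of $\tilde G_n$) enters only through \cref{thm:frisch-waugh}.
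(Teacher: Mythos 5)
Your proposal is correct, and for the potential weights and the implicit-design equation it coincides with the paper's proof: both apply the population Frisch--Waugh--Lovell theorem (\cref{thm:frisch-waugh}) to the residualized regression \eqref{eq:fwl_interacted} and read off \eqref{eq:implicit_design_interact} from the single binary level-irrelevance equation. For the necessary condition your route is organized differently, and arguably more transparently, than the paper's. The paper starts from the closed-form expression for $\Gamma_{2t}$, expands it into the orthogonality identity $0=\frac{1}{n}\sum_i(1-\pi_i^*)(x_i-\bar x_t)(\theta_{0t}+\theta_{1t}'(x_i-\bar x))$, substitutes the pointwise relation $\theta_{0t}+\theta_{1t}'(x_i-\bar x)=(1-\Gamma_{2t}'(x_i-\bar x_t))\pi_i^*$, and then splits $x_i-\bar x_t=t(x_i-\bar x_1)+(1-t)(x_i-\bar x_0)$ to land on \eqref{eq:necessary_interact}. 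You instead observe that $\pi_i^*=\pi_i$ is exactly $\eta_i=\upsilon_i'\Gamma_{2t}$, feed this into the normal equations so that the $\frac{1}{n}\sum_i\upsilon_i\upsilon_i'\Gamma_{2t}$ terms cancel and isolate the clean intermediate identity $W_t\Gamma_{2t}=w_t$ (which the paper reaches only implicitly, as $0=\frac{1}{n}\sum_i\pi_i^*(1-\pi_i^*)(x_i-\bar x_t)(1-\Gamma_{2t}'(x_i-\bar x_t))$), and then convert to the target by taking $\pi_i^*(x_i-\bar x_1)$- and $(1-\pi_i^*)(x_i-\bar x_0)$-weighted moments of \eqref{eq:implicit_design_interact} and closing the loop with the parallel-axis decomposition of the $\pi_i^*(1-\pi_i^*)$-weighted second moments. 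I checked the moment identities $V_1(\theta_{1t}+\Gamma_{2t})=W_1\Gamma_{2t}-A_1w_1$ and $U_0\theta_{1t}=A_0w_0-W_0\Gamma_{2t}$ and the final collapse to $\bar p\,[(\bar x_p-\bar x_t)+t(1-t)((\bar x_1-\bar x_0)'\Gamma_{2t})(\bar x_1-\bar x_0)]$; they are right. What your organization buys is that the role of the two hypotheses is cleanly separated --- the normal equations give $W_t\Gamma_{2t}=w_t$ once $\pi^*=\pi$ is imposed, and everything else is a weighted-moment identity that holds unconditionally --- whereas the paper's direct expansion obscures this but requires no auxiliary parallel-axis bookkeeping. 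Your closing remark on degenerate cases (all $\pi_i^*\in\{0,1\}$, or $\Gamma_{2t}'(x_i-\bar x_t)=1$) is a point the paper does not address explicitly, and handling it separately as you propose is the right move.
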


\begin{proof}
The FWL residual of $W_i$ after partialling out the intercept, $x_i-\bar x$,
and $W_i(x_i-\bar x_t)$ is
\[
R_i(w)
=
w\{1-\Gamma_{2t}'(x_i-\bar x_t)\}
-\theta_{0t}-\theta_{1t}'(x_i-\bar x).
\]
Thus, by \cref{thm:frisch-waugh},
\[
    \rho_i(w)=\tilde G_n^{-1}R_i(w)
\]
for some $\tilde G_n > 0$ from FWL. This is the displayed formula for the
potential weights. The implicit-design equation is
\[
    0=\pi_i\rho_i(1)+(1-\pi_i)\rho_i(0),
\]
which is equivalent to \eqref{eq:implicit_design_interact}.

Now suppose $\pi_i^*=\pi_i$. Since $R_i(W_i)$ is a regression residual, it is orthogonal to
the intercept, $x_i-\bar x$, and $W_i(x_i-\bar x_t)$. Because $x_i-\bar x_t$ is a linear
combination of the intercept and $x_i-\bar x$, it follows that $R_i(W_i)$ is orthogonal to
both $x_i-\bar x_t$ and $W_i(x_i-\bar x_t)$, so
\[
    0
    =
    \frac{1}{n}\sum_{i=1}^n
    \E\bk{(1-W_i)(x_i-\bar x_t)R_i(W_i)}
    =
    -\frac{1}{n}\sum_{i=1}^n
    (1-\pi_i^*)(x_i-\bar x_t)
    \{\theta_{0t}+\theta_{1t}'(x_i-\bar x)\}.
\]
Using \eqref{eq:implicit_design_interact} with $\pi_i=\pi_i^*$ gives
\[
    \theta_{0t}+\theta_{1t}'(x_i-\bar x)
    =
    \pi_i^*\{1-\Gamma_{2t}'(x_i-\bar x_t)\},
\]
and hence
\[
    0=
    \frac{1}{n}\sum_{i=1}^n
    \pi_i^*(1-\pi_i^*)(x_i-\bar x_t)
    \{1-\Gamma_{2t}'(x_i-\bar x_t)\}.
\]
Finally, write $x_i-\bar x_t=t(x_i-\bar x_1)+(1-t)(x_i-\bar x_0)$. For the first term,
substitute
\[
    (1-\pi_i^*)\{1-\Gamma_{2t}'(x_i-\bar x_t)\}
    =
    1-\Gamma_{2t}'(x_i-\bar x_t)
    -\theta_{0t}-\theta_{1t}'(x_i-\bar x);
\]
constants drop out because $\sum_i\pi_i^*(x_i-\bar x_1)=0$, and the remaining cross-products
are both $V_1$. For the second term, substitute
\[
    \pi_i^*\{1-\Gamma_{2t}'(x_i-\bar x_t)\}
    =
    \theta_{0t}+\theta_{1t}'(x_i-\bar x);
\]
constants drop out because $\sum_i(1-\pi_i^*)(x_i-\bar x_0)=0$, and the remaining
cross-product is $U_0$. Therefore
\[
    0=-tV_1(\theta_{1t}+\Gamma_{2t})+(1-t)U_0\theta_{1t},
\]
which is \eqref{eq:necessary_interact}.
\end{proof}

\begin{prop}
\label{prop:kline_prop}
In \cref{prop:interact_with_t}, for $t = 1$, the implicit design is equal to \[
\pi_i = \frac{\alpha_0 + ((I-\Gamma_1')^{-1} \alpha_1)' (x-\bar x)}{1 + ((I-\Gamma_1')^
{-1} \alpha_1)'(x-\bar x_1)}.
\]
The associated odds is \[
\frac{\pi_i}{1-\pi_i} = \frac{\alpha_0 + ((I-\Gamma_1')^{-1} \alpha_1)' (x-\bar x)}{
    1-\alpha_0 + ((I-\Gamma_1')^{-1} \alpha_1)' (\bar x - \bar x_1)
} \equiv \delta_0 + \delta_1'(x_i-\bar x_0).
\]
Here, \begin{align*}
\delta_1 &= \pr{\frac{1}{n} \sum_{i=1}^n (1-\pi_i^*) (x_i - \bar x_0)(x_i-\bar x_0)}^{-1}
\frac{1}{n} \sum_{i=1}^n (1-\pi_i^*) (x_i - \bar x_0) \frac{\pi_i^*}{1-\pi_i^*}
\\
\delta_0 &= \frac{\alpha_0}{1-\alpha_0} =  \frac{1}{n} \sum_{i=1}^n \frac{(1-\pi_i^*)}
{1-\alpha_0} \frac{\pi_i^*} {1-\pi_i^*}
\end{align*}
\end{prop}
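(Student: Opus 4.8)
The plan is to specialize \cref{prop:interact_with_t} to $t=1$ and simplify aggressively. The first observation is that centering at $\bar x_1$ forces $\Gamma_{01} = \frac{1}{n}\sum_{i=1}^n \pi_i^*(x_i - \bar x_1) = 0$, which collapses almost every expression in \cref{prop:interact_with_t}. With $\Gamma_{01}=0$ one reads off $\Gamma_{11} = V_1 A^{-1}$ and $\alpha_1 = \alpha_0 A^{-1}(\bar x_1 - \bar x)$, and the long formula for $\Gamma_{21}$ telescopes to $\Gamma_{21} = -(I - \Gamma_{11}')^{-1}\alpha_1 =: -c$; since $(I-\Gamma_{11}')c = \alpha_1$, this in turn gives $\theta_{01} = \alpha_0$ and $\theta_{11} = \alpha_1 + \Gamma_{11}'c = c$. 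Substituting into the implicit-design identity \eqref{eq:implicit_design_interact}, which for $t=1$ reads $(1 - \Gamma_{21}'(x_i - \bar x_1))\pi_i = \theta_{01} + \theta_{11}'(x_i - \bar x)$, produces $(1 + c'(x_i - \bar x_1))\pi_i = \alpha_0 + c'(x_i - \bar x)$, hence the stated formula for $\pi_i$. Computing $1 - \pi_i = \frac{1 - \alpha_0 + c'(\bar x - \bar x_1)}{1 + c'(x_i - \bar x_1)}$ then expresses the odds $\pi_i/(1-\pi_i)$ as a ratio whose numerator is affine in $x_i$ and whose denominator $D := 1 - \alpha_0 + c'(\bar x - \bar x_1)$ does not depend on $i$; after recognizing $c = (I-\Gamma_1')^{-1}\alpha_1$, this is exactly the displayed odds formula.

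Next I would identify the intercept. The elementary identity $\bar x = \alpha_0 \bar x_1 + (1-\alpha_0)\bar x_0$ (which is just $\frac{1}{n}\sum_i x_i = \frac{1}{n}\sum_i \pi_i^* x_i + \frac{1}{n}\sum_i (1-\pi_i^*)x_i$) gives $\bar x_1 - \bar x = (1-\alpha_0)s$ and $\bar x_0 - \bar x = -\alpha_0 s$, where $s := \bar x_1 - \bar x_0$; in particular $D = (1-\alpha_0)(1 - c's)$. Rewriting $\frac{\pi_i}{1-\pi_i} = D^{-1}(\alpha_0 + c'(x_i - \bar x))$ in terms of $x_i - \bar x_0$, the constant term is $D^{-1}(\alpha_0 + c'(\bar x_0 - \bar x)) = D^{-1}\alpha_0(1 - c's) = \frac{\alpha_0}{1-\alpha_0}$, so $\delta_0 = \frac{\alpha_0}{1-\alpha_0}$; the rewriting $\frac{\alpha_0}{1-\alpha_0} = \frac{1}{n}\sum_i \frac{1-\pi_i^*}{1-\alpha_0}\cdot\frac{\pi_i^*}{1-\pi_i^*}$ is immediate since $\frac{1}{n}\sum_i (1-\pi_i^*) = 1-\alpha_0$.

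The substantive step is the slope $\delta_1 = c/D$. Using $\Gamma_{11}' = A^{-1}V_1$ one has $(I - \Gamma_{11}')^{-1} = (A - V_1)^{-1}A$, so $c = \alpha_0 (A-V_1)^{-1}(\bar x_1 - \bar x) = \alpha_0(1-\alpha_0)(A-V_1)^{-1}s$. The key algebraic fact is the rank-one decomposition $A - V_1 = U_0 + \alpha_0(1-\alpha_0)\,ss'$, which I would prove by splitting $A = \frac{1}{n}\sum_i(x_i-\bar x)(x_i-\bar x)'$ into its $\pi_i^*$- and $(1-\pi_i^*)$-weighted parts, re-centering the first at $\bar x_1$ and the second at $\bar x_0$ (picking up $\alpha_0(\bar x_1 - \bar x)(\bar x_1-\bar x)'$ and $(1-\alpha_0)(\bar x_0 - \bar x)(\bar x_0 - \bar x)'$ respectively), and substituting $\bar x_1 - \bar x = (1-\alpha_0)s$, $\bar x_0 - \bar x = -\alpha_0 s$. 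Sherman--Morrison then gives $(A - V_1)^{-1}s = \frac{1}{1 + \alpha_0(1-\alpha_0)r}\,U_0^{-1}s$ with $r := s'U_0^{-1}s$, so $c = \frac{\alpha_0(1-\alpha_0)}{1+\alpha_0(1-\alpha_0)r}U_0^{-1}s$ and $c's = \frac{\alpha_0(1-\alpha_0)r}{1+\alpha_0(1-\alpha_0)r}$, whence $D = (1-\alpha_0)(1-c's) = \frac{1-\alpha_0}{1+\alpha_0(1-\alpha_0)r}$. The Sherman--Morrison factors cancel, leaving $\delta_1 = c/D = \alpha_0 U_0^{-1}s = U_0^{-1}\cdot\frac{1}{n}\sum_i \pi_i^*(x_i - \bar x_0) = U_0^{-1}\cdot\frac{1}{n}\sum_i (1-\pi_i^*)\frac{\pi_i^*}{1-\pi_i^*}(x_i - \bar x_0)$, where the middle equality uses $\frac{1}{n}\sum_i \pi_i^*(x_i - \bar x_0) = \alpha_0(\bar x_1 - \bar x_0) = \alpha_0 s$; recognizing $U_0 = \frac{1}{n}\sum_i(1-\pi_i^*)(x_i-\bar x_0)(x_i - \bar x_0)'$ finishes the match. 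The main obstacle is spotting the decomposition $A - V_1 = U_0 + \alpha_0(1-\alpha_0)ss'$ and verifying that $D$ is precisely the reciprocal Sherman--Morrison normalizer, so that the scalar clutter cancels; the rest is bookkeeping.
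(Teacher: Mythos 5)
Your proof is correct and follows essentially the same route as the paper's: the same collapse via $\Gamma_{01}=0$ giving $\theta_{01}=\alpha_0$ and $\theta_{11}=(I-\Gamma_{11}')^{-1}\alpha_1$, the same rank-one decomposition $A-V_1=U_0+\alpha_0(1-\alpha_0)(\bar x_1-\bar x_0)(\bar x_1-\bar x_0)'$, and the same Sherman--Morrison cancellation for $\delta_1$. The only (cosmetic) difference is that you obtain $\delta_0=\alpha_0/(1-\alpha_0)$ directly from the factorization $D=(1-\alpha_0)(1-c's)$ rather than through the Sherman--Morrison normalizer, which is a slight streamlining of the same computation.
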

\begin{proof}
We omit the $t=1$ subscript. Since $\Gamma_0=0$ by construction, the definition of
$\Gamma_2$
gives
\[
    \Gamma_2=-(I-\Gamma_1')^{-1}\alpha_1,
    \quad
    \theta_0=\alpha_0,
    \quad
    \theta_1=\alpha_1-\Gamma_1'\Gamma_2=(I-\Gamma_1')^{-1}\alpha_1.
\]
\Cref{prop:interact_with_t} then gives
\[
    \pi_i
    =
    \frac{\alpha_0+\theta_1'(x_i-\bar x)}
    {1+\theta_1'(x_i-\bar x_1)}
\]
and therefore
\[
    \frac{\pi_i}{1-\pi_i}
    =
    \frac{\alpha_0+\theta_1'(x_i-\bar x)}
    {1-\alpha_0+\theta_1'(\bar x-\bar x_1)}.
\]

It remains to identify the intercept and slope in this last expression. Since
$\Gamma_1=V_1A^{-1}$ and
$\alpha_1=A^{-1}\alpha_0(\bar x_1-\bar x)$,
\[
    (A-V_1)\theta_1=\alpha_0(1-\alpha_0)(\bar x_1-\bar x_0).
\]
Using $\bar x=\alpha_0\bar x_1+(1-\alpha_0)\bar x_0$,
\[
    A-V_1
    =
    U_0+\alpha_0(1-\alpha_0)(\bar x_1-\bar x_0)(\bar x_1-\bar x_0)'.
\]
Hence
\[
    U_0\theta_1
    =
    \alpha_0(1-\alpha_0)
    \{1-\theta_1'(\bar x_1-\bar x_0)\}(\bar x_1-\bar x_0).
\]
Also,
\[
    1-\alpha_0+\theta_1'(\bar x-\bar x_1)
    =
    (1-\alpha_0)\{1-\theta_1'(\bar x_1-\bar x_0)\}.
\]
Dividing the preceding two displays gives
\[
    \frac{\theta_1}{1-\alpha_0+\theta_1'(\bar x-\bar x_1)}
    =
    U_0^{-1}\alpha_0(\bar x_1-\bar x_0).
\]
Because
\[
    \frac{1}{n}\sum_{i=1}^n
    (1-\pi_i^*)(x_i-\bar x_0)\frac{\pi_i^*}{1-\pi_i^*}
    =
    \frac{1}{n}\sum_{i=1}^n\pi_i^*(x_i-\bar x_0)
    =
    \alpha_0(\bar x_1-\bar x_0),
\]
this is the displayed formula for $\delta_1$.

Finally,
\[
    \frac{\alpha_0+\theta_1'(\bar x_0-\bar x)}
    {1-\alpha_0+\theta_1'(\bar x-\bar x_1)}
    =
    \frac{\alpha_0\{1-\theta_1'(\bar x_1-\bar x_0)\}}
    {(1-\alpha_0)\{1-\theta_1'(\bar x_1-\bar x_0)\}}
    =
    \frac{\alpha_0}{1-\alpha_0}.
\]
This is the displayed $\delta_0$, since
\[
    \frac{\alpha_0}{1-\alpha_0}
    =
    \frac{1}{n}\sum_{i=1}^n
    \frac{1-\pi_i^*}{1-\alpha_0}\frac{\pi_i^*}{1-\pi_i^*}.
\]
 \end{proof}

\begin{rmksq}
\label{rmk:ate_wrong}
Here we consider the ramifications of \eqref{eq:necessary_interact} for the model-based
ATE ($\tau_{\alpha_0}$). For some true design $\bpi^*$ to justify causal interpretation,
we must have that this design is fractional linear: \[
    \pi_i^* = \frac{\theta_0 + \theta_1'(x_i - \bar x)}{1-\Gamma_2'(x_i - \bar x)}
\]
for some $\theta_0, \theta_1, \Gamma_2$. Moreover, the choices $\theta_0, \theta_1,
\Gamma_2$ needs to satisfy \eqref{eq:necessary_interact}, which depends on the covariates
$x_1,\ldots, x_n$. We may thus consider the parameter space as \[
   \Theta(t; x_{1:n}) = \br{(\theta_0, \theta_1, \Gamma_2) : \text{ $\tau_{t}$
   is a causal contrast under the corresponding
    $\bpi^*$ under $(\theta_0, \theta_1, \Gamma_2)$ }}.
\]
Note that for $t=0$ and $t=1$, this parameter space does not depend on $x_{1:n}$ aside
from the obvious support restriction. For instance, if $x_{1:n}$ and $\tilde x_{1:n}$ are
two sets of covariate values, and \[
    \pi_i^*/(1-\pi_i^*) = \delta_0 + \delta_1' (x_i - \bar x_1) \ge 0 \quad \tilde\pi_i^*/
    (1-\tilde\pi_i^*) = \delta_0 + \delta_1' (\tilde x_i - \bar {\tilde x}_1)
    \ge 0,
\]
then the corresponding parameters are in $\Theta(1; x_{1:n}) \cap \Theta(1; \tilde x_
{1:n})$---similarly for $t=0$. This is true by \cref{prop:kline_prop}.  However, for other
$t$, and in particular for $t = \alpha_0$, it is possible that some
$(\theta_0,\theta_1,\Gamma_2 )\in \Theta(\alpha_0; x_{1:n})$, still does not belong to
$\Theta(\alpha_0; \tilde x_{1:n})$,  even though no support restriction is violated \[
    \tilde \pi_i^* \equiv \frac{\theta_0 + \theta_1'(\tilde x_i - \bar{\tilde x})}
    {1-\Gamma_2'(\tilde x_i - \bar {\tilde x})} \in [0,1] \text{ for all $i$}.
\]
It is not very difficult to come up with a numerical example for such a case. %
\end{rmksq}

 \begin{restatable}[No simple regression estimates the ATE under linear design]{prop}
 {propnoatereg}
\label{prop:no-ate-regression}
Let $n \ge 3$. Let $W_i \in \br{0,1}$, covariates $x_i \in \R^d$, and $y_i(\cdot) \in
\R$. Suppose the true design is linear $\pi_i^* = \delta_0 + \delta_1'x_i$ for some
$\delta_0
\in
\R, \delta_1
\in \R^d$. There is no regression $ (\Lambda, z(x, w))$---where $\Lambda$ may\footnote{This 
is to accommodate for estimands like the model-based ATT, where we may consider contrasts
that depend on $\bar x_1 = \sum_i \pi_i^* x_i/\sum_i \pi_i^*$} depend on
$x_{1:n}, \pi^*_{1:n}$---such that:
\begin{enumerate}
    \item (Regression is linear in covariates) For all $m$ and all $w$, the $m$\th{} entry
    of $z (x_i, w)$ is of the form $a_m(w) + b_m(w)'x_i$ for some fixed conformable $a_m
    (\cdot), b_m(\cdot)$.
    \item ($\Lambda\beta$ is the ATE) The corresponding estimand $\Lambda\beta \in \R$ is
    equal to the ATE, regardless of the configuration of $d, x_{1:n}, \delta_0, \delta_1$
    (such that $\pi_i^* \in [0,1]$ for all $i$).
\end{enumerate}
\end{restatable}

\begin{proof}[Proof of \cref{prop:no-ate-regression}]
It suffices to exhibit one linear design under which no such regression can
estimate the ATE. Take scalar covariates with support containing $-1,0,1$ and contained
in $[0,1]$. Set
$\pi_i^*=1/2+x_i/4$, so $\pi_i^*\in(0,1)$. Since every entry of $z(x_i,w)$ is affine in
$x_i$, the potential weights are affine in $x_i$:
\[
    \rho_i(w)=\Lambda G_n^{-1}z(x_i,w)=c_{0w}+c_{1w}x_i .
\]
If the regression estimated the ATE, then level irrelevance and the
requirement that every treated-arm ATE weight equals one would imply
\[
    \pi_i^*\rho_i(1)+(1-\pi_i^*)\rho_i(0)=0,
    \qquad
    \pi_i^*\rho_i(1)=1 .
\]
Eliminating $\pi_i^*$ gives
\[
    \rho_i(1)\rho_i(0)=\rho_i(1)+\rho_i(0).
\]
It is not difficult to show that substituting $\rho_i(w) = c_{0w} + c_{1w}x_i$ and
evaluating at $x=-1,0,1$ implies $c_{10}=c_{11}=0$. Thus both potential weights are
constant in $x$. But then
$    \pi_i^*\rho(1)+(1-\pi_i^*)\rho(0)=0 $
cannot hold for three distinct propensity-score values unless
$\rho(1)=\rho(0)=0$, contradicting $\pi_i^*\rho_i(1)=1$.
\end{proof}

\subsection{Additional results for panel specifications}
\label{sub:additional_panel}

\begin{prop}[One-way fixed effects]
    \label{prop:owfe}
    Consider the regression $Y_{it} = \alpha_i + \tau W_{it} + \epsilon_{it}$. If the span
    of $\cW \subset \br{0,1}^T$ excludes $1_T$, then the set of implicit designs is empty.
\end{prop}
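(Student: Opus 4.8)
The plan is to reduce to a one-dimensional Frisch--Waugh--Lovell representation and then exploit the hypothesis on $\Span(\cW)$. First I would apply \cref{thm:frisch-waugh}, partitioning the regressors into the treatment term $z_{t1}(\bx_i,\bw)=\bw_t$ and the block of unit dummies $z_{t2}(\bx_i,\bw)=e_i\in\R^n$. A direct computation gives $\frac1n\sum_{i,t}\sum_{\bw}\pi_i^*(\bw)\,e_ie_i'=\frac{T}{n}I_n$ and shows that the population projection coefficient $\Gamma$ of $z_{t1}$ onto $z_{t2}$ is the row vector whose $i$-th entry is $m_i:=\frac1T\sum_{t=1}^T\E_{\bW_i\sim\pi_i^*}[\bW_{it}]=\frac1T\sum_{\bw}\pi_i^*(\bw)(1_T'\bw)$. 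Thus the residualized regressor is the scalar $\tilde z_t(\bx_i,\bw)=\bw_t-m_i$, and \cref{thm:frisch-waugh} delivers a positive scalar $\tilde G_n>0$ with potential weights $\brho_i(\bw)=\tilde G_n^{-1}(\bw-m_i 1_T)'\in\R^{1\times T}$.

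Next I would translate the level-irrelevance system \eqref{eq:pop_level_irrelevance_condition}. Clearing the positive scalar $\tilde G_n$ and combining $\sum_{\bw}\pi_i(\bw)(\bw-m_i1_T)=0$ with $\sum_{\bw}\pi_i(\bw)=1$ yields $v_i:=\sum_{\bw\in\cW}\pi_i(\bw)\bw=m_i1_T$ for every $i$. The crux is that $v_i$ is a real linear combination of vectors in $\cW$, hence $v_i\in\Span(\cW)$, while simultaneously $v_i=m_i1_T\in\Span(\{1_T\})$. Since $1_T\notin\Span(\cW)$ by hypothesis, these subspaces meet only at the origin, so $v_i=0$ and therefore $m_i=0$ for all $i$.

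Finally I would rule out $m_i=0$. Because $1_T'\bw\ge 0$ for $\bw\in\{0,1\}^T$ with equality only at $\bw=0_T$, the equation $m_i=\frac1T\sum_{\bw}\pi_i^*(\bw)(1_T'\bw)=0$ forces $\pi_i^*$ to be supported on $\cW\cap\{0_T\}$ for every $i$. If $0_T\notin\cW$ this support is empty, which is impossible. If $0_T\in\cW$, then $\pi_i^*(0_T)=1$ for all $i$, so $\bW_i\equiv 0_T$ almost surely; then the treatment column of the design is identically zero, making it collinear with any unit dummy and contradicting the standing assumption in \cref{defn:population_regression_specification} that $G_n$ is invertible. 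Either way we obtain a contradiction, so the set of implicit designs---even allowing improper ones---is empty.

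I expect the main obstacle to be the bookkeeping in the first step: correctly identifying $\Gamma$ and, in particular, confirming via \cref{thm:frisch-waugh} that $\tilde G_n$ is a genuine \emph{positive} scalar (which is exactly where invertibility of $G_n$ enters the first time). The span argument in the second step is the conceptual heart but is short once the representation is in place, and the edge case $0_T\in\cW$ in the last step needs the invertibility hypothesis invoked carefully.
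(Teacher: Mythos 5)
Your step 2 (the span argument) is exactly the paper's key move, but step 1 lands on the wrong potential weights for this specification, and the difference is not cosmetic. In the paper's framework, unit fixed effects are \emph{not} parametrized as dummies and then projected out in the population: \cref{rmk:fe} prescribes absorbing them via the within-transformation, i.e.\ taking the covariate transform to be $z_t(\bx_i,\bw)=\bw_t-\tfrac{1}{T}1_T'\bw$ (so that $\sum_t z_t=0$), precisely because the Gram matrix of the many-dummy parametrization is not consistently estimable --- the sample FWL demeans by the \emph{realized} path mean $\tfrac{1}{T}1_T'\bW_i$, which is a single draw and hence not consistent for your $m_i$. Accordingly, the paper's proof works with $\brho_i(\bw)\propto \bw-1_T\tfrac{1_T'\bw}{T}$, a deterministic function of $\bw$ that depends on neither $i$ nor $\bpi^*$, whereas your population projection onto dummies gives $\brho_i(\bw)\propto\bw-m_i1_T$ with $m_i$ depending on $\pi_i^*$. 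These are genuinely different objects (the two specifications are not related by \cref{thm:frisch-waugh} or \cref{thm:reparameterization}), and they lead to different downstream arguments: with the paper's weights, level irrelevance reads $\sum_{\bw}\pi_i(\bw)\bw=1_T\cdot\tfrac{1}{T}\sum_\bw\pi_i(\bw)1_T'\bw\in\Span(1_T)\cap\Span(\cW)=\{0\}$ and the contradiction with $1_T\notin\Span(\cW)$ is immediate, with no need for your extra step ruling out $m_i=0$ via invertibility of $G_n$.

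That said, your argument is internally coherent for the specification you actually wrote down, and the comparison is instructive because the two residualizations even disagree on the edge case $0_T\in\cW$: under the within-transform weights $\brho_i(0_T)=0$, so the degenerate design $\pi_i=\delta_{0_T}$ formally satisfies \eqref{eq:pop_level_irrelevance_condition} (a case the paper's one-line proof passes over, since its displayed equation only forces $1_T\in\Span(\cW)$ when the scalar $\sum_\bw\pi_i(\bw)1_T'\bw$ is nonzero), whereas under your weights it does not. So before your proof can stand in for the paper's, you need to justify why the population object attached to ``$\alpha_i$'' is the dummy projection rather than the within-transformation that \cref{rmk:fe} commits to and that the paper uses in all of its panel derivations.
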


\begin{proof}
    Let $\dot{\bw} = \bw - 1_T \frac{1_T'\bw}{T}$.  Then, by \cref{thm:frisch-waugh}, the
    potential weights are equal to \[
    \rho_i(w) = \frac{\dot\bw}{\frac{1}{n} \sum_{i=1}^n \sum_{\bw\in \cW} \pi_i^*(\bw) \dot \bw' \dot \bw}.
    \]
    Thus, if $\pi_i$ corresponds to some implicit design $\bpi$, \[
    \sum_{\bw \in \cW}\pi_i(\bw) \pr{\bw - 1_T \frac{1_T'\bw}{T}} = 0.
    \]
    In particular, this implies that $1_T$ is in the linear span of $\cW$. By assumption,
    this is not the case. Therefore the set of implicit designs must be empty.
\end{proof}

Intuitively, this is because one-way fixed effects fail to account for the case where the
treatment path correlates with underlying potential outcomes. If there are no treatment
effects whatsoever, but the baseline potential outcome correlates with the treatment path
(e.g., later potential outcomes tend to be larger in staggered adoption), then one-way
fixed effects would in general estimate a nonzero coefficient, violating level
irrelevance.
{These results contrast with the results in
\citet{arkhangelsky2023fixed}, where the one-way fixed effect regression admits a
design-based interpretation under within-unit random assignment of $W_{it}$. To reconcile,
\citet{arkhangelsky2023fixed} consider a sampling-based setup and the unconfoundedness
restriction \[ W_{it} \indep (Y_{it}(0), Y_{it}(1)) \mid \frac{1}{T}\sum_{s=1}^T W_{is},
\]
which is their (2.2) in our notation. Under staggered adoption, for instance, the
associated propensity score $
\P\pr{W_{it}=1 \mid \frac{1}{T}\sum_{s=1}^T W_{is}}
$ is always degenerate, since $\frac{1}{T}\sum_{s=1}^T W_{is}$ perfectly distinguishes
which treatment path unit $i$ is assigned. Conversely, if $W_{it}$ is randomly assigned
within a unit and that permuting the time index results in valid counterfactual
assignments, then $\cW$ is large enough to contain $1_T$.
}

\begin{prop}[Event-study and more complex TWFE]
\label{prop:twfe}
Consider the regression $Y_{it} = \alpha_i + \gamma_t + \beta' f_{t}(\bW_i)$ with target
estimand $\btau = \Lambda\beta$ under some design $\bpi^*$. We have:
\begin{enumerate}[wide, label=(\roman*)]
    \item Whether or not $\btau$ satisfies level
irrelevance under the true design $\bpi^*$, one proper and Gram-consistent implicit design
is \[\pi_i (\bw) \equiv \pi(\bw) \equiv \frac{1}{n}\sum_
{j=1}^n \pi_j^* (\bw). \numberthis \label{eq:panel_random_assignment}\]
\item Suppose $\Lambda = I$. If any column $k$ of $f_t(\bW_i)$ is such that the multiset 
\[\pr{\colvecb{3}{f_{1k}(\bw)}{\vdots}{f_{Tk}(\bw)} : \bw \in \cW}
\]
(a) contains the zero vector at most once, (b) has nonzero vectors that are linearly
independent, and (c) does not span the subspace $\br{c1_T : c\in \R}$, then $\bar\pi$ is
also the unique implicit design.
\end{enumerate}
\end{prop}

\begin{proof}
\begin{enumerate}[wide, label=(\roman*)]
    \item By \cref{thm:frisch-waugh}, it suffices to compute potential weights for the two-way
residualized specification. That is, let \[
\bz_i(\bw) = \colvecb{3}{f_{1}(\bw)'}{\vdots}{f_{T}(\bw)'} \equiv \bz(\bw)
\]
be the covariate transform. Note that the population residual of projecting $\bz_i(\bW_i)$
on unit and time fixed effect is \begin{align*}
    \ddot{\bz}(\bw) &= \bz(\bw) - 1_T \frac{1_T'}{T} \bz(\bw) - \frac{1}{n} \sum_{i=1}^n
\sum_{\bw' \in \cW} \pi_i^*(\bw') \bz(\bw') + 1_T \frac{1_T'}{T} \frac{1}{n} \sum_{i=1}^n
\sum_{\bw' \in \cW} \pi_i^*(\bw') \bz(\bw') \\
&= \bz(\bw) - 1_T \frac{1_T'}{T} \bz(\bw)  -
\sum_{\bw' \in \cW} \pi(\bw')\bz(\bw') + 1_T \frac{1_T'}{T}
\sum_{\bw' \in \cW}  \pi(\bw') \bz(\bw')
\end{align*}
Thus the potential weights are $
\brho_i(\bw) =  \Lambda \pr{\frac{1}{n} \sum_{i=1}^n
\E[\ddot{\bz}(\bw)'\ddot{\bz}(\bw)]}^{-1} \ddot{\bz}(\bw)'.
$
Note that \[
\sum_{\bw \in \cW}\pi(\bw) \brho_i(\bw)  = \Lambda \pr{\frac{1}{n} \sum_{i=1}^n
\E[\ddot{\bz}(\bw)'\ddot{\bz}(\bw)]}^{-1} \underbrace{\pr{\sum_{\bw \in \cW}\pi(\bw)
\ddot{\bz}(\bw)'}}_{ = 0}
= 0.
\]
Therefore $\bpi$ defined by $\pi_i(\bw) = \pi(\bw)$ is a valid implicit design. It is proper by
definition. Now, let $\dot{\bz}(\bw)$ be the within-transformed covariate transform for this regression (which includes the time
fixed effects), which does not
depend on $i$. Note that the Gram matrix is \[
G_n (\bpi^*) = \frac{1}{n} \sum_{i=1}^n \E[ \dot{\bz}(\bw)' \dot{\bz}(\bw) ] = \sum_{\bw \in \cW}
\pi(\bw)  \dot{\bz}(\bw)' \dot{\bz}(\bw) = G_n(\bpi).
\]
Thus $\bpi$ is Gram-consistent.

\item In this case, any implicit design $\tilde \pi_i$ needs to satisfy\[
    0 = \sum_{\bw\in \cW} \tilde \pi(\bw) \brho_i(\bw) = \pr{\frac{1}{n} \sum_{i=1}^n
\E[\ddot{\bz}(\bW)'\ddot{\bz}(\bW)]}^{-1} \pr{\sum_{\bw \in \cW} \tilde \pi_i(\bw) 
\ddot{\bz}(\bw)'}
\]
Since $\Lambda = I$ and $\frac{1}{n} \sum_{i=1}^n
\E[\ddot{\bz}(\bW)'\ddot{\bz}(\bW)]$ is positive definite by \cref{thm:frisch-waugh}, this
implies that \[
\sum_{\bw \in \cW} \tilde \pi_i(\bw) 
\ddot{\bz}(\bw)' = 0
\]
Inspecting this expression, we have \[
    \sum_{\bw \in \cW} (\tilde \pi_i(\bw) - \pi(\bw)) \bz(\bw) = 1_T\sum_{\bw \in
    \cW} (\tilde \pi_i(\bw) - \pi(\bw)) \frac{1_T'\bz (\bw)}{T}
\]
Column $k$ of this expression is in turn \[
    \sum_{\bw \in \cW} (\tilde \pi_i(\bw) - \pi(\bw)) \colvecb{3}{f_{1k}(\bw)}{\vdots}{f_
    {Tk}(\bw)} = 1_T C_k
\]
for some scalar $C_k$. The left hand side is in the span of $(f_{1k}(w) ,\ldots, f_{Tk}
(w))'$, which excludes $1_T$ by assumption. Thus,  \[
    \sum_{\bw \in \cW} (\tilde \pi_i(\bw) - \pi(\bw)) \colvecb{3}{f_{1k}(\bw)}{\vdots}{f_
    {Tk}(\bw)} = 0.
\]
Since at most only one $\bw$ leads to $(f_{1k}(w) ,\ldots, f_{Tk}(w))' = 0$ and nonzero
vectors $(f_{1k}(w) ,\ldots, f_{Tk}(w))'$ are linearly independent, we have that $
    \tilde \pi_i(\bw) = \pi(\bw)
$
for all but one $\bw$ by linear independence. Since both designs sum to 1, they also equal
for remaining $\bw$. 
\end{enumerate}
\end{proof}

\begin{prop}[Forbidden comparisons]
    \label{prop:forbidden_comparisons}
    Consider the TWFE specification $Y_{it} = \alpha_i + \gamma_t + \tau W_{it}$ under
    staggered adoption ($\bw_t \le \bw_{t+1}$ for all $\bw \in \cW$ and all $t \in [T]$).
    Suppose the treatment time is randomly assigned so that
    $\pi_i^*(\bw) = \pi^*(\bw)$. If there are two treatment paths with one being
    never-treated, $\bW = \br{0, \bw}$, then $\brho_{it}(\bw) \ge 0$ for all post
    treatment periods $\bw_t = 1$ and $i \in [n]$. Otherwise, in all other configurations
    where $|\cW| \ge 2$, there exists a choice of $\pi^*(\bw)$ such that the corresponding
    $\brho_{it}(\bw) < 0$ for some treatment path $\bw$ with positive assignment
    probability ($\pi^*(\bw ) > 0$) and some post-treatment period $t$ ($\bw_t = 1$).
\end{prop}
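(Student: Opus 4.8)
The plan is to reduce both claims to the sign of the scalar $g_t(\bw)$ that appears in the proof of \cref{thm:zoo}(\ref{item:athey-imbens}), and then to evaluate that scalar in the two regimes. Recall from that proof that, under the randomly-assigned design $\pi_i^*(\bw) = \pi^*(\bw)$, the potential weight of $\bw$ for unit $i$ in period $t$ is $\brho_{it}(\bw) = \tilde G_n^{-1}\, g_t(\bw)$, where $\tilde G_n > 0$ is a positive scalar (positive by \cref{thm:frisch-waugh}, since $G_n$ is invertible) and, writing $\bw_\pi = \sum_{\bw' \in \cW}\pi^*(\bw')\,\bw'$,
\[
g_t(\bw) = \bw_t - \bw_{\pi,t} - \tfrac{1}{T}\pr{1_T'\bw - 1_T'\bw_\pi}.
\]
Hence $\brho_{it}(\bw)$ has the same sign as $g_t(\bw)$ and in particular does not depend on $i$, so it suffices to study $g_t$. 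I parametrize staggered paths by adoption time: for $\bw \neq 0_T$ write $a(\bw) \in \br{1,\ldots,T}$ so that $\bw_t = \one(t \ge a(\bw))$ and $k(\bw) = 1_T'\bw = T - a(\bw) + 1$; a period $t$ is post-treatment for $\bw$ exactly when $t \ge a(\bw)$.

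For the first statement, let $\cW = \br{0_T,\bw}$ with $\bw$ the treated path and $p = \pi^*(\bw)$; since $G_n$ is assumed invertible one has $p \in (0,1)$ (and $\bw \neq 1_T$). Then $\bw_\pi = p\,\bw$, so for every post-treatment period $t$ (i.e.\ $\bw_t = 1$),
\[
g_t(\bw) = 1 - p - \tfrac{1}{T}\pr{k(\bw) - p\,k(\bw)} = (1-p)\,\frac{T - k(\bw)}{T} \ge 0,
\]
because $p < 1$ and $k(\bw) \le T$. Thus $\brho_{it}(\bw) \ge 0$ for all $i$, proving the first claim.

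For the second statement, assume $|\cW| \ge 2$ and that $\cW$ is \emph{not} of the form $\br{0_T,\bw}$; then $\cW \setminus \br{0_T}$ contains two distinct paths, which necessarily have distinct adoption times, and I relabel them $\bw_1, \bw_2$ with $a_1 := a(\bw_1) < a_2 := a(\bw_2)$. Take $\pi^*$ supported on $\br{\bw_1,\bw_2}$ with $\pi^*(\bw_1) = p$ and $\pi^*(\bw_2) = 1 - p$ for an arbitrary $p \in (0,1)$; this design is admissible, since the residualized treatment $\ddot{\bw}_1$ is a nonzero scalar multiple of the demeaning of $\bw_1 - \bw_2$ and $\bw_1 - \bw_2 \notin \Span(1_T)$ (both paths being nonzero and distinct), so $\tilde G_n > 0$ and $G_n$ is invertible. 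Now pick $t = a_2$, which lies in $\br{1,\ldots,T}$ and is post-treatment for $\bw_1$ because $a_2 > a_1$. At this period both paths are treated, so $\bw_{\pi,t} = p + (1-p) = 1$; writing $\bar k = p\,k(\bw_1) + (1-p)\,k(\bw_2)$,
\[
g_{a_2}(\bw_1) = 1 - 1 - \tfrac{1}{T}\pr{k(\bw_1) - \bar k} = -\,\frac{(1-p)\pr{k(\bw_1) - k(\bw_2)}}{T} < 0,
\]
since $a_1 < a_2$ forces $k(\bw_1) = T - a_1 + 1 > T - a_2 + 1 = k(\bw_2)$. Therefore $\brho_{i,a_2}(\bw_1) < 0$ while $\pi^*(\bw_1) = p > 0$ and $a_2$ is post-treatment for $\bw_1$, which is the asserted forbidden comparison.

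The argument is essentially mechanical once the formula for $g_t(\bw)$ is in hand; the only points that genuinely need care — both in the second statement — are checking that the design supported on just two treatment paths is admissible (so that $G_n$ is invertible and the potential weights are defined) and that the period $t = a_2$ is both in the observation window and post-treatment for the offending path $\bw_1$. Both follow at once from the staggered-adoption structure, but they should be written out; the remainder is substitution into the displayed expression for $g_t$.
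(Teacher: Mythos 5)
Your proof is correct and follows essentially the same route as the paper's: both start from the closed-form potential weight $\brho_{it}(\bw) \propto \bw_t - \bw_{\pi,t} - \tfrac{1}{T}(1_T'\bw - 1_T'\bw_\pi)$, reduce the two-path case to $(1-p)(1 - 1_T'\bw/T) \ge 0$, and in the general case put all mass on two nonzero paths and exhibit a negative weight for the earlier adopter at a post-treatment period (the paper evaluates at $t = T$ with $p = 1/2$; you evaluate at $t = a_2$ with general $p$, which is an immaterial variation). Your added check that the two-path design keeps $\tilde G_n > 0$ is a welcome bit of care the paper's proof leaves implicit.
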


\begin{proof}
    The potential weight can be computed in closed form. For some $V > 0$, \[
    \brho_{it}(\bw) = V^{-1}\bk{\bw_t - \frac{1'}{T}\bw - \sum_{\tilde \bw \in \cW} \pi^*(\tilde
    \bw) \tilde \bw_t + \sum_{\tilde \bw \in \cW} \pi^*(\tilde \bw) \frac{1'}{T}\tilde
    \bw}.
    \]
    When $\cW = \br{0, \bw}$, then for a post-treatment $t$, \[
    \brho_{it}(\bw) = V^{-1} (1- \pi^*(\bw))(1-1'\bw/T) \ge 0.
    \]

    Otherwise, let $\bw$ be the treatment path with the earliest adoption date, and
    consider $t= T$. By assumption, $\cW$ contains a path that adopts later than $\bw$. Then \[
    \brho_{it}(\bw) = V^{-1} \pr{
        (1- \pi^*(\bw)) (1-1'\bw/T) - \sum_{\tilde \bw \neq \bw} \pi^*(\tilde \bw) \pr{1-1'\tilde \bw/T}
    }
    \]
    Pick $\pi^*(\bw) = 1/2 = \pi^*(\tilde \bw)$ where $\tilde \bw$ adopts later than
    $\bw$. Then \[
        \brho_{it}(\bw) = V^{-1} \pr{
           1/2 (1-1'\bw/T) - 1/2 \pr{1-1'\tilde \bw/T}
        } = \frac{1}{2} V^{-1} 1'(\tilde \bw - \bw)/T < 0
    \]
    since $\tilde \bw$ adopts later than $\bw$.
\end{proof}

\begin{prop}[Time-varying covariates by interacting unit-specific covariates with time
fixed effect]
\label{prop:unit-covariate-interact}

Suppose $\bW$ satisfies staggered adoption and excludes an always-treated unit. Consider
the TWFE specification $
    \bY_{it} = \alpha_i + \gamma_t + \tau \bW_{it} + x_i'\eta_t.
$
$\tau$ is a causal contrast only if there is some $\delta_0(\cdot), \delta_1(\cdot)$ such
that $
    \pi_i^*(\bw) = \delta_{0}(\bw) + \delta_1(\bw)' x_i
$
for all $i \in [n], \bw \in \cW$.
    
\end{prop}

\begin{proof} 

Let $\bw(t)$ be the path in $\cW$ that is first treated in time $t$. The
 never-treated unit is denoted by $\bw(\infty)$. By excluding the always-treated path,
 $\bw(t) \in \cW$ implies that $t \ge 2$. 

 By the same argument leading to \eqref{eq:level_irrelevance_panel_with_covariates}, we
 have that if $\tau$ is a causal contrast, then for some $\rho_i, \lambda_t, \kappa_t$,
 for all $t \in [T]$ and $i \in [n]$,
 \[
     \sum_{\bw \in \cW} \pi_i^*(\bw) \bw_t = \rho_i + \lambda_t + \kappa_t'x_i
 \]
 For $s \ge 2$ a treatment time, consider taking the difference of the above equation
 between $s$ and $s-1$. Because  \[\bw_s - \bw_{s-1} = \begin{cases}
     1, &\text{ if } \bw = \bw(s)\\
     0 & \text{otherwise},
 \end{cases}\]we have $
     \pi_i^*(\bw(s)) = \lambda_s - \lambda_{s-1} + (\kappa_{s} - \kappa_{s-1})'\bx_i
     \equiv
     \delta_{0}(\bw(s)) + \delta_1(\bw(s))' x_i.
 $
 This shows the desired representation for all $i \in [n]$ and all $0\neq \bw \in \cW$.
 Lastly, $
     \pi_i^*(0) = 1 - \sum_{\bw \in \cW\setminus\br{0}} \pi_i(\bw)
 $ can also be written as $\delta_{0}(0) + \delta_1(0)' x_i$.
\end{proof}

\subsubsection{Imbalanced panels}
\label{asub:imbalance}

To introduce our result on imbalanced panels, suppose units are observed in some subperiod
$\cT_i \subset
\br{1,\ldots, T}$. We consider the TWFE specification \[ \bY_{it} = \alpha_i + \gamma_t +
\tau \bW_{it} + \epsilon_{it}
\numberthis
\label{eq:unbalanced_twfe}
\]
over $i =1,\ldots, n$ and $t \in \cT_i$. The objects $\cT_1,\ldots, \cT_n$ are considered
fixed and observed. 

Let $\cW_{\cT}$ be a multiset collecting the non-zero treatment paths restricted to some
subperiod $\cT \subset [T]$. That is, $\cW_{\cT}$ collects the vectors $(\bw_t: t \in
\cT)$ for all $\bw \in \cW \setminus \br{0_T}$, possibly with repetition. We say that
$\cW$ has \emph{rich variation} on $\cT$ if the vectors in $\cW_\cT$ are linearly
independent and their linear span excludes the vector of all ones.\footnote{As an example,
note that if $\cW$ satisfies staggered adoption and excludes an always treated unit, then
the period that spans all adoption dates, $\cT = \br{t_{\min} - 1,\ldots,
t_{\max}}$---where $t_{\min}$ is the first adoption date and $t_{\max}$ is the last
adoption date, is a subperiod on which $\cW$ has rich variation.}

\begin{restatable}{prop}{propunbalanced}
    \label{prop:unbalanced}
    Let the population regression specification be characterized by
    \eqref{eq:unbalanced_twfe} under $\bpi^*$, where $\tau$ is the coefficient of
    interest. Suppose $\cW$ satisfies staggered adoption and excludes always-treated
    units. Assume further that there is a commonly observed period $\cT \subset
    \bigcap_{i=1}^n \cT_i$
    such that the treatment paths have rich variation in $\cT$.

    Let $\bar \pi(\bw) \equiv \frac{1}{n} \sum_{i=1}^n \pi_i^*(\bw)$. Let $Q_i(\bw) =
    \frac{\sum_{t\in \cT_i} \bw_t}{|\cT_i|}$ be the proportion of treated periods within
    unit $i$'s observed period for treatment path $\bw$.
    Then an implicit design exists if and only if $\bpi^*$ is uncorrelated with the
    missingness pattern in the sense that for all $t \in [T]$: \[
    \underbrace{\sum_{\bw \in \cW} \bar \pi(\bw) \sum_{i: t \in \cT_i} (\bw_t -  Q_i
    (\bw))}_{\sum_{i: t \in \cT_i} \E_{\bw \sim \bar\pi}[(\bw_t -  Q_i
    (\bw))]} = \underbrace{\sum_ {\bw
        \in \cW} \sum_{i: t \in \cT_i}  \pi_i^*(\bw) (\bw_t - Q_i(\bw))}_{\sum_{i: t \in
        \cT_i} \E_{\bw \sim \pi_i^*}[(\bw_t -  Q_i
    (\bw))]}.
    \numberthis
    \label{eq:unbalanced_condition}
    \]
    When this happens, the implicit design is unique and satisfies $\pi_i(\bw) = \bar
    \pi(\bw)$ for all $i$.

\end{restatable}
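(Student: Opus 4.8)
The plan is to run the specification \eqref{eq:unbalanced_twfe} through the machinery of \cref{thm:main}: compute the potential weights, rewrite level irrelevance as linear equations for the design, and solve. First I would compute the potential weights by Frisch--Waugh--Lovell, parametrizing the unit effects by the within-transformation so that $G_n$ is well defined (\cref{rmk:fe}). The coefficient $\tau$ is obtained by regressing $\bY_{it}$ on the treatment indicator after residualizing it against the (observed-cell) unit and time effects. Since the population Gram matrix depends on $\bpi^*$ only through the expected-treatment array $\bar W_{it}\equiv\sum_{\bw\in\cW}\pi_i^*(\bw)\bw_t$, a partitioned-inverse calculation shows that, for a path $\bw$, $\brho_i(\bw)$ (a row indexed by $t\in\cT_i$) is a fixed positive scalar times the two-way-within residual of the counterfactual design column $(j,s)\mapsto\bw_s$ minus that of $\bar W$. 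Concretely, partialling out the unit effects sends the path $\bw$ to $\bw_t-Q_i(\bw)$ with $Q_i(\bw)=|\cT_i|^{-1}\sum_{s\in\cT_i}\bw_s$---exactly the object in \eqref{eq:unbalanced_condition}---and sends $\bar W$ to its within-unit residual, after which both are residualized against the unit-demeaned time effects.

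Given this, for any candidate design $\bpi$, condition \eqref{eq:pop_level_irrelevance_condition}---after using $\sum_\bw\pi_i(\bw)=1$ to cancel the common scalar in $\brho_i$---says exactly that on every observed cell the within-unit-demeaned expected treatment induced by $\bpi$ equals the time-effect part of the two-way fixed-effects fit of $\bar W$. Specializing to $\bpi=\bar\pi$: summing these equations over the units observed at each $t$ reproduces \eqref{eq:unbalanced_condition} verbatim (each side is $\sum_{i:\,t\in\cT_i}\E_\bw[\bw_t-Q_i(\bw)]$ under $\bar\pi$ and $\bpi^*$ respectively, and their equality is the time-effect normal equation); conversely, because the residual $M_{\mathrm{unit}}(\bar W-\bar W^{\bar\pi})$ is already orthogonal to the unit effects, orthogonality to the time effects---which is \eqref{eq:unbalanced_condition}---upgrades the summed equations back to the per-cell equations. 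Hence $\bar\pi$ is an implicit design if and only if \eqref{eq:unbalanced_condition} holds, and $\bar\pi$ is automatically proper, being a convex combination of the $\pi_i^*$.

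It remains to show that \emph{any} implicit design must equal $\bar\pi$, which then gives existence $\iff$ \eqref{eq:unbalanced_condition} together with uniqueness. Here I would restrict attention to the common period $\cT$. Rich variation means the nonzero treatment paths restricted to $\cT$ are linearly independent and span a subspace $V$ with $1_\cT\notin V$; since the zero path restricts to $0\in V$, the map $\pi_i\mapsto(\E_{\pi_i}[\bw_t])_{t\in\cT}$ is injective on designs, with image in $V$. The level-irrelevance equations then say the two-way fit of $\bar W$, restricted to row $i$ on $\cT$, lies in $V$; writing that fit additively as $\hat\alpha_i+\hat\gamma_t$ and differencing two rows---both observed on $\cT\subseteq\bigcap_i\cT_i$---gives $(\hat\alpha_i-\hat\alpha_j)1_\cT\in V$, so $\hat\alpha_i$ is constant in $i$, and injectivity forces the implicit design to be common in $i$. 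With all $n$ units observed on $\cT$, the time component of the fit there is the full column mean $\tfrac1n\sum_j\bar W_{jt}=\E_{\bar\pi}[\bw_t]$, so the common design agrees with $\bar\pi$ on $\cT$; and under staggered adoption with no always-treated unit, a path's restriction to $\cT$ (which spans all adoption dates) determines the whole path---pre-adoption coordinates are $0$, coordinates after the last adoption date are constant---so agreement on $\cT$ is agreement everywhere.

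The main obstacle is the bookkeeping for the unbalanced panel: the two-way-within projector is no longer ``row mean plus column mean minus grand mean,'' so one cannot manipulate it formulaically and must argue through its normal equations; the crux is to recognize \eqref{eq:unbalanced_condition} as precisely the time-effect normal equation evaluated at $\bar\pi$ versus $\bpi^*$, and to check---using staggered adoption to tie each path's pre- and post-$\cT$ coordinates to its $\cT$-restriction---that the simplex and off-$\cT$ constraints introduce nothing new once \eqref{eq:unbalanced_condition} holds. A minor point is that the cancelled scalar is positive, equivalently that the relevant block of $G_n$ is invertible, which is already a standing assumption of the section and holds because rich variation prevents the treatment from being collinear with the fixed effects.
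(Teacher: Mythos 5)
Your proposal is correct and follows essentially the same route as the paper's proof: FWL with within-transformed unit effects yielding potential weights proportional to $(\bw_t - Q_i(\bw)) - (\delta_t - \tfrac{1}{L_i}\sum_{s\in\cT_i}\delta_s)$, constancy of any implicit design across units by differencing the level-irrelevance equations on the common period $\cT$ and invoking rich variation, and identification of \eqref{eq:unbalanced_condition} with the time-effect normal equations evaluated at $\bar\pi$ versus $\bpi^*$. Your shortcut that the fitted time effect on $\cT$ equals the column mean (up to a constant) is valid precisely because every unit is observed there, and matches what the paper obtains by plugging the level condition into the orthogonality condition and rearranging.
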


\Cref{prop:unbalanced} gives a necessary and sufficient condition for an implicit design
to exist, and characterizes the implicit design when it does. In short, an implicit design
exists if and only if the missingness patterns are uncorrelated with the treatment timing
in a particular sense \eqref{eq:unbalanced_condition}. When it exists, the implicit design
is the same for all units. Therefore, if $\bpi^*$ justifies $\tau$ as a design-based
estimand in the sense of \cref{defn:level_irrelevance}, then $\bpi^*$ must equal to the
implicit design and thus must describe random assignment.

The condition \eqref{eq:unbalanced_condition} describes a situation where for any given
time period $t$, among units for whom $t$ is observed, the average expected de-meaned
treatment path under $\bpi^*$, $\E_{\bw \sim \pi_i^*}[\bw_t - Q_i(\bw)]$, is equal to its
analogue under the random assignment $\bar\pi$. This condition is difficult to satisfy
unless the design is constant $\pi_i^* (\bw) = \bar \pi(\bw)$ in the first place. As a
result, \cref{prop:unbalanced} shows that TWFE is fragile to imbalanced panels.

\begin{proof}[Proof of \cref{prop:unbalanced}]
    For a given unit $i$ and $t \in \cT_i$, let $L_i = |\cT_i|$ and let $R_i(\bw) =
    \sum_{s \in \cT_i} \bw_s$ so that $Q_i(\bw) = L_i^{-1} R_i(\bw)$. The covariate
    transform---where we
    specify the unit fixed
    effect via within transformation---is \[z_{it}(\bw)' =
    [\dot{\one}_{it1},\ldots,\dot{\one}_{itT}, \dot \bw_{t}] \text{ where } \dot{\one}_
    {its} = \begin{cases}
        0 & s \not \in \cT_i \\
        1-\frac{1}{L_i} & s = t \\
        -\frac{1}{L_i} & s \in \cT_i \setminus \br{t}
    \end{cases}\]
    is the demeaned time dummy and $
    \dot{\bw}_{t} = \bw_{t} - Q_i(\bw).
    $

    By \cref{thm:frisch-waugh}, it suffices to analyze the potential weights of regressing
    $Y_{it}$ on $\ddot{W}_{it}$, where \[
    \ddot{W}_{it} = \dot{W_{it}} - \sum_{s = 1}^T \delta_s \dot{\one}_{its} = \dot{W_{it}}
    - \sum_{s \in \cT_i} \delta_s \dot{\one}_{its}.
    \]
    and $\delta_s$ are the population projection coefficients of $\dot{W}_{it}$ on
    $\dot{\one}_{its}$. Let $V \equiv \frac{1}{n}\sum_{i=1}^n \sum_{t \in \cT_i}
    \E[\ddot{W}_{it}^2] $, which is strictly positive by \cref{thm:frisch-waugh}. The potential weights are then \[
    \brho_{it}(\bw) = V^{-1} \pr{\dot{\bw}_t - \sum_{s \in \cT_i} \delta_s
    \dot{\one}_{its}} = V^{-1} \pr{\dot{\bw}_t - \delta_t + \frac{1}{L_i} \sum_{s\in
    \cT_i} \delta_s}.
    \]

    An implicit design $\bpi$ exists if and only if there is some $\pi_i(\cdot)$ and
    $\sum_{\bw} \pi_i(\bw) =1 $ such that \[
        \sum_{\bw \in \cW} \pi_i(\bw) \brho_{it}(\bw) = 0 \text{ for all $i$ and $t\in \cT_i$}.
    \]
    This is further equivalent to that for all $i$, $t\in \cT_i$, \[
    \sum_{\bw \in \cW} \pi_i(\bw) \pr{\bw_t - \frac{R_i(\bw)}{L_i}} = \delta_t - \frac{1}{L_i} \sum_{s\in
    \cT_i} \delta_s.
    \numberthis
    \label{eq:unbalanced_level_condition_it}
\]

Suppose first that such a $\bpi$ exists. Fix $t \in \cT$ and consider two units $i,j$,
where $\cT_i \cap \cT_j \supset \cT$ by assumption.
Then, by \eqref{eq:unbalanced_level_condition_it}, we have that
\[
\delta_t = \sum_{\bw \in \cW} \pi_i(\bw) \bw_t - \sum_{\bw \in \cW} \pi_i(\bw) \frac{R_i}{L_i} + \frac{1}{L_i} \sum_{s\in \cT_i} \delta_s = \sum_{\bw \in \cW} \pi_j(\bw) \bw_t - \sum_{\bw \in \cW} \pi_j(\bw) \frac{R_j}{L_j} + \frac{1}{L_j} \sum_{s\in \cT_j} \delta_s.
\]
Thus, rearranging, we have for all $t \in \cT$,   \[
\sum_{\bw \in \cW, \bw \neq 0} (\pi_i(\bw) - \pi_j(\bw)) \bw_t = M(i,j)
\]
where $M(i,j)$ does not depend on $t$. Viewed as linear combinations for vectors in
$\cW_\cT$, the left-hand side collecting over all $t \in \cT$ lies in the span of
$\cW_\cT$. The right-hand side lies in the span of the constant vector on $|\cT|$
dimensions. By assumption, these two spans contain only the zero vector in common. Thus,
both sides are equal to zero: \[
    \sum_{\bw \in \cW, \bw \neq 0} (\pi_i(\bw) - \pi_j(\bw)) \bw_t = 0 \text{ for all $t \in \cT$}.
\]
By linear independence of vectors in $\cW_\cT$, we conclude that $\pi_i(\bw) =
\pi_j(\bw)$. Therefore, the implicit design must be constant across units: \[
\pi_i(\bw) = \pi(\bw)
\]
for some $\pi(\cdot)$.

Next, we show that if $\bpi$ exists, then it must equal to $\bar\pi(\cdot)$. The fact that
$\delta_t$ are projection coefficients is equivalent to the following orthogonality
conditions holding for all $t \in [T]$: For all $s \in \br{1,\ldots, T}$, \begin{align*}
    0 &= \frac{1}{n}\sum_{i=1}^n \sum_{t \in \cT_i} \sum_{\bw \in \cW} \pi_i^*(\bw)
    \ddot{\bw}_{it}  \dot{\one}_{its} \\
    &= \frac{1}{n}\sum_{i: s \in \cT_i} \sum_{\bw \in \cW} \pi_i^*(\bw) \sum_{t \in \cT_i}
    \pr{\bw_t - R_i(\bw)/L_i - \sum_{\ell \in \cT_i} \delta_\ell \dot{\one}_{it\ell}}
    \dot{\one}_{its} \\
    &= \frac{1}{n}\sum_{i: s \in \cT_i} \sum_{\bw \in \cW} \pi_i^*(\bw) \sum_{t \in \cT_i}
    \pr{\bw_t - R_i(\bw)/L_i} \dot{\one}_{its} - \frac{1}{n}\sum_{i: s \in \cT_i} \sum_{t \in \cT_i}\sum_{\ell \in \cT_i} \delta_\ell \dot{\one}_{it\ell}
    \dot{\one}_{its} \\
    &= \frac{1}{n}\sum_{i: s \in \cT_i} \sum_{\bw \in \cW} \pi_i^*(\bw) \sum_{t \in \cT_i}
    \pr{\bw_t - R_i(\bw)/L_i}\one(t=s) - \frac{1}{n}\sum_{i: s \in \cT_i} \sum_{t \in
    \cT_i}\sum_{\ell \in \cT_i} \delta_\ell \one(t=\ell) \dot{\one}_{its} \tag{$\dot{\bw}$ and
    $\dot \one_i$ sum
    to zero over $t\in \cT_i$} \\
    &= \frac{1}{n}\sum_{i: s \in \cT_i} \sum_{\bw \in \cW} \pi_i^*(\bw) \pr{\bw_s -
    R_i(\bw)/L_i}
    - \frac{1}{n}\sum_{i: s \in \cT_i} \sum_{\ell \in \cT_i} \delta_\ell \dot{\one}_{i\ell s}
    \\
    &= \frac{1}{n}\sum_{i: s \in \cT_i} \sum_{\bw \in \cW} \pi_i^*(\bw) \pr{\bw_s -
    R_i(\bw)/L_i}
    - \frac{1}{n}\sum_{i: s \in \cT_i} \sum_{\ell \in \cT_i} \delta_\ell (\one(\ell = s) - 1/L_i)
    \end{align*}
    Rearranging, we have that the orthogonality condition is equivalent to \[
        \frac{1}{n}\sum_{i: s \in \cT_i} \sum_{\bw \in \cW} \pi_i^*(\bw) \pr{\bw_s -
        Q_i(\bw)} = \frac{1}{n} \sum_{i: s\in \cT_i} \pr{\delta_s - \frac{1}{L_i} \sum_{\ell
        \in \cT_i} \delta_\ell}
        \numberthis
        \label{eq:orthogonality_condition}
    \]
    for all $s \in [T]$.

    Pick $t \in \cT$ and consider the corresponding \eqref{eq:orthogonality_condition}.
    Note that $t \in \cT_i$ for all $i$, and hence  \[
    \delta_t = \frac{1}{n} \sum_{i=1}^n \frac{1}{L_i}\sum_{\ell \in \cT_i} \delta_\ell +
    \sum_{\bw \in \cW} \underbrace{\pr{\frac{1}{n} \sum_{i=1}^n \pi_i^*(\bw)}}_{\bar \pi(\bw)} \bw_t - \frac{1}{n}
    \sum_{i=1}^n \sum_{\bw \in \cW} \pi_i^*(\bw) Q_i(\bw).
    \]
    Given \eqref{eq:unbalanced_level_condition_it}, we can plug in \[
        \delta_t = \sum_{\bw \in \cW} \pi_i(\bw)\pr{\bw_t - Q_i(\bw)} + \frac{1}{L_i}
        \sum_{s \in \cT_i } \delta_s = \sum_{\bw \in \cW} \pi(\bw )\pr{\bw_t - Q_i(\bw)}+ \frac{1}{L_i}
        \sum_{s \in \cT_i } \delta_s.
    \]
Rearranging, we obtain that \[
        \sum_{\bw \in \cW, \bw \neq 0} (\pi(\bw) - \bar \pi(\bw)) \bw_t = M'(i)
    \]
    for some $M'(i)$ that does not depend $t \in \cT$. Again, the left-hand side, over $t
    \in \cT$, is in the
    span of $\cW_\cT$, whereas the right-hand side is in the span on $1_{\cT}$. As a
    result, both sides are equal to zero by assumption. Since vectors in $\cW_\cT$ are
    linearly independent, we
    have that \[
    \pi(\bw) = \bar\pi(\bw) \text{ for all $\bw \neq 0$} \implies \pi(\bw) = \bar\pi(\bw)
    \text{ for all $\bw \in \cW$}.
    \]

    Now, we plug $\pi_i(\bw) = \bar\pi(\bw)$ into
    \eqref{eq:unbalanced_level_condition_it}, and then plug the expression into
    \eqref{eq:orthogonality_condition}. We obtain that for all $t \in [T]$,\[
    \sum_{i : t \in \cT_i} \sum_{\bw \in \cW} \pi_i^*(\bw) (\bw_t - Q_i(\bw))
    = \sum_{i: t \in \cT_i} \sum_{\bw \in \cW} \bar\pi(\bw) (\bw_t - Q_i(\bw)).
    \]
    This is \eqref{eq:unbalanced_condition}. Thus, we proved that if an implicit design exists,
    then the above condition holds and it must be uniquely equal to $\bar\pi(\cdot)$.

    For the reverse direction, let us assume that \eqref{eq:unbalanced_condition} holds.
    For a given $t$, set \[
    \delta_t = \sum_{\bw \in \cW} \bar\pi(\bw) \bw_t,
    \numberthis
    \label{eq:delta_t}
    \]
    and thus $
    \delta_t - \frac{1}{L_i} \sum_{\ell \in \cT_i} \delta_\ell = \sum_{\bw \in \cW}
    \bar\pi(\bw) \pr{\bw_t - Q_i(\bw)}.
    $
    This means that $\pi_i(\bw) = \bar\pi(\bw)$ satisfies
    \eqref{eq:unbalanced_level_condition_it}. Thus $\bar\pi(\cdot)$ is an implicit design
    upon checking that \eqref{eq:delta_t} defines the projection coefficients.

    Summing over $i$ where $t \in \cT_i$:
    \[
    \frac{1}{n} \sum_{i: s\in \cT_i} \br{
        \delta_s - \frac{1}{L_i} \sum_{\ell \in \cT_i} \delta_\ell
    } = \frac{1}{n} \sum_{i: s\in \cT_i}  \sum_{\bw \in \cW}
    \bar\pi(\bw) \pr{\bw_t - Q_i(\bw)} = \frac{1}{n}  \sum_{i: s\in \cT_i}  \sum_{\bw \in \cW} \pi_i^*(\bw) (\bw_t - Q_i(\bw)),
    \]
where the last equality follows from \eqref{eq:unbalanced_condition}. This verifies the
orthogonality condition \eqref{eq:orthogonality_condition} for $t$. Therefore, our choice
of \eqref{eq:delta_t} does indeed equal the projection coefficients. This concludes the
proof. \qedhere

\section{Details of the replication exercise}
\label{asec:emp}

\citet{xu2017costs} studies the effect of social connections to one's political patron on
 the career outcomes of colonial governors (salary). The treatment is whether a serving
 governor is connected to the current Secretary of State for the Colonies, defined as
 sharing certain social ties. They argue connections are quasi-experimentally assigned
 within a governor's career: ``the turnover of secretaries of state induced by the
 electoral cycle in London generated shocks in social connections among serving
 governors. These within-governor shocks enable me to examine how changes in connections
 affected the allocation and performance of the same governor"; accordingly, ``The
 identification is therefore driven by governors who change their connections during
 their career.''

\citet{blakeslee2020way} study the effect of losing access to groundwater on farm
 household income and labor allocation in rural Karnataka, India. The treatment is an
 indicator that the household's first-drilled borewell has failed, the
 household's initial loss of irrigation access. They argue failure timing is
 quasi-experimental given the region's fractured hard-rock hydrogeology: ``The
 identifying
 assumption is that conditional on the year of drilling, the timing of the first
 borewell's failure is determined by exogenous geological attributes.''

\citet{dippel2021leadership} study the effect of the local presence of a
 Forty-Eighter---an exiled leader of the failed 1848–49 German revolutions---on an
 American town's Union Army enlistments in the Civil War, instrumenting Forty-Eighter
 presence with the destinations named by the exiles' shipmates. The instrument is an
 indicator that ordinary co-passengers aboard the ships that carried Forty-Eighters
 across the Atlantic declared the town as their intended destination. They argue the
 instrument is quasi-experimental because ``the Forty-Eighters did not plan their trip
 ahead in the same way that other immigrants did. Due to their haphazard departure, they
 rarely had a specific destination in mind when they departed,'' and because ``the
 Forty-Eighters selected their ships quasi-randomly.'' The paper includes a design in a
 permutation test: treating which 136 of the 1,746 immigrant voyages carried
 Forty-Eighters as a uniform random draw.

\citet{herrnstadt2021air} study the effect of exposure to traffic pollution on violent
 crime, comparing the two neighborhoods flanking the same Chicago interstate on the same
 day. The treatment is an indicator that a highway-side neighborhood is downwind of the
 interstate that day. They argue the wind
 direction makes exposure as-good-as-random across the road: ``In essence, this approach
 uses the upwind side of the interstate on a given day as a control for the 'treated'
 downwind side on the same day"; "For our estimate to be biased, an omitted variable must
 differentially affect crime on the side of the road to which the pollution is blowing.''

\citet{ang2023birth} studies the effect of screenings of \emph{The Birth of a Nation} on
 the local revival of the KKK, instrumenting the film's touring ``road show'' screenings
 with the presence of a movie theater before the film's release. The instrument is an
 indicator that the county had at least one movie theater by 1914. He argues the
 pre-existing theater network is quasi-experimental with respect to later racial animus.

\citet{cage2023heroes} study the effect of a municipality's World War I soldiers having
 served under Marshal Pétain on that municipality's Nazi collaboration a generation
 later. The treatment is an indicator that the municipality's home regiment rotated
 through the Battle of Verdun during Pétain's command window (rather than under his
 successors). They argue the rotation was quasi-experimental: ``Both in its conception
 and, as we show, in its implementation, the rotation to Verdun was based upon the needs
 of the moment and unrelated to the home characteristics of the regiments involved."

\citet{covert2023relinquishing} study the effect of allocating a mineral lease by
 centralized auction rather than decentralized negotiation on the terms the state
 obtains. The treatment is an indicator that a Texas Permanent School Fund lease was
 awarded by first-price sealed-bid auction rather than negotiated by a private surface
 owner. They argue the mechanism is quasi-experimental within a narrow geography: ``a
 lease's mechanism type is determined not by randomization, but by the date on which the
 parcel underlying it was first sold by the state. Our first identifying assumption is
 thus that variation in parcel privatization dates, within a narrowly defined geographic
 area, is uncorrelated with the unobservable geological quality within that area." The
 authors themselves estimate a treatment propensity with a cross-fitted random forest
 (on lease location, date, and size) inside a double-ML analysis, which we use as their
 reported candidate design.

\citet{cervellati2024random} study the effect of a random boost of votes to a 
party---arising from a ballot ordering---on the party's share of votes within its
 coalition and downstream outcomes. The treatment is an indicator
 that the party's symbol landed in the center of a ballot slot. 
 Here the assignment is explicitly randomized: The ordering of parties within a coalition
 on a ballot can be exactly permuted. 

\citet{chambru2024dynamic} study the effect of proximity to state administration on
 long-run local development: for the small towns surrounding the candidate cities in
 France's 1790 reorganization, the effect of one's nearest candidate city becoming the
 departmental capital. The treatment is an indicator that a peripheral town's closest candidate city
 is the one that won the captial designation. They argue the winner was
 quasi-experimental among candidates: "In some departments, the choice of local capitals
 over rival candidate cities was plausibly exogenous." The paper states a design where
 the designated capital is treated as a uniform draw among the department's candidate
 cities.

  For each paper, we first replicate the corresponding reported regression
  result. We then
     compute the four alternative designs $\tilde \bpi$ described in the main text.
     Additional details are below:
  \begin{enumerate}[label=(\roman*)]
      \item (Recalibrated) We subclassify on $\hat\bpi$ using the data-driven procedure of
      \citet{imbens2015causal} (chapter 17) by recursively partitioning $\hat\pi_i$.
      Starting from
      the block of all
      units with $\hat\pi_i \in (0,1)$, we linearize the estimated propensities,
      $\ell_i = \log(\hat\pi_i / (1-\hat\pi_i))$, and compute the two-sample $t$-statistic
      comparing $\ell_i$ between treated and untreated units within the block. If
      $|t| > 1$, the block contains more than 11 units, and a split-on-median is
      feasible---the
      block contains both treated and untreated units and its median $\hat\pi_i$ separates
      it into two non-empty halves---we split the block at its median $\hat\pi_i$ and
      recurse on the two halves; otherwise the division terminates. Units with
      $\hat\pi_i < 0$ and $\hat\pi_i > 1$ form two additional dedicated blocks. This
      procedure determines knots for the partition, from which intervals of the form $
      (a, b]$ are constructed.
       Each
      unit's $\tilde\pi_i$ is then the empirical treatment frequency within its interval,
      clipped to $[10^{-4}, 1-10^{-4}]$.

      \item (Logit) We fit a fixed-effects logistic regression of $W$ on the same
      covariates and fixed effects as the headline specification and take
      $\tilde\pi_i$ to be the fitted probabilities, clipped to $[10^{-4}, 1-10^{-4}]$.
      The logit regression failed to converge for \citet{xu2017costs}---the
      within-governor variation quasi-separates the outcome---and thus we omit it from
      the paper.

      \item (ML) For applications with more than 1{,}000 observations, we predict $W$
      with gradient-boosted trees (LightGBM). The feature set consists of the
      specification's non-fixed-effect covariates; one-hot encodings of any
      low-dimensional fixed effects (at most 50 levels); and the implicit design
      $\hat\pi_i$ itself as one additional feature. We
      exclude covariates that mechanically determine treatment (in
      \citet{chambru2024dynamic}, raw latitude and longitude, which pin down the
      nearest-candidate assignment).\footnote{The booster minimizes binary log-loss with
      learning
      rate $0.05$, $15$ leaves, a minimum of $20$ observations per leaf, row subsampling
      of $0.8$ per iteration, column subsampling of $0.8$, an $\ell_2$ penalty of $1.0$,
      and $300$ boosting rounds, run single-threaded with a fixed seed.} Predictions are
      generated by five-fold cross-fitting---each unit's raw score comes from a booster
      trained on the other four folds, with the folds drawn once by a seeded random
      partition---so no unit contributes to its own prediction. The out-of-fold scores
      are then recalibrated by Platt scaling: a logistic regression of $W_i$ on the
      log-odds of the raw score, whose fitted probabilities, clipped to
      $[10^{-4}, 1-10^{-4}]$, constitute $\tilde\bpi$.

      \item (Paper) For \citet{covert2023relinquishing}, the paper reports an AIPW
       estimate where the propensity is estimated by random forest. We use exactly the
       same random forest setup: a \texttt{grf} regression forest ($1{,}000$ trees) of
       the auction indicator on the four continuous covariates the authors use, cross-fit
       over five randomly assigned folds so that predictions are out-of-fold, with the
       five-fold split repeated $101$ times and $\tilde\pi_i$ taken as the per-lease
       median across repetitions, clipped to $[10^{-4}, 1-10^{-4}]$.

  \end{enumerate}

\end{proof}

\end{appendices}
\end{document}